\newcommand{\pl}{\parallel}
\newcommand{\eps}{\varepsilon}
\newcommand{\curl}{\mbox{curl}}
\newcommand{\cA}{\mathcal{A}}
\newcommand{\cE}{\mathcal{E}}
\renewcommand{\div}{\mbox{div}}
\newtheorem{thm}{Theorem}
\newtheorem{defn}{Definition}
\newtheorem{prop}{Proposition}
\newcommand{\R}{\mathbb{R}}
\newcommand{\Z}{\mathbb{Z}}
\renewcommand{\S}{\mathbb{S}}
\newcommand{\la}{\langle}
\newcommand{\ra}{\rangle}
\newcommand{\sign}{\textrm{sign}}
\newcommand{\cB}{\mathcal{B}}
\newcommand{\cM}{\mathcal{M}}
\title[Isodrastic magnetic fields]{Isodrastic Magnetic fields for suppressing transitions in guiding-centre motion}
\author{J.W.Burby$^{1}$, R.S.MacKay$^{2,\dagger}$, S.Naik$^{2,\ddagger}$}
\address{$^{1}$Los Alamos National Laboratory, Los Alamos, New Mexico 87545, USA, $^{2}$Mathematics Institute, University of Warwick, Coventry CV4 7AL, U.K.}
\email{$^{1}$jburby@lanl.gov, $^{\dagger}$R.S.MacKay@warwick.ac.uk,
$^{\ddagger}$shibabratnaik@gmail.com}
\date{\today}   
\begin{document}
\begin{abstract}
In a magnetic field, transitions between classes of guiding-centre motion can lead to cross-field diffusion and escape.
We say a magnetic field is {\em isodrastic} if guiding centres make no transitions between classes of motion.  
This is an important ideal for enhancing confinement.
First, we present a weak formulation, based on the longitudinal adiabatic invariant, generalising omnigenity.
To demonstrate that isodrasticity is strictly more general than omnigenity, we construct weakly isodrastic mirror fields that are not omnigenous.
Then we present a strong  formulation that is exact for guiding-centre motion.  We develop a first-order treatment of the strong version via a Melnikov function and show that it recovers the weak version.
The theory provides quantification of deviations from isodrasticity that can be used as objective functions in optimal design.  The theory is illustrated with some simple examples.  
\end{abstract}
\maketitle

\onehalfspacing
\tableofcontents
\singlespacing

\section{Introduction}
\label{sec:intro}
On a short timescale, charged particles (mass $m$, charge $e$) in a strong magnetic field $B$ perform helices around magnetic field lines with gyrofrequency $\tfrac{e}{m}|B|$ and gyroradius 
\begin{equation}
\rho = \tfrac{mv_\perp}{e|B|},
\label{eq:gyror}
\end{equation}
$v_\perp$ being the magnitude of the component of the velocity perpendicular to $B$.  We consider fields for which $|B|\ne 0$ in the region of interest, indeed large enough to make the gyroradius smaller than typical length-scales for variation of $B$.  

On longer time-scales, the centre-line, radius and pitch angle of the helices drift, but there is an adiabatic invariant, the magnetic moment, whose asymptotic expansion starts 
\begin{equation}
\mu = \tfrac{mv_\perp^2}{2|B|},
\end{equation}
and thereby makes $\rho \propto |B|^{-1/2}$ along trajectories.  The relevant small parameter $\eps$ is the relative change in $B$ (in magnitude and direction) seen by the particle during one gyro-period. 
The adiabatic invariant allows one to reduce the dynamics to rapid gyro-oscillation about a ``guiding centre'' whose motion is governed by a relatively slow Hamiltonian system of two degrees of freedom (DoF).  

To zeroth order in $\eps$ the motion of the guiding centre is along magnetic field lines, governed by the canonical Hamiltonian dynamics of 
\begin{equation}
H(s,p_\pl) = \tfrac{1}{2m} p_\pl^2 + \mu |B(s)|
\label{eq:ZGCM}
\end{equation}
for arc-length $s$ and momentum $p_\pl=mv_\pl$ along $B$ (the effect of an electrostatic field can be included but we leave it out for now; so can gravitational fields and relativistic effects, see Appendix~\ref{app:esrel}).  We will write $$b = B/|B|,$$ and $'$ for derivative with respect to arc length along a fieldline, so e.g.~$$|B|' = i_b d|B| = b \cdot \nabla |B|, $$in differential forms and vector-calculus notation, respectively.
We shall often express relations by differential forms, but we provide vector-calculus translations where feasible.  
For an example in the other direction, the condition $\div B = 0$ for a magnetic field can be written as $d\beta=0$, where $\beta$ is the magnetic flux 2-form $i_B\Omega$ with $\Omega$ being the volume-form in physical space; this says that $\beta$ is closed.
For a tutorial on differential forms for plasma physics, see \cite{M20}.

For time-independent fields, the zeroth-order guiding-centre motion (ZGCM), equation (\ref{eq:ZGCM}), conserves the energy $H=E$, so the trajectories on a given fieldline can be classified into (see Figure~\ref{fig:zgcm}):
\begin{figure}[htbp] 
   \centering
   \includegraphics[width=5.5in]{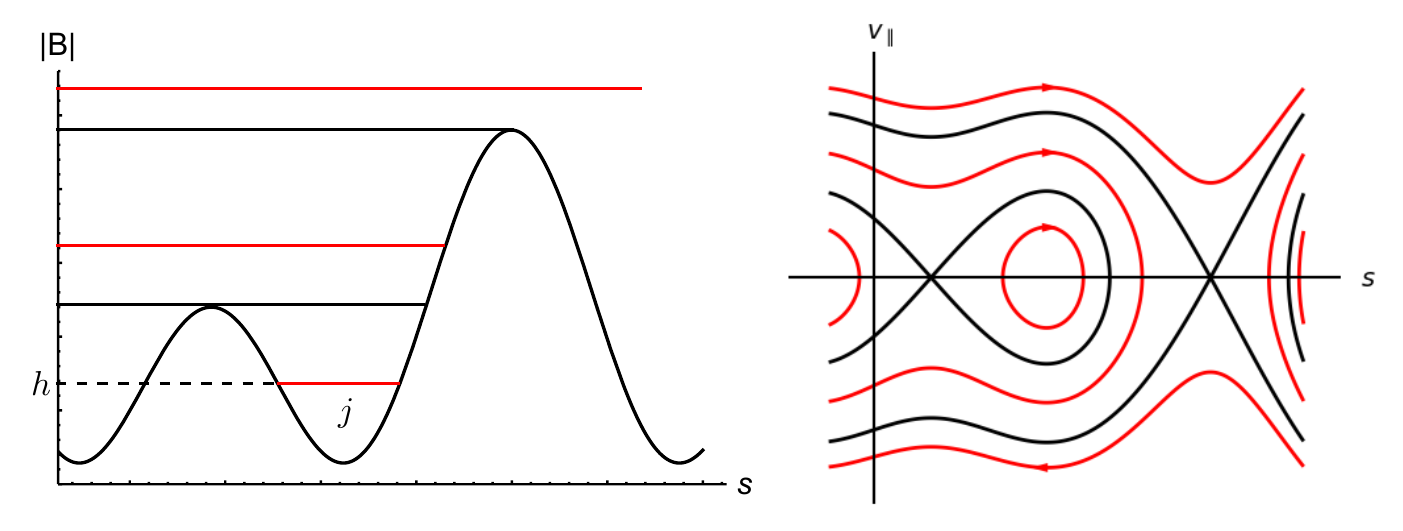}
   \caption{Example of field strength along a field line and resulting ZGCM. Marginal cases are shown in black.}
   \label{fig:zgcm}
\end{figure}
\begin{itemize}
\item passing:~$|B| < E/\mu$ along the whole fieldline so the guiding centre moves unidirectionally along it (this includes the case $\mu=0, E \ne 0$);
\item one-sided bouncing:~$|B| < E/\mu$ along the fieldline in one direction (say $s<0$) but exceeds $E/\mu$ somewhere in the other direction, with 
$|B|' >0$ at the first point $s_0$ where $|B|=E/\mu$, so a guiding centre moving from $s<0$ with $v_\pl>0$ reverses direction at $s_0$;
\item two-sided bouncing:~$|B| < E/\mu$ in an interval $(s_1,s_2)$ along the fieldline with $|B|=E/\mu$ and $|B|' \ne 0$ at both ends, so the guiding centre bounces periodically between $s_1$ and $s_2$ (the standard terminology for ``bouncing'' is ``trapped'', but it is only at zeroth order that they are trapped; the confinement problem is that at first order they might not be trapped; another term used in some contexts is ``blocked'', e.g.~\cite{F+}); 
\item marginal:~if $|B|' = 0$ at a point where $|B|=E/\mu$ then the guiding centre takes infinite time to reach it.
\end{itemize}

\begin{defn}
We call the portion of fieldline between a pair of turning points for bouncing motion a {\em segment}. It is sometimes useful to identify a segment with its corresponding interval $[s_1,s_2]$ of arclength values. 
\end{defn}

To first order in $\eps$, however, the guiding centre drifts across the field.  The description of first-order guiding-centre motion (FGCM) that we prefer (a reformulation of \cite{L}) is that the motion is Hamiltonian on the space of guiding-centre position $X$ in 3D and parallel velocity $v_\pl$, with the Hamiltonian 
\begin{equation}
H = \tfrac12 mv_\pl^2 + \mu |B(X)|
\label{eq:H}
\end{equation} 
and closed 2-form
\begin{equation}
\omega = e\beta + m d(v_\pl b^\flat),
\label{eq:omega}
\end{equation}
where $\beta$ is the magnetic flux 2-form 
and $b^\flat$ is the 1-form $b\cdot dX$. 
This description generates the dynamics $(\dot{X},\dot{v}_\pl)=V$ by solving $i_V\omega = -dH$ for $V$ (except where $\tilde{B}_\pl=0$, defined below, at which $\omega$ is degenerate; see \cite{BuEl} for a modification that has no singularity).  The solution can be written as
\begin{align}
\dot{X} &= \left({v_\pl} \widetilde{B} + \frac{\mu}{e} b \times \nabla|B|\right)/\widetilde{B}_\pl \label{eq:Xdrift}\\
\dot{v}_\pl &= -\frac{\mu}{m} \frac{\widetilde{B}}{\widetilde{B}_\pl} \cdot \nabla |B|, \label{eq:vdrift1}
\end{align}
where 
\begin{equation}
\widetilde{B} = B+ \tfrac{m}{e}v_\pl c,\ c = \curl\ b \mbox{ and } \widetilde{B}_\pl = \widetilde{B}\cdot b .
\label{eq:tildeB}
\end{equation}
The above use of differential forms is convenient because it produces a Hamiltonian form for guiding-centre motion despite there being no natural canonical coordinates for it; the phase space has three position-dimensions and one velocity-dimension.    

Alternative expressions for FGCM appear in the literature (e.g.~\cite{LC,H}).  One alternative that exhibits the standard curvature and grad-B drifts is
\begin{align}
\dot{X} &= v_\pl b + \tfrac{mv_\pl^2}{e|B|}c_\perp + \tfrac{\mu}{e|B|}b\times\nabla |B| \label{eq:approxdrift}\\
\dot{v}_\pl &= -\tfrac{\mu}{m}(b+\tfrac{mv_\pl}{e|B|}c_\perp)\cdot\nabla |B|. \nonumber
\end{align}
Note that $c_\perp$ can be written as $b\times \kappa$, where the curvature vector $\kappa= b\cdot \nabla b$.  Equation (\ref{eq:approxdrift}) can be seen as the first-order expansion of (\ref{eq:Xdrift},\ref{eq:vdrift1},\ref{eq:tildeB}) in $\eps$ (think of $\eps$ as $1/e$).
An equivalent way to write  it is to define $v_\pl = \pm\sqrt{\frac{2}{m}(E-\mu|B|)}$ as a (multivalued) function of position $X$, and then 
$$\dot{X} = v_\pl b + \tfrac{mv_\pl}{e|B|}(\curl(v_\pl b))_\perp,$$
where $E$ is the value of $H$ treated as constant in computing the curl.
Although (\ref{eq:approxdrift}) conserves $H$, we doubt that it has a Hamiltonian formulation in general (but see \cite{Bo84} for some discussion), thus we prefer (\ref{eq:Xdrift},\ref{eq:vdrift1},\ref{eq:tildeB}) on the grounds of preserving as much structure as possible from the original model.  


Whichever of the above equations are used, the motion still conserves $H$ but produces drift across the field. {By assigning to each point in guiding-center phase-space its corresponding ZGCM orbit, this cross-field drift may be visualized as motion through the space of ZGCM trajectories. Since the profile of $|B|$ with respect to arc-length along nearby fieldlines is in general different, a drifting solution's instantaneous ZGCM trajectory may transition between classes.}  
Repeated transitions between classes  produces an effective diffusion (e.g.~\cite{GT,Men}) and hence poor confinement.
In some classes (e.g.~ripple-trapped) the drifts produce large, even unbounded, excursions, so transitions into such classes also produce poor confinement.

For passing motion, one may be able to design the field so that many of the fieldlines remain in some desired region $R$, e.g.~a solid torus, and furthermore, so that the guiding centres for many initial conditions remain on fieldlines in $R'$, the subset of $R$ whose gyro-orbits are contained in $R$.  For the guiding centres, it suffices to have invariant 2-tori {with spatial projections in $R$} for all values of energy and magnetic moment used, because such tori confine all initial conditions inside them.  Near-integrability {(e.g.~from an approximate flux function for $B$)}, $C^3$-smoothness and generic conditions on the field guarantee existence of such tori, by KAM theory (there are various references, e.g.~\cite{Mo,SZ} and the semi-popular \cite{Du}).  Thus, ignoring the effects of interaction between particles, particles on passing trajectories inside such a torus remain within $R$.

One-sided bouncing can be treated together with passing.  For example, for a mirror machine, guiding centres that enter one end, make one bounce inside and then leave by the same end, have much the same effect as passing ones that go out of the other end.  For fields with an invariant set of finite volume, one-sided bouncing trajectories are rare, because Poincar\'e recurrence implies that almost every fieldline comes back arbitrarily close to any value of $|B|$ it takes.  So henceforth, ``bouncing'' will refer to two-sided bouncing.

For (two-sided) bouncing motion there is a second adiabatic invariant $L$, called ``longitudinal'', whose asymptotic expansion starts with 
\begin{equation}
L = \int_{s_1}^{s_2} mv_\pl \, ds,
\label{eq:L}
\end{equation} 
as long as the fieldline seen by the guiding centre changes relatively little during one bounce period $T = 2 \int_{s_1}^{s_2} {v_\pl}^{-1} ds$, e.g.~\cite{H}.  Then the motion in a bouncing class can be reduced to one DoF for the intersection of the fieldline segment with a transverse section $\Sigma$; the Hamiltonian is still $H$ but evaluated at the value of $v_\pl$ for which the bouncing segment has the given value of $L$; the symplectic form is $e\beta$ (by conservation of magnetic flux along fieldlines, this gives the same dynamics regardless of which transverse section is used).  Closed level-sets of $H$ on $\Sigma$ (for given $L$) give invariant 2-tori for the guiding-centre motion.  So the corresponding particles are confined to within one gyro-radius of the projections of such tori to physical space (ignoring the effects of interaction between particles).  If the particles are on tori that fit in a desired region $R$ then they stay in that region.  Examples are the ``banana'' trajectories in a tokamak.  They bounce above and below the outer midplane (where $|B|$ is minimum along fieldlines), moving alternately inwards and outwards between bounces relative to a flux surface (thus tracing out a banana-shape in projection to a poloidal section), but not in general closing because of a slight rotation around the central axis so that the banana drifts around the central axis, tracing out a torus of banana cross-section.

If the drift motion leads towards a marginal case, however, the guiding centre may make transitions between the above classes or between different bouncing classes.  Such transitions can lead to large changes in the region explored by a particle \cite{N}. 
Transitions between bouncing classes may lead to larger tori that no longer fit in the desired region, or even unbounded motion.  Transitions from bouncing to passing may lead to motion that is not confined by the tori for passing trajectories.  Repeated transitions can lead to large accumulated changes.  Transitions where one bouncing class gets divided into two lead to pseudo-random choice of new class.  These are all particular problems for high-energy particles, e.g.~\cite{B+, F+} and \cite{P+}.

Thus, transitions between classes of guiding-centre trajectory are generally bad for confinement.

One way to solve this problem was proposed in 1975 by \cite{HM}, called ``omnigenity''.  For a review, see \cite{H}.  It assumes the field $B$ has a flux function $\psi$, i.e.~such that $i_B d\psi = 0$ ($B\cdot \nabla \psi = 0$) and $d\psi \ne 0$ ($\nabla \psi \ne 0$) almost everywhere. 
For a review of magnetic fields with a flux function, including some less known properties, see Appendix~\ref{app:fluxfn}.

The field is called {\em omnigenous} if the time-average of $i_V d\psi$ ($V\cdot \nabla \psi$) along each zeroth-order trajectory is zero, using the non-Hamiltonian $V$ of (\ref{eq:approxdrift}).   This is automatic for passing trajectories on irrational flux surfaces; see \cite{H} or Appendix~\ref{app:omnigen}. 
A necessary and sufficient condition for omnigenity is that $L$ be constant for all bouncing trajectories with given energy and class on a flux surface.
Necessity was proved in \cite{LC} (extended to full generality in \cite{PCHL}), and sufficiency (which seems not to have been proved before) is proved in Appendix~\ref{app:omnigen}. 

Axisymmetric fields are omnigenous (indeed, so are all quasi-symmetric fields).  Constructions of non-axisymmetric omnigenous fields appear in \cite{CS,LC,PCHL} but depend on existence of Boozer coordinates, which are derived assuming the field is non-degenerate MHS (e.g.~\cite{H}) and we are not aware that any non-axisymmetric non-degenerate MHS fields are known.  In particular, we are not aware that $|B|$ as a given function of Boozer coordinates can be realised by a magnetic field $B$ in 3D.

For $C^3$ omnigenous fields, some generic conditions imply existence of many invariant tori for the passing trajectories, close in projection to the flux surfaces, by KAM theory, at least for low energies, as already mentioned.  It furthermore implies that the invariant tori for bouncing motion are close in projection to parts of flux surfaces.  Lastly it implies that transitions between classes are a second-order effect \cite{CS}.

Thus, omnigenity (assuming it is realisable) sounds a good solution for confinement.  Nonetheless, not every field has a flux function, e.g.~most vacuum fields do not, nor do most magnetohydrodynamic equilibria with anisotropic pressure or mean flow.  Secondly, reducing transitions to second order is perhaps not enough for good confinement.  Thirdly, even for fields with a flux function, the requirement that $\la v_d\cdot \nabla \psi\ra=0$ for all GC trajectories is stronger than necessary for confinement; prevention of transitions and existence of KAM tori in each class would suffice.
Fourthly, analytic exactly omnigenous fields have to be quasi-symmetric \cite{CS} and it is suspected that non-axisymmetric quasi-symmetric fields do not exist \cite{GB}; this last one is a minor objection, however, because analyticity is not necessary for real applications, where the current distribution need not be analytic (except in the vacuum case where analyticity is automatic because locally the field is the gradient of a function satisfying Laplace's equation).  
A weaker condition than omnigenity, called pseudo-symmetry, was introduced by Mikhailov \cite{M+} but does not prevent transitions; the relation to our work is discussed in Appendix~\ref{app:ps}.

In this paper we first extend \cite{CS} to present a set of conditions for absence, to first order, of transition between classes of guiding-centre motion. The set does not require a flux function, and so has wider applicability than omnigenity.  Even if there is a flux function, our conditions are weaker than omnigenity, so they stand more of a chance of being realisable without axisymmetry.  Indeed, we construct non-axisymmetric mirror fields with no transitions.  Then we present a ``strong'' form that prevents all transitions exactly.  We call it ``isodrasticity''. We are unaware of any previous work that formulates such a criterion for non-perturbative suppression of transitions.

Isodrasticity has the further advantage  that it can be adapted to high-energy particles, such as the $\alpha$-particles produced by $D-T$ fusion, for which a higher-order guiding-centre approximation may be required (see \cite{Bu} for explicit computation of higher-order guiding-centre approximations). 

Our theory provides clear objective functions to contribute to the optimisation of magnetic fields.
It provides enhanced understanding of the effects of imperfections in tokamaks, and more generally in quasi-symmetric fields. It suggests the prospect of controlling  transitions between classes by weak breaking of isodrasticity via trim coils.


We emphasise that isodrasticity prevents transitions between classes but that one would still need to confine trajectories that stay within each class, as mentioned  above.  Failure to achieve that was the main problem with early stellarators.  The basic way we propose to achieve it for bouncing classes is by designing a suitable subset of level curves of $L$ to be closed and fit in the machine and finding a corresponding band of KAM tori for the guiding-centre motion at each value of energy and magnetic moment, which confine all trajectories inside, though this might  be challenging for ripple-trapped classes.  For circulating classes, we propose to achieve confinement by KAM tori derived from an approximate flux function.
This important aspect of confinement is not addressed further here.

\section{Reduced guiding-centre motion and weak isodrasticity}
\label{sec:wiso}
To explain our concept of weak isodrasticity, we first give a more detailed description of the reduction of first-order guiding-centre motion by the longitudinal invariant and of the set of critical points of field strength along the fieldlines.  We end the section by computing the flux of reduced trajectories that make a transition if the field is not weak isodrastic.

\subsection{Reduced guiding-centre motion and critical points of $|B|$ along the field}

We suppose that in the domain of interest, $B$ is nowhere zero and is $C^r$ with $r\ge 2$ (and for some purposes more). 

Using energy conservation (\ref{eq:H}), for $\mu > 0$ the second adiabatic invariant (\ref{eq:L}) can be written as $L = \sqrt{m\mu}\, j$ with 
\begin{equation}
j = \int_{s_1}^{s_2} \sqrt{2(h-|B|)}\ ds,
\label{eq:j}
\end{equation} 
where $h = E/\mu$ and the bounce points are at arclengths $s_1, s_2$.
We note in passing that this is the Abel transform $\int_{-\infty}^h \sqrt{2(h-v)}\ d\ell(v)$ (using the convention with a factor $\sqrt{2}$) of the length $\ell(v)$ of the subsegments with $|B|<v$, which might provide a useful alternative way to compute it and is used in Section~\ref{sec:construct} and Appendix~\ref{app:omnigen}.

Bouncing segments along a given fieldline come in one-parameter families, parametrised by the value $h$ of the scaled field strength at its ends.
The action $j$ is differentiable with respect to $h$ and a standard calculation shows that the derivative is
\begin{equation}
\frac{dj}{dh} = \int_{s_1}^{s_2} \frac{ds}{\sqrt{2(h-|B|)}}.   
\label{eq:djdh}
\end{equation}
This can be recognised as the time taken from $s_1$ to $s_2$ by the dynamics $\frac{ds}{dt} = \pm \sqrt{2(h-|B|)}$ of the scaled Hamiltonian $H=\tfrac12 u^2 + |B(s)|$.  Thus the period of bouncing in the original time is $T = 2\sqrt{\frac{m}{\mu}}\frac{dj}{dh}$.
As a bouncing segment approaches marginality, the period goes to infinity.  For example,  approaching a non-degenerate local maximum at one end with $|B|=h_0$ and $|B|'' = -a$, the period diverges asymptotically like $T \sim \sqrt\frac{m}{\mu a} \log \frac{2a}{|h-h_0|}$.

A key role is played in the reduced dynamics by
the set $\Sigma$ of critical points of $|B|$ along fieldlines, i.e.
\begin{equation}
\Sigma = \{x \in \R^3 : |B(x)|'=0\},
\label{eq:Sigma}
\end{equation}
where, as before, $'$ denotes derivative with respect to arclength along a fieldline.
Subdivide $\Sigma$ into the disjoint union
$$\Sigma = \Sigma^+ \cup \Sigma^0 \cup \Sigma^-,$$ according as $|B|'' >0, =0$ or $<0$, respectively (nondegenerate local minima, degenerate critical points, nondegenerate local maxima).  By the implicit function theorem, $\Sigma^\pm$ are $C^{r-1}$ surfaces (with possibly several components).
For $r\ge 3$, $\Sigma^0$ is generically a $C^{r-2}$ curve (with possibly several components) and generically forms the common boundary of $\Sigma^\pm$ (see Appendix~\ref{app:generic}).
Some examples are shown in Figure~\ref{fig:exSigma}.
\begin{figure}[htbp] 
\centering
\subfigure[]{\includegraphics[width=1.2in]{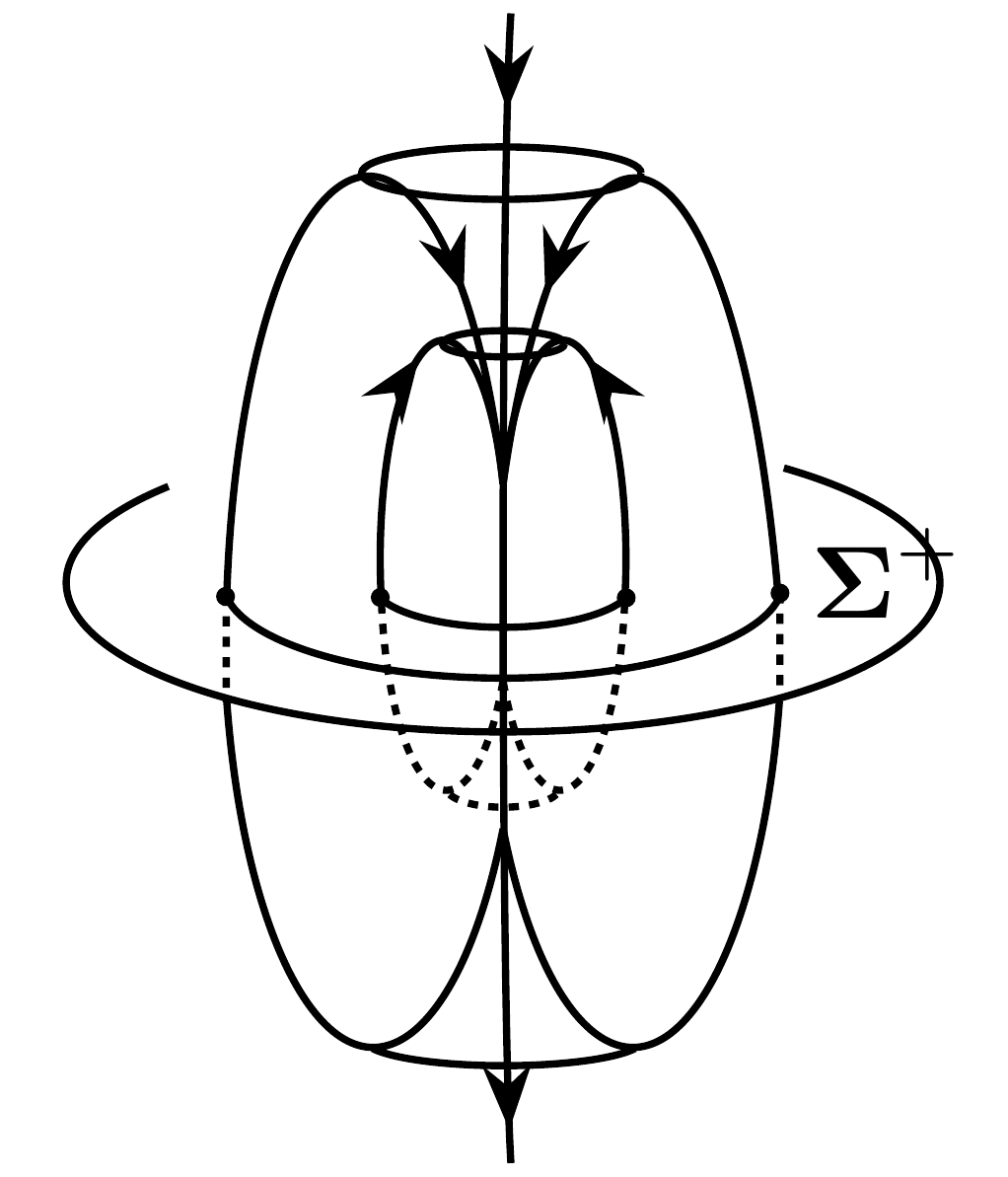}}
\subfigure[]{\includegraphics[width=2.0in]{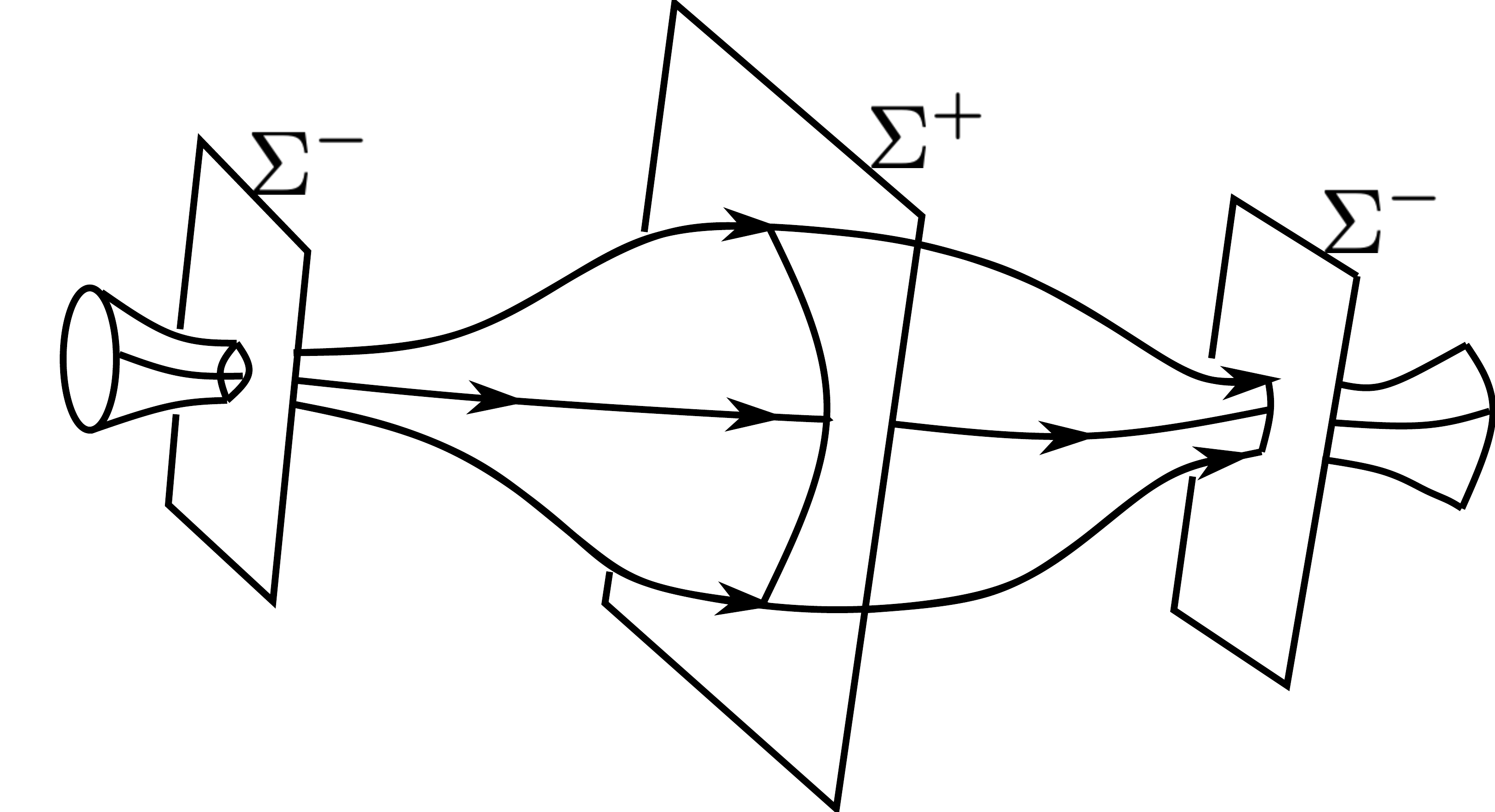}}
\subfigure[]{\includegraphics[width=2.0in]{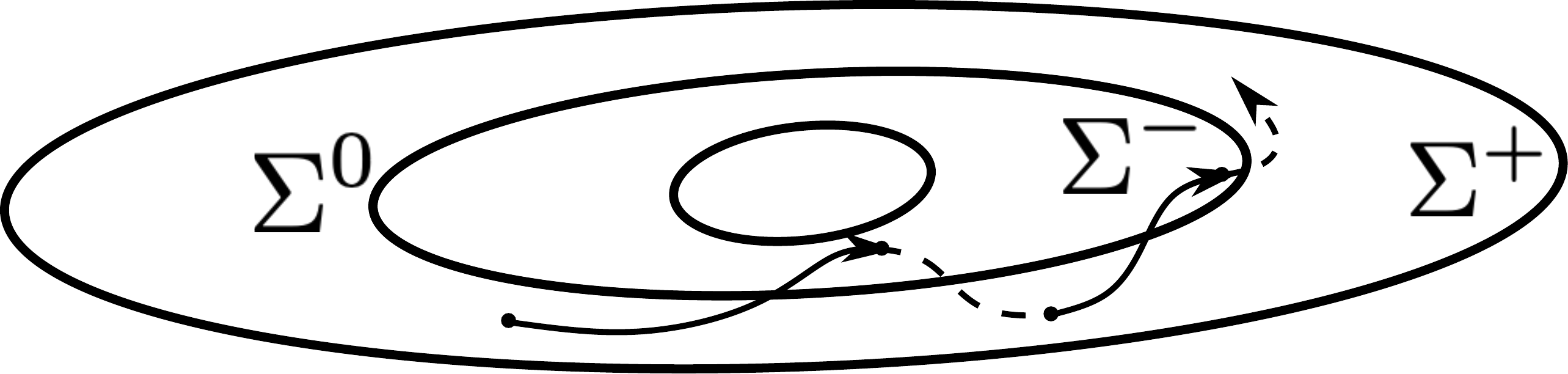}}
\caption{The set $\Sigma$ of critical points of $|B|$ along fieldlines for (a) a dipole, (b) a mirror machine, and (c) a tokamak.}
\label{fig:exSigma}
\end{figure}

The analysis presented in this Article demonstrates that the identification of $\Sigma$ in a magnetic field, including its decomposition into $\Sigma^\pm$ and $\Sigma^0$, is a key step to understanding the confinement properties of the field.

There are various ways to label a segment of fieldline.  All include specifying the fieldline $\gamma$ and the (common) value $h$ of $|B|$ at its endpoints. 
To complete the label, we have to say which wells of $|B|$ it visits along the fieldline.  One way to do this is to specify the set of all local minima it visits, or indeed the set of all its intersections with $\Sigma$.  Given $\gamma$ and $h$ this is more than is required to specify the segment, because the segment is connected, so one could instead select one of the local minima.  The only problem with any of these specifications is that as the segment moves to nearby fieldlines the set of local minima may change size, because one might be annihilated by a local maximum or a new one created from a horizontal inflection;  a jump will be required if the selected local minimum is annihilated.  The set also changes size when there is a real change in the class, i.e.~when one end becomes a local maximum so the segment can lengthen over some new wells, or when an interior local maximum rises to height $h$, which splits the segment into two, but we consider that natural.
An alternative is to specify the first intersection with $\Sigma^-$ outside the segment when going in one direction along the fieldline (say the positive one).  This choice does not suffer from the previous issue, but the local maximum might be annihilated with a local minimum a bit further away and then the first local maximum would jump to one further away, or a horizontal inflection might be born between the local maximum and the end of the segment, leading to a jump in the other direction.
Thus there is no good solution.

The best way is to use equivalence classes of such labels.  We denote such an equivalence class by $\widetilde{M}$. As mentioned above, $\widetilde{M} = (M,h)$ comprises $M$, a field line together with visited wells along that line, and $h$, the common value of $|B|$ at the segment's endpoints. {Note that the set of possible $\widetilde{M}$ is in one-to-one correspondence with (images of) ZGCM bouncing orbits.}

There is likewise not a good concept of class of segment.  One might be tempted to say two segments are in the same class if one can be obtained from the other by continuous change of segment, but the example of Figure~\ref{fig:exclass} shows that this can lead to two different segments of the same fieldline with the same $h$ being in the same class, which is not what we want (it can even be done conserving $j$).  On the other hand, discontinuous change in a segment is well defined, being a jump in segment at given $h$ as the fieldline varies locally continuously, and we refer to this somewhat loosely as a transition between bouncing classes.
\begin{figure}[htbp] 
   \centering
     \includegraphics[width=2in]{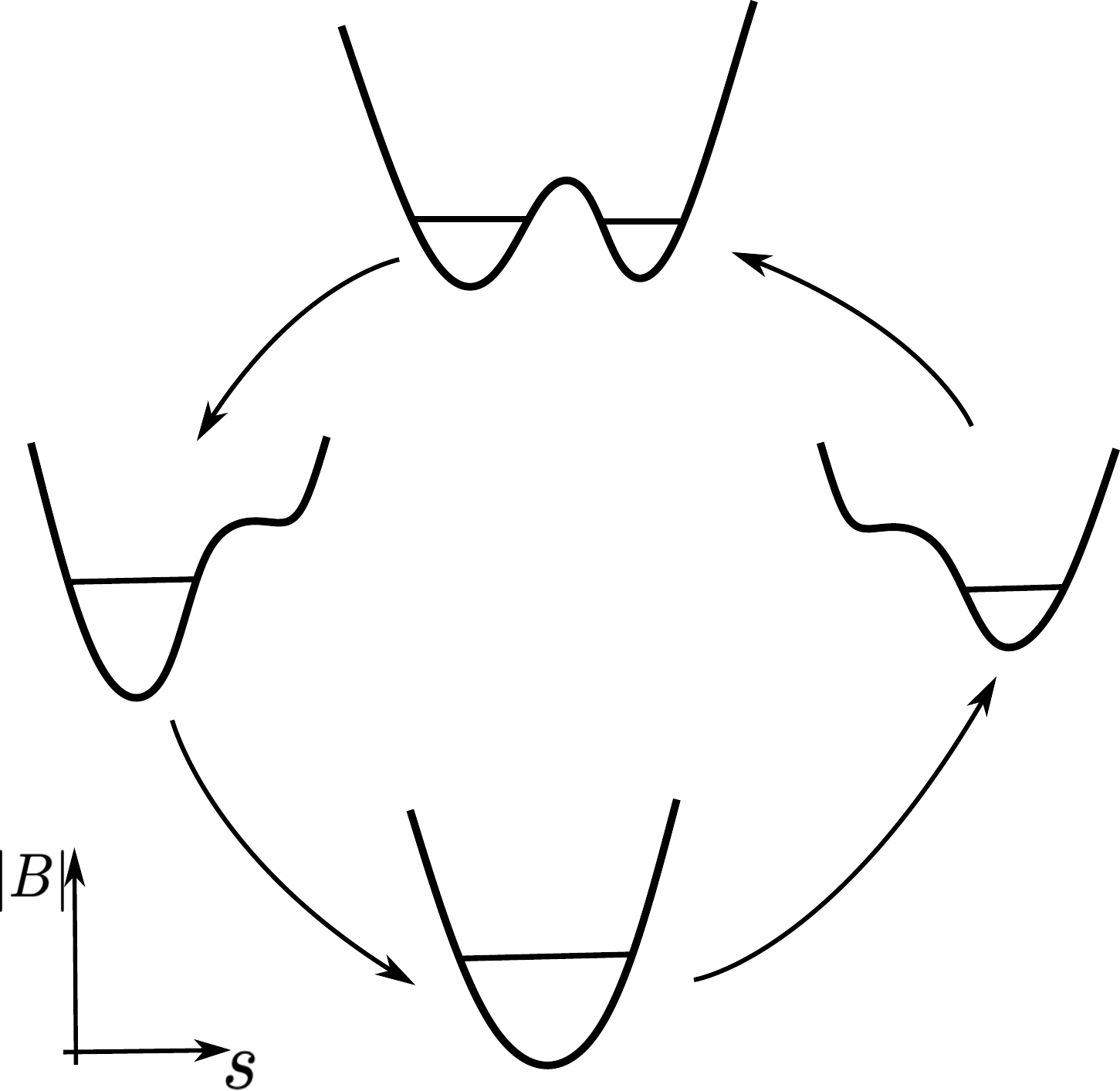}
   \caption{Example to show that two different segments of one fieldline and shared energy can be in the same class ($|B|$ plotted against arc length $s$ along fieldlines).}
   \label{fig:exclass}
\end{figure}

Given a fieldline and the set $M$ of intersections of a segment $\gamma$ with $\Sigma$, there is an interval $[h_{\min},h_{\max}]$ of compatible values of $h$.  $h_{\min}$ is the maximum of $|B|$ over $M$ and $h_{\max}$ is the minimum of $|B|$ over the first local maxima outside $M$ in the two directions. The action $j$ of (\ref{eq:j}) is a continuous increasing function of $h$ for this interval of segments (and differentiable in the interior).  It goes from some $j_{\min}$ to a $j_{\max}$.  If $M$ is a single local minimum then $j_{\min}=0$.  

For given $j$ in this interval ($j$ is now a scalar rather than a function), we obtain the reduced and scaled Hamiltonian $H_j(M)$ defined to be the unique value such that 
\begin{equation}
\int_{\gamma(M,H_j(M))} \sqrt{2(H_j(M)-|B|)}\ ds = j,
\label{eq:Hj}
\end{equation}
where $\gamma(M,h)$ is the segment of the fieldline with label $M$ and endpoints with $|B|=h$.  $H_j(M)$ is the energy divided by $\mu$, equivalently, $|B|$ at the bounce points, expressed as a function of $M$ for given $j$.  See Figure~\ref{fig:hjclip}.
\begin{figure}[htbp] 
   \centering
   \includegraphics[width=4in]{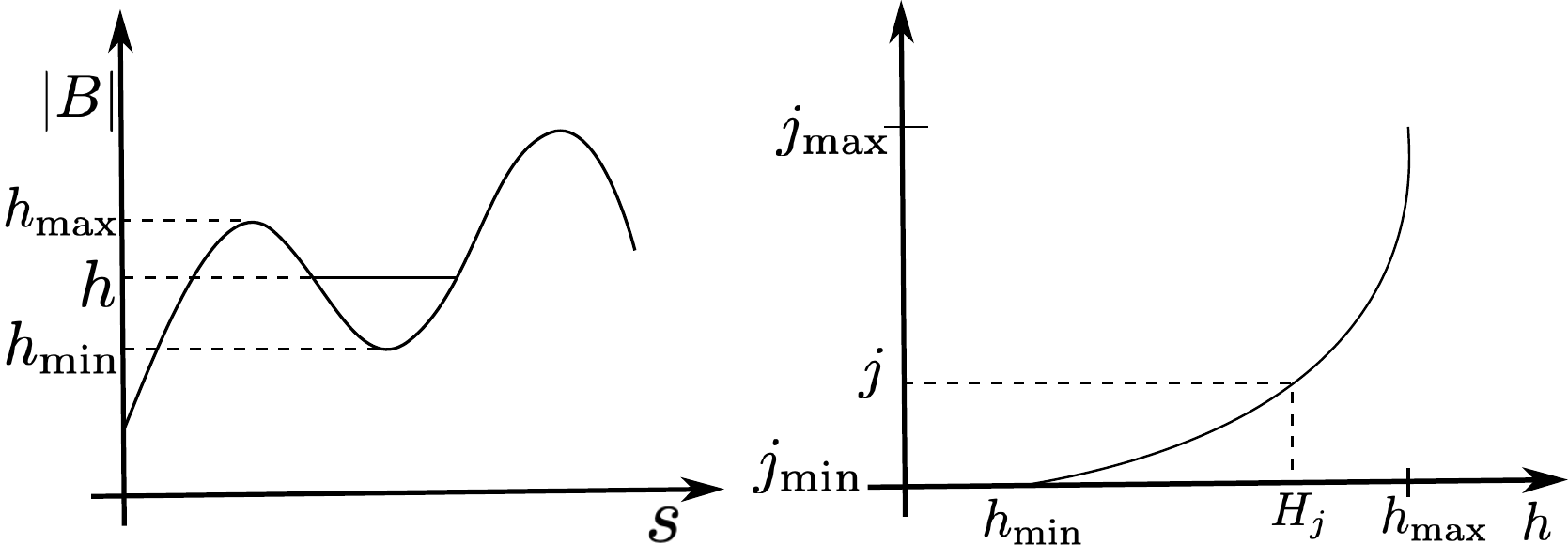} 
   \caption{Illustration of the definitions of $h_{\min}, h_{\max}$ and $H_j$ for $j \in [j_{\min},j_{\max}]$. Here $M$ consists of a single point (the local minimum of $|B|$), implying $h_{\text{min}}$ is the value of $|B|$ there.  $h_{\text{max}}$ is the value of $|B|$ at the hill top to the left of the minimum.  }
   \label{fig:hjclip}
\end{figure}

The phase space $F_j$ for the reduced system is 
$$F_j = \{M: j_{\min}(M) \le j \le j_{\max}(M) \}.$$  
It is easiest to think of the case of a single well, then $F_j$ is just a subset of $\Sigma^+$, but in general it can be mapped to any transverse section of the field.  The symplectic form on $F_j$ is $e\beta$, which is invariant along the fieldlines between any choices of transverse section.  

To address the reduced dynamics on $F_j$ we first remark that on $\mathring{F}_j = \{M: j_{\min}(M) < j < j_{\max}(M) \}$ (the subset of $F_j$ for which the segments are non-marginal), $H_j$ is differentiable.  Indeed, 
\begin{prop}\label{prop:dh}
{Let $X$ be any vector field on guiding-centre phase-space whose flow commutes with that of ZGC dynamics and that satisfies $dj(X) = 0$. Then}
\begin{equation}
i_XdH_j=\frac{1}{T_s} \int_\gamma \bigg(\frac{X_\perp \cdot \nabla |B|}{\sqrt{2(H_j-|B|)}}b^\flat - \sqrt{2(H_j-|B|)}\, i_X i_c \Omega\bigg),
  \label{eq:dH_j}
\end{equation}
where $c = \curl\, b$ and 
$$T_s = \int_\gamma \frac{1}{\sqrt{2(H_{j}-|B|)}} b^\flat$$
is related to the true bounce period $T$ by $T_s = \sqrt{\mu/m}\,T/2$.
\end{prop}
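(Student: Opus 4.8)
The plan is to differentiate the implicit relation (\ref{eq:Hj}) defining $H_j$ along the flow of $X$ and to feed in the hypothesis $dj(X)=0$. Write $\alpha=\sqrt{2(h-|B|)}\,b^\flat$ for the $1$-form whose integral over the segment $\gamma$ is the scaled action $j$, and regard $j$ as a function on the space of ZGCM bouncing orbits, with the pair $(M,h)$ serving as local coordinates there (valid on $\mathring F_j$, away from marginality). Because the flow of $X$ commutes with ZGC dynamics, $X$ descends to a vector field on this orbit space; its flow therefore carries bouncing orbits to bouncing orbits, transporting the spatial segment $\gamma$ by the position-component of $X$ (which I continue to denote $X$ in the spatial contractions below) while the common endpoint height $h$ evolves at rate $i_X\,dh$. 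The quantity we want, $i_X\,dH_j$, is exactly $i_X\,dh$ once we restrict to the level set $j=\mathrm{const}$, since there $H_j(M)=h$.

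First I would apply the transport theorem to $\frac{d}{d\tau}\int_{\gamma_\tau}\sqrt{2(h_\tau-|B|)}\,ds$, which splits the variation into an explicit-$h$ piece and a transported-curve piece handled by Cartan's formula $\mathcal{L}_X\alpha = i_X\,d\alpha + d(i_X\alpha)$:
\[
i_X\,dj \;=\; (i_X\,dh)\int_\gamma \frac{b^\flat}{\sqrt{2(h-|B|)}} \;+\; \int_\gamma\!\big(i_X\,d\alpha + d(i_X\alpha)\big).
\]
The first integral is precisely $T_s$ by (\ref{eq:djdh}) (with $h=H_j$), and the exact term $\int_\gamma d(i_X\alpha)$ collapses to the boundary evaluation $\big[\sqrt{2(h-|B|)}\,(X\cdot b)\big]$ at the two turning points, which vanishes because $|B|=h$ there. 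The left-hand side is zero by hypothesis, so $i_X\,dH_j = i_X\,dh = -\tfrac{1}{T_s}\int_\gamma i_X\,d\alpha$.

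It then remains to compute $d\alpha$ and contract. Holding $h$ fixed, $d\alpha = -\tfrac{d|B|}{\sqrt{2(h-|B|)}}\wedge b^\flat + \sqrt{2(h-|B|)}\,db^\flat$, where $db^\flat = i_c\Omega$ with $c=\curl b$. Contracting with $X$ via $i_X(d|B|\wedge b^\flat)=(X\cdot\nabla|B|)\,b^\flat-(X\cdot b)\,d|B|$, and then pulling back to $\gamma$ so that $d|B|$ becomes $|B|'\,b^\flat=(b\cdot\nabla|B|)\,b^\flat$, the two $b^\flat$-terms combine through the identity $X\cdot\nabla|B|-(X\cdot b)(b\cdot\nabla|B|)=X_\perp\cdot\nabla|B|$, which replaces $X$ by its perpendicular component and yields exactly (\ref{eq:dH_j}).

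The main obstacle I expect is the bookkeeping around the moving segment: one must be sure that flowing a bouncing orbit by $X$ again produces a genuine bouncing orbit with endpoints on the shifted level set $|B|=h_\tau$, so that the transport theorem applies and the explicit-$h$ contribution is cleanly separated from the transported-curve contribution. This is precisely where the commuting-flow hypothesis is indispensable, and where one must restrict to $\mathring F_j$ to exclude marginal/degenerate segments. The saving grace is that the integrand $\sqrt{2(h-|B|)}$ vanishes at the turning points, so every boundary term — both from the moving endpoints and from the exact piece $d(i_X\alpha)$ — disappears; granting that, the remaining manipulations are routine differential-form identities.
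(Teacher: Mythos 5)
Your proposal is correct and follows essentially the same route as the paper's proof: split $i_X dj$ into the explicit-$h$ piece $\tfrac{dj}{dh}\,L_Xh = T_s\, i_XdH_j$ plus the Lie derivative of the integrand, apply Cartan's formula so the exact term dies at the turning points where the square root vanishes, compute $d\big(\sqrt{2(h-|B|)}\,b^\flat\big)$ using $db^\flat=i_c\Omega$, and use tangency of $b$ to $\gamma$ to reduce $X$ to $X_\perp$ before imposing $dj(X)=0$. The only cosmetic difference is that the paper first derives the identity for a general commuting field $Y$ and specializes to $X$ at the end.
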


\begin{proof}
Let $Y$ be any vector field on guiding center phase space whose flow commutes with that of ZGC dynamics (we do not yet require that $Y$ preserve level sets of $j$). The derivative of $j$ along $Y$ is given by
\begin{align*}
    i_Ydj &=  \frac{dj}{dh}\,L_Yh + \int_\gamma L_Y(\sqrt{2(h-|B|)}\, b^\flat  \\
    &=  \frac{dj}{dh}\,L_Yh +\int_\gamma i_Yd(\sqrt{2(h-|B|)}b^\flat) + d(i_Y \sqrt{2(h-|B|)}b^\flat).
\end{align*}
The second term of the integrand integrates to $0$ because the square root is zero at the ends of $\gamma$.
The first term expands to
$$\int_\gamma i_Y \left( \frac{-d|B|\wedge b^\flat}{\sqrt{2(h-|B|)}} + \sqrt{2(h-|B|)}\, db^\flat\right).$$
Now $db^\flat = i_c\Omega$ and $b$ is tangent to $\gamma$, so we end up with 
$$i_Ydj = \frac{dj}{dh}\,L_Yh + \int_\gamma \frac{-Y_\perp \cdot \nabla |B|}{\sqrt{2(h-|B|)}} b^\flat + \sqrt{2(h-|B|)}\, i_Yi_c\Omega.$$
Next we use (\ref{eq:djdh}) which gives the change in $j$ for a change in $h$ without change of fieldline (one way to derive that equation is a similar but simpler use of Lie derivative). As discussed in the text, it implies $dj/dh = (T/2)\sqrt{\mu/m}$.
On taking $Y$ to be a field $X$ that in addition satisfies $dj(X)=0$, this leads to (\ref{eq:dH_j}).
\end{proof}

We refer interested readers to Appendix \ref{app:dh} for an alternative proof of this result that casts it in a more general light.

Applied to $\gamma$, $b^\flat/\sqrt{2(h-|B|)}$ is the time interval $dt$ for ZGCM.  So, modulo the correction term involving $i_c\Omega$, the formula (\ref{eq:dH_j}) says that $dH_{j}$ is the time-average of $d|B|$ along the segment of fieldline. An important consequence for our analysis is that taking the limit as a bouncing segment goes to a homoclinic one, we see that $i_XdH_j$ goes to the value of $i_{X_\perp}d|B|$ at the critical point, because the bouncing segment spends all but a vanishing fraction of its time near there.  At a critical point, $|B|'=0$ so $i_{X_\perp} d|B| = i_X d|B|$ there.  Thus, using $\Sigma^-$ as transverse section and defining $h=|B|$ on $\Sigma^-$ (being the value of $h$ for homoclinics to $\Sigma^-$), 
\begin{equation}
  dH_j \to dh  
  \label{eq:limit}
\end{equation}
as any homoclinic case is approached.

The reduced dynamics on $F_j$ in a scaled time $\tau = \frac{\mu}{e}t$ is $dM/d\tau = u$ defined by 
\begin{equation}
 i_u\beta = -dH_j
 \label{eq:reduceddyn}
\end{equation} 
($B\times u = -\nabla H_j$ in $F_j$), at points where $H_j$ is differentiable. At least in $\mathring{F}_j$, $dH_j$ is $C^1$ so the vector field $u$ induces a flow. This dynamics conserves $H_j$ and so trajectories move along level curves of $H_j=h$.  

For closed level curves of $H_j$, let $\Phi$ be the magnetic flux enclosed, then the precession period (in real time) is 
\begin{equation}
T=\frac{e}{\mu} \frac{\partial \Phi}{\partial h}
\label{eq:precession}
\end{equation}
with $j$ fixed.  This comes from the standard formula that the period $T$ of a periodic orbit $\gamma$ of an autonomous Hamiltonian system is $\frac{\partial S}{
\partial E}$ where $S$ is its action ($\int_\gamma \alpha$ for a primitive $\alpha$ of the symplectic form $\omega$) and $E$ its energy (using that periodic orbits come in 1-parameter families).

The case of short bounces can be treated explicitly (we mean short in length, not in time; these are usually called ``deeply trapped'' but again they might not be trapped).
The limiting case of zero length is $j=0$.  For these, $M$ is a singleton.  The motion is on $\Sigma^+$ with Hamiltonian $H_0=|B|$.  So the trajectories follow level curves of $|B|$ on $\Sigma^+$ at rate $\frac{\mu}{e}u$ in real time, with $i_u\beta = -d|B|$.  To keep these guiding centres in a region $R$ one must put them on level sets of $|B|$ on $\Sigma^+$ lying within $R$.  
For fields with flux surfaces, an ideal is that for each flux surface the minima of $|B|$ along fieldlines on it have the same value of $|B|$, because then the short bouncers remain on that flux surface (pointed out already by \cite{MCB}).
In contrast, it goes wrong for ripple-trapped particles in tokamaks. They are a class of bouncing trajectories in a poloidally confined region (not around the equatorial plane) for which the $|B|$ contours in $\Sigma^+$ are approximately vertical, so for one sign of $e/m$ the guiding centres leave the desired solid torus.  Any particles that make the transition into this class are lost.  See Fig.5 of \cite{GT} for an example, and \cite{P+} for more.

One can also treat the linear approximation to short bounces with $j>0$.  It gives $j = {\pi(h-h_{\min})}/\sqrt{|B|''}$, where $|B|''$ is evaluated on $\Sigma^+$ (and is the bounce frequency in a scaled time).  So to this order in $j$, $$H_j(M) = |B(M)| + \sqrt{|B(M)|''}\,\tfrac{j}{\pi},$$
where $M \in \Sigma^+$.

Although our focus in this paper is on transitions between classes, the above behaviour of short bouncers is of independent interest. Yet, it will be seen in Section~\ref{sec:tok} to be of relevance to transitions for perturbations of a tokamak field.

\subsection{Weak isodrasticity}
The adiabatic invariant $j$ is well conserved if the segment changes relatively little during one bounce period.  The condition can be written as $\frac{2}{\ell^2}\sqrt{|B|} \frac{\partial j}{\partial h} \ll 1/\rho$, where $\ell$ is a lengthscale for variation of $B$ and $\rho$ is the gyroradius (\ref{eq:gyror}).  This fails when the period becomes large, in particular as a segment approaches a marginal case.  Figure~\ref{fig:marginal} shows the three principal ways marginal cases can be approached.
\begin{figure}[htbp] 
   \centering
\includegraphics[width=2in]{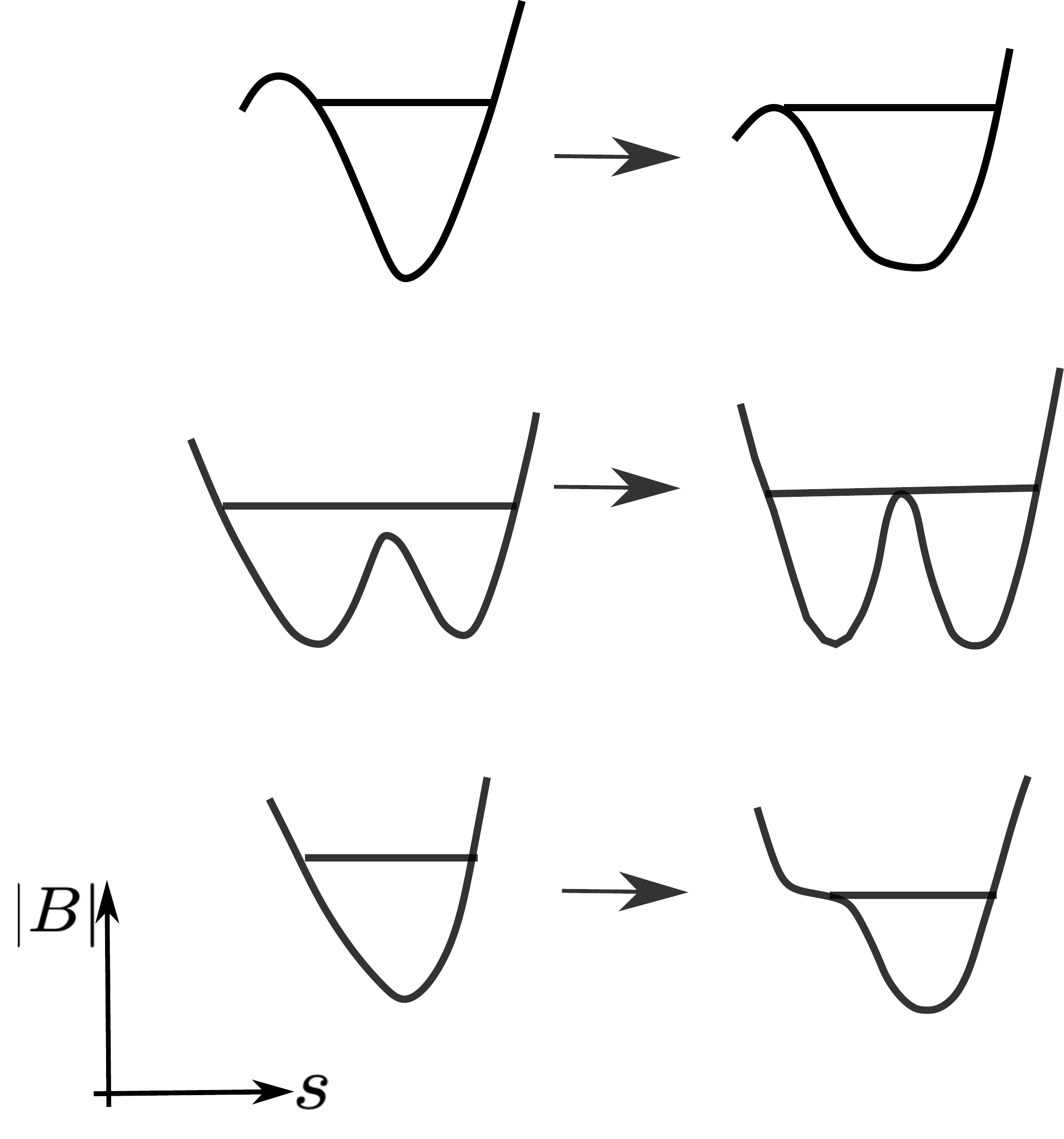} 
   \caption{Examples of approach to marginal cases for bouncing segments ($|B|$ against arc length $s$ along field lines).}
   \label{fig:marginal}
\end{figure}

In the weak version of isodrasticity, to determine whether marginal cases can be approached we make the approximation that $j$ continues to be conserved up to and including when a marginal case is reached and the dynamics is given by the reduced dynamics (\ref{eq:reduceddyn}) on $F_j$.

\begin{defn}
A magnetic field $B$ is {\em weakly isodrastic} if the marginal cases are never reached  from non-marginal ones by the first-order reduced dynamics. 
\end{defn}

We will develop some necessary and some sufficient conditions for weak isodrasticity.  
On
$$\Sigma^{-0} = \Sigma^- \cup \Sigma^0,$$ 
define $h = |B|$ (this agrees with $E/\mu$ for guiding centres with $v_\pl=0$ on $\Sigma^{-0}$), and for  direction $\sigma \in \{\pm\}$ along the field define $\text{\text{\j}}^\sigma$ to be the value of $j$ for a segment $\gamma^\sigma$ starting at the given point of $\Sigma^{-0}$ and going in direction $\sigma$ (assuming a turning point is reached; if not, $\text{\text{\j}}^\sigma$ is undefined). When it is clear which direction is under consideration, we drop $\sigma$.  So
$$\j = \int_\gamma \sqrt{2(h-|B|)}\, ds.$$
Equivalently, given $x\in \Sigma^{-0}$ the value of $\text{\text{\j}}(x)$ is  the area of the zeroth-order guiding-centre separatrix lobe attached to $(X,v_\parallel) = (x,0)$ in the appropriate direction along the field. 
Note the relation
$$H_{\j(x)}(x) = h(x)$$
for $x \in \Sigma^-$, which follows from the definition (\ref{eq:Hj}) of $H_j$.


Loosely speaking, our results are that a magnetic field is weakly isodrastic iff the contours of $h$ and $\j$ coincide on $\Sigma^{-0}$ for both directions $\sigma$.  See Figure~\ref{fig:weakiso} for an illustration of failure of weak isodrasticity.  
\begin{figure}[htbp] 
   \centering
  \includegraphics[width=2in]{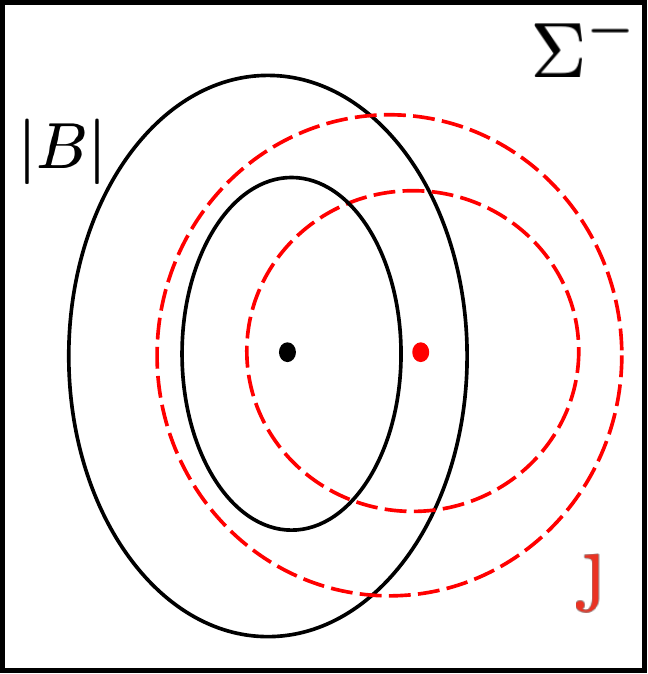} 
   \caption{Illustration of failure of weak isodrasticity; the level curves of $h$ and $\text{\text{\j}}$ on $\Sigma^-$ do not coincide.}
   \label{fig:weakiso}
\end{figure}
This explains the 
etymology of our definition.  In Greek, ``iso'' means equal and ``drasis'' means action.  We're asking for marginally bouncing trajectories of given energy (which up to scaling by $\mu$ is $h$ for particles on segments with an endpoint on $\Sigma^-$) to have the same action $\j$.  Use of the term ``isodrastic'' goes back to \cite{W}, but in a different context. 

We formulate precise statements of two necessary and one sufficient condition for weak isodrasticity in (Theorem~\ref{thm:1}).
First, the function $h$ is smooth on $\Sigma^-$ because $B$ is assumed to be smooth and $\Sigma^-$ was proved to be smooth.
For given direction along the field, the separatrix area $\j$ is also a smooth function on $\Sigma^-$ except at points with a heteroclinic connection, where the derivative generically becomes infinite and $\j$ jumps or ceases to be defined (if there is no longer any bounce point in that direction).  We define ${\Sigma^-}'$ to be this subset of $\Sigma^-$.
As already remarked, $H_j$ is differentiable for non-marginal segments and $dH_j \to dh$ as a homoclinic case is approached (\ref{eq:limit}).


\begin{thm} 
\label{thm:1}
(a) If magnetic field $B$ is weakly isodrastic then for both directions along the field, $dh$ and $d\j$ are linearly dependent at every point of ${\Sigma^-}'$;\\
(b) If $B$ is weakly isodrastic and $\Sigma^0$ is a smooth curve without heteroclinic cases, then $h$ and $\j$ are constant on connected components of $\Sigma^0$;\\
(c) If for both directions, $\j$ is constant on components of level sets of $h$
then $B$ is weakly isodrastic.
\end{thm}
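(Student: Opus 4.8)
The plan is to exploit the two structural facts already established. First, the reduced dynamics (\ref{eq:reduceddyn}) conserves $H_j$, so every reduced trajectory is confined to a single level set of $H_j$ in $F_j$. Second, by (\ref{eq:limit}), as a homoclinic (marginal) case is approached using $\Sigma^-$ as transverse section, one has $dH_j \to dh$ with $H_j \to h = |B|$. Reaching a marginal case means the trajectory reaches $\partial F_j$, a piece of which is identified (via the homoclinic limit) with the level set $\{x \in \Sigma^- : \j(x) = j\}$ for the relevant direction $\sigma$, since $\j(x)$ is by definition the area of the separatrix lobe attached to $x$ and $H_{\j(x)}(x) = h(x)$. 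The strategy is to show that hypothesis (c) turns each such boundary component into a regular level curve of $H_j$ that the conserved flow can neither cross nor approach from the interior.

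First I would translate (c) into a statement about the marginal boundary. Let $C$ be a smooth component of $\{\j = j\} \subset \Sigma^-$ through a point $x_0$ with $dh(x_0) \ne 0$. The component of $\{h = h(x_0)\}$ through $x_0$ is then a smooth curve on which, by hypothesis, $\j$ is constant and equal to $\j(x_0) = j$; hence it lies in $\{\j = j\}$ and coincides with $C$ near $x_0$. Consequently $h$ is locally constant along $C$, say $h \equiv E_0$, so $C$ is simultaneously a component of a level curve of $h$ and of $\j$ (the contours of $h$ and $\j$ coincide, as in the informal description).

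Next I would carry out the local analysis normal to $C$. By (\ref{eq:limit}), $H_j$ extends continuously to $C$ with boundary value $E_0$, and $dH_j \to dh$ there. Parametrising the inward normal to $C$ by $t \ge 0$, the interior side is $\{\j > j\}$, and because the turning height of the action-$j$ segment lies below the local-maximum height $h(x(t))$ by an amount $\delta(t)$ that vanishes faster than linearly in $t$ (the bounce period diverges only logarithmically, cf.\ the estimate after (\ref{eq:djdh}), giving $\delta|\log\delta| \sim \mathrm{const}\cdot t$ and hence $\delta = o(t)$), one obtains $H_j(x(t)) = E_0 + h'(0)\,t + o(t)$ with $h'(0) \ne 0$. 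Thus $H_j$ takes values strictly on one side of $E_0$ throughout a one-sided neighbourhood of $C$, and $\{H_j = E_0\}$ meets $\overline{F_j}$ near $x_0$ only in $C$ itself. Conservation of $H_j$ then finishes the argument: a trajectory with conserved value $c \ne E_0$ lies in the closed invariant set $\{H_j = c\}$, which by continuity is disjoint from $C$ and bounded away from it (at normal distance $\approx |c - E_0|/|h'(0)|$); while a trajectory with $c = E_0$ cannot accumulate on $C$ either, since interior points near $C$ have $H_j \ne E_0$. In neither case does a non-marginal trajectory reach $C$. The same reasoning applies to the other direction $\sigma$ and to the remaining way of generating a homoclinic boundary (an interior local maximum rising to height $h$, i.e.\ the $j_{\min}$ boundary in Figure~\ref{fig:marginal}), each being an approach to a point of $\Sigma^-$ governed by (\ref{eq:limit}) and by (c) for the appropriate direction.

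The main obstacle is the non-generic part of the boundary: the heteroclinic points ${\Sigma^-}'$, the degenerate curve $\Sigma^0$, and the points of $\Sigma^-$ where $dh = 0$. There the estimate $H_j = E_0 + h'(0)\,t + o(t)$ degenerates (either $h'(0) = 0$ or $C$ fails to be smooth), so the clean ``$H_j$ lies strictly to one side'' dichotomy breaks down and a level curve of $H_j$ could in principle reach such a point. I expect to dispose of these by noting that they form a set of lower dimension in $\partial F_j$ which the $H_j$-conserving flow cannot reach without first traversing the regular part $C$ (already excluded), together with the smoothness hypotheses on $\Sigma^0$ used in part (b); the logarithmic divergence of the bounce period should moreover guarantee that any approach to such a point is asymptotic rather than in finite time, so that no genuine transition occurs.
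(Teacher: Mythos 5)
Your proposal addresses only part (c) of the theorem. Parts (a) and (b) are necessary conditions --- they assert that weak isodrasticity forces $dh$ and $d\j$ to be linearly dependent on ${\Sigma^-}'$ and forces $h$ and $\j$ to be constant on components of $\Sigma^0$ --- and nothing in your argument establishes them. The paper proves them by the contrapositive: if $dh$ and $d\j$ are independent at some $x_0\in{\Sigma^-}'$ (respectively, if $h$ is non-constant along $\Sigma^0$), then the boundary curve $\j^{-1}(j_0)$ (respectively $\Sigma^0$) is transverse to $\ker dh$, which is the limit of $\ker dH_{j_0}$, so the level curves of $H_{j_0}$ cross the boundary and the reduced trajectories, which follow those level curves, reach marginality in finite time for one sign of the charge. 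Your machinery (the normal expansion $H_j = E_0 + h'(0)\,t + o(t)$) could be adapted to give this, but as written the proposal is silent on it, so two-thirds of the statement is unproved.

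For part (c), your barrier argument on the regular part of the boundary --- the component $C$ is simultaneously a level curve of $h$ and of $\j$, $H_j$ extends continuously to $C$ with $dH_j\to dh\neq 0$, hence $H_j$ is strictly monotone in the normal direction and its level sets cannot reach $C$ --- is sound and is a genuine alternative to the paper's argument (the paper instead shows that the $h$-level curve through a putative first marginal point is locally invariant, so marginality cannot have a first time). However, your disposal of the degenerate boundary points (where $dh=0$, and the heteroclinic or corner points) does not work. These points are isolated in $\partial F_j$, but an interior trajectory can head straight for an isolated boundary point without ``first traversing the regular part $C$'', so lower-dimensionality alone excludes nothing; and the logarithmic divergence of the bounce period concerns the ZGC bounce time, not the time taken by the reduced drift dynamics to reach $\partial F_j$ --- indeed the proof of (a) shows the reduced flow \emph{does} reach the boundary in finite time when transversality holds, so ``asymptotic approach'' is not automatic. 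The missing observation is the one the paper uses: at a marginal point $dH_j$ agrees with $dh$, so a boundary point with $dh=0$ has $dH_j=0$ and is therefore a fixed point of the reduced dynamics, which no other trajectory can reach in finite time by uniqueness of solutions of the initial value problem. Without that step (or an equivalent), the degenerate case is a genuine hole in your proof of (c).
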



\begin{proof}
(a) If for some direction along the field, $d\j$ and $dh$ are linearly independent at some $x_0 \in {\Sigma^-}'$, let $j_0 = \j(x_0)$.  
In particular, $d\j \ne 0$ there so the level set $\j^{-1}(j_0)$ is locally a smooth curve. See Figure \ref{fig:pf1}.  Locally, it is the boundary of $F_{j_0}$ and the motion is periodic in the interior of $F_{j_0}$.  By independence of $dh$ and $d\j$ at $x_0$, $dh \ne 0$ there and the tangent to the boundary is not in $\ker dh$.  $dH_{j_0} \to dh$ as $x_0$ is approached from the interior.  So $\ker dH_{j_0}$ is at a non-zero angle to the boundary near $x_0$. The reduced dynamics is the Hamiltonian dynamics of $H_{j_0}$ with respect to the flux form $\beta$.  So trajectories of the reduced dynamics reach the boundary in finite time for one sign of scaled time, i.e.~in finite positive time for one sign of charge.  Thus $B$ is not weakly isodrastic.
\begin{figure}[htbp] 
 \centering
 \includegraphics[width=3in]{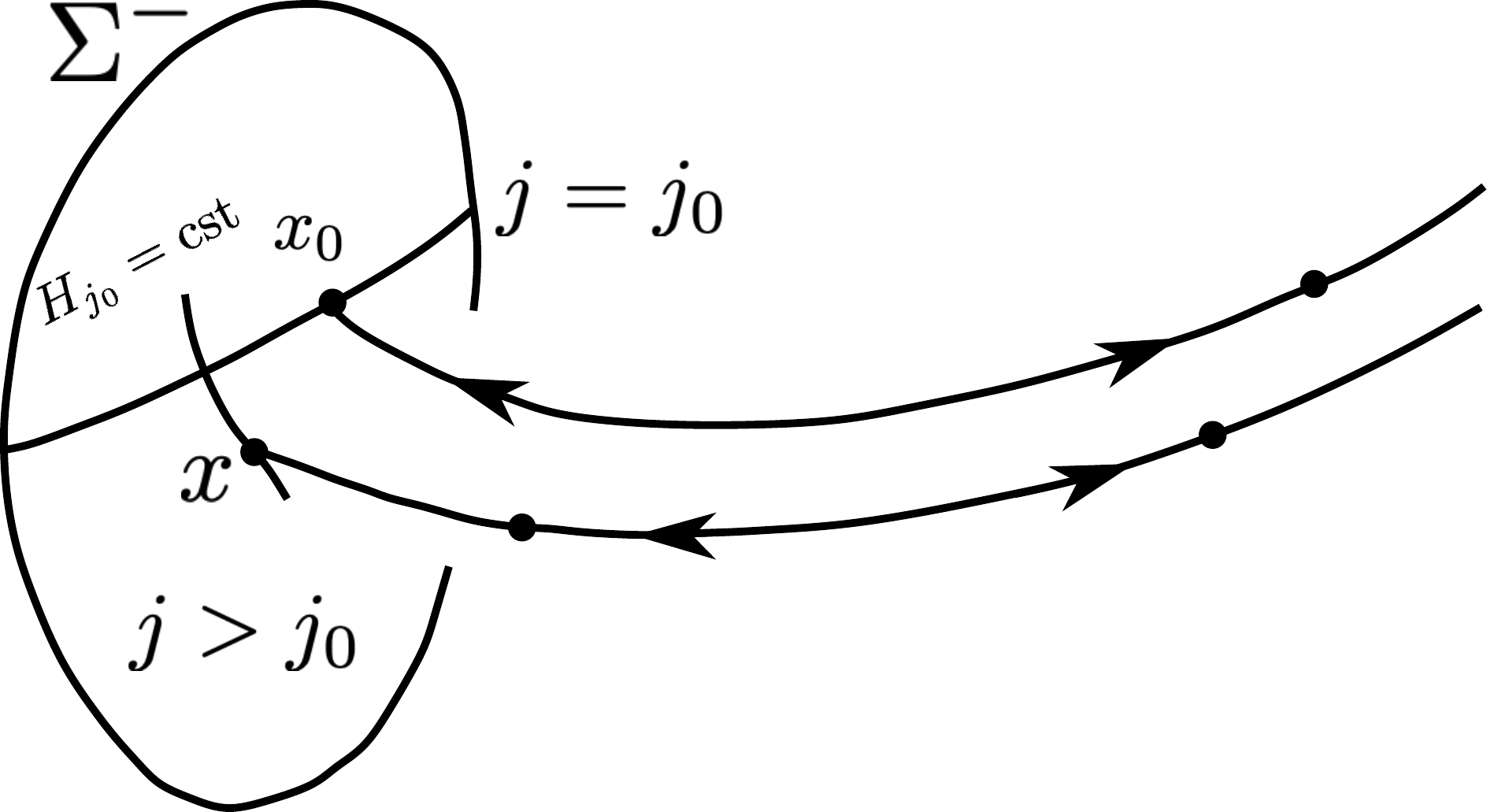} \caption{Illustration of part (a) of the proof of Theorem 1.}
 \label{fig:pf1}
\end{figure}


(b) If $\Sigma^0$ is a smooth curve and $h$ is not constant along a component of $\Sigma^0$ then there exists a point $x_0 \in \Sigma^0$ where the derivative of $h$ along $\Sigma^0$ is non-zero.  Thus the tangent to $\Sigma^0$ there is not in $\ker dh$.  Let $j_0 = \j(x_0)$.  By assumption the segment from $x_0$ is not heteroclinic.  Then $dH_{j_0} \to dh$ as $x_0$ is approached from $\Sigma^-$.  Thus $\ker dH_{j_0}$ is at a non-zero angle to $\Sigma^0$ near $x_0$.  So trajectories of the reduced dynamics for this value $j_0$ reach $\Sigma^0$ in finite time for one sign of scaled time $\tau$, i.e.~in finite positive time for one sign of charge.  Thus $B$ is not weakly isodrastic.  We deduce that weak isodrasticity implies $h$ is constant along smooth components of $\Sigma^0$.
Since weak isodrasticity also implies $dh\wedge d\j=0$ up to the boundary in reduced space by (a), $h=\text{const.}$ along the boundary implies also that $\j=\text{const.}$ along the boundary.



(c) Assume $\j$ is constant on level sets of $h$. Let $\gamma(t)$ denote a trajectory for first-order reduced dynamics such that $\gamma(0)$ is not marginal. If $\gamma$ were to become marginal in (say) forward time there would be a $t_0>0$ such that $\gamma(t_0)$ is marginal and $\gamma(t)$ is not marginal for $0\leq t < t_0$. But this is impossible because we claim:
\begin{itemize}
    \item[$\star$] $\gamma(t_0)$ marginal implies $\gamma(t)$ marginal for $t$ in an open neighborhood of $t_0$.
\end{itemize}
We prove $\star$ as follows. Suppose $x_0=\gamma(t_0)$ is marginal. The initial value problem for $x_0$ enjoys uniqueness because dynamics in the reduced space arise as a quotient of dynamics in the full guiding center phase space, in which uniqueness holds. Therefore if $x_0$ is a fixed point then we conclude $\star$. So assume $x_0$ is not a fixed point. Hamilton's equations imply that $dH_j(x_0)\neq 0$. Since $dh$ agrees with $dH_j$ at marginal points it follows that $dh(x_0)\neq 0$. By the implicit function theorem, the level set of $h$ containing $x_0$ is a $1$-manifold $\Gamma$ near $x_0$. And since $\j$ is constant along $\Gamma$ the reduced Hamiltonian $H_j$ is constant along $\Gamma$. The pullback of Hamilton's equations to $\Gamma$ now implies $\Gamma$ is locally invariant, which establishes the claim.  Thus the field is weakly isodrastic.
The same argument applies for the case of splitting of a segment by an interior local maximum.  Simply, $\j$ needs replacing by the sum $\j^+ + \j^-$.  If both $\j^\pm$ are constant on level sets of $h$ then so is this sum.
\end{proof}
One might ask why in (c) we did not need a hypothesis like $h$ constant on components of $\Sigma^0$.  We think this follows from $\j$ constant on $h$-levels, at least under some generic assumptions.  For example, in Appendix~\ref{app:generic} we show that $h$ constant on generic $\Sigma^0$ follows from linear dependence of $dh$ and $d\j$ for the short bouncing class on the neighbouring part of $\Sigma^-$.


\subsection{Quantification of failure of weak isodrasticity and transition flux}
Theorem 1 leads to a quantification of failure to be isodrastic.  The failure of contours of $h$ and $\text{\j}$ to coincide on $\Sigma^-$ can be measured by the 2-form $dh \wedge d\text{\j}$.  This is most simply described by comparing it to the magnetic flux-form $\beta$, which is a nondegenerate top-form on $\Sigma^-$. Thus there is a function $\mathcal{M}$  on $\Sigma^-$ such that $$dh \wedge d\text{\j} = \mathcal{M}\beta.$$  
In Section~\ref{sec:Mel}, $\mathcal{M}$ will be identified as a ``Melnikov function'' for the FGCM dynamics.
But for now, to compute $\mathcal{M}$, if $\Sigma^-$ is given locally as the graph $z=Z(x,y)$ of a function in Cartesian coordinates then 
\begin{equation}
\mathcal{M} = \frac{h_{,x} \text{\j}_{,y} - h_{,y}\text{\j}_{,x}}{B_z-B_x Z_{,x}-B_y Z_{,y}},
\label{eq:Mfromhj}
\end{equation}
where subscripts after a comma indicate partial derivatives.

The function $\mathcal{M}$ has units of square root of field strength divided by length, but it is natural to multiply $\mathcal{M}$ by the factor $\sqrt{m\mu}$ to turn $\text{\j}$ into $L$. The quantity $\sqrt{m\mu}\, \mathcal{M}$ is an inverse time, so represents the rate of transition. {
Indeed, the flux of reduced orbits between classes is given precisely by the following Theorem~\ref{thm:2}.

We need first to introduce the Liouville volume-form $\Lambda$ on guiding-centre phase-space. It is defined by $\Lambda = \frac12 \omega \wedge \omega$, where $\omega$ is in (\ref{eq:omega}).  This can be computed to be
\begin{equation}
\Lambda = em \tilde{B}_\pl \Omega \wedge dv_\pl,
\label{eq:Lambda}
\end{equation}
using the relations $\beta \wedge b^\flat = |B|\, \Omega$ and $b^\flat \wedge db^\flat = -b\cdot c\ \Omega$.  In the following, we use the symbol $\rho$ for a density in GC phase space (as opposed to the gyroradius).

\begin{thm}
\label{thm:2}
Let $\Lambda$ denote the Liouville volume form in guiding center phase space. For a distribution $\varrho$ of guiding centers with density $\rho$ with respect to $\Lambda$ {in $(X,v_\pl)$}, the flux of reduced orbits between classes is given by the $2$-form on $\Sigma^-$
\begin{align*}
4\pi\,m^{1/2}\,\mu^{3/2}\,\overline{\rho}\,dh\wedge d\j,
\end{align*}
where $\overline{\rho}$ is the ZGC bounce-average of $\rho$.
\end{thm}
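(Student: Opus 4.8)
The plan is to reduce the transition flux in the four-dimensional guiding-centre phase space $(X,v_\pl)$ at fixed $\mu$ to a flux of the reduced dynamics across $\Sigma^-$, and then read off the $2$-form. The first ingredient is a decomposition of the Liouville form (\ref{eq:Lambda}) adapted to the bounce. Using $\beta\wedge b^\flat=|B|\,\Omega$ and $\widetilde{B}_\pl\to|B|$ to leading order, I would write $\Lambda=em\,\beta\wedge b^\flat\wedge dv_\pl=e\,\beta\wedge ds\wedge dp_\pl$ along fieldlines, where $p_\pl=mv_\pl$ and $b^\flat=ds$. The factor $ds\wedge dp_\pl$ is the canonical area of the ZGC bounce orbit, so I would change variables in the bounce plane from $(s,p_\pl)$ to $(t,E)$, where $t$ is time along the orbit and $E$ the bounce energy, using the standard identity $ds\wedge dp_\pl=dt\wedge dE$. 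Thus $\Lambda=e\,\beta\wedge dt\wedge dE$ on the bouncing region, with $t$ ranging over one bounce period $T$ and the transverse coordinate $M$ carrying the reduced area form $\beta$.

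Integrating out the fast bounce coordinate then gives the reduced count. Since $\int_0^T\rho\,dt=T\,\overline\rho$ by definition of the bounce-average, and since $E=\mu h$ together with $T=2\sqrt{m/\mu}\,dj/dh$ (from the excerpt) yields the clean relation $T\,dE=2\sqrt{m\mu}\,dj$ at fixed $M$, the total number of guiding centres becomes
\begin{equation*}
\int\rho\,\Lambda=2e\sqrt{m\mu}\int\overline\rho\,\beta\wedge dj.
\end{equation*}
Hence the density transported by the reduced Hamiltonian flow of $H_j$ on each leaf $F_j$, measured per unit $\beta$-area and per unit $j$, is proportional to $\sqrt{m\mu}\,\overline\rho$.

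Next I would compute the flux of the reduced flow across $\Sigma^-$ at fixed $j$ and assemble the result over $j$. In real time the reduced dynamics (\ref{eq:reduceddyn}) is $dM/dt=\tfrac{\mu}{e}u$ with $i_u\beta=-dH_j$, so the current $1$-form carrying reduced orbits across a curve in $F_j$ is the reduced density times $i_{(\mu/e)u}\beta=-\tfrac{\mu}{e}\,dH_j$, hence a multiple of $\overline\rho\,dH_j$. Transitioning orbits of action $j$ reach the boundary exactly along the homoclinic locus $\{\j=j\}\subset\Sigma^-$, on which $dH_j\to dh$ by (\ref{eq:limit}). I would then promote this to a $2$-form on $\Sigma^-$ by integrating the current against $dj$ over the foliation of $\Sigma^-$ by the curves $\{\j=j\}$; on this family $j$ may be replaced by the function $\j$, so $dj\mapsto d\j$ and, using $\int_{\Sigma^-}f\,dh\wedge d\j=\int dj\int_{\{\j=j\}}f\,dh$, the flux takes the claimed shape, a multiple of $m^{1/2}\mu^{3/2}\,\overline\rho\,dh\wedge d\j$. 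With $dh\wedge d\j=\mathcal{M}\beta$ this simultaneously exhibits $\sqrt{m\mu}\,\mathcal{M}$ as the transition rate, as anticipated in the text.

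The main obstacle is pinning down the exact numerical prefactor and the orientation. The steps above fix the dependence $m^{1/2}\mu^{3/2}\,\overline\rho\,dh\wedge d\j$ unambiguously, but the coefficient $4\pi$ requires care: the bounce sector, done in $(t,E)$ variables, already absorbs the bounce-phase integration without a leftover $2\pi$, so reproducing $4\pi$ forces one to track the remaining normalisations precisely — in particular the gyro-phase measure implicit in counting guiding centres (the $2\pi$ distinguishing a guiding-centre density on $(X,v_\pl)$ from the underlying particle count) and the precise convention for $\overline\rho$. The sign is fixed by orienting $\Sigma^-$ so that positive flux corresponds to loss from the bouncing class for the chosen sign of charge. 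A secondary point needing justification is that the transition flux of the full dynamics is faithfully represented by the reduced-flow flux across $\Sigma^-$ with the bounce-averaged density — i.e.~that the adiabatic reduction by $j$ commutes with the flux computation up to the homoclinic limit — and that the foliation of $\Sigma^-$ by $\{\j=j\}$ is nondegenerate, which holds where $d\j\ne0$, away from ${\Sigma^-}'$, so the change of variables to $dh\wedge d\j$ is legitimate.
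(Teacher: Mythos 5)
Your route is essentially the concrete, coordinate version of the paper's proof. The paper also integrates out the bounce phase and restricts the resulting flux form to the marginal set: it treats the bounce motion as a $U(1)$-action in the sense of Kruskal, computes the fibre integral $\pi_*\varrho = 2\pi\,\overline{\rho}\,dP\wedge\overline{\omega}$ over the bounce orbits ($P$ the bounce momentum map, $\overline{\omega}$ the Marsden--Weinstein reduced form, here $e\beta$), contracts with the reduced vector field to obtain $\Gamma = 2\pi\,\overline{\rho}\,dH\wedge dP$ on the $3$-dimensional orbit space, and pulls back along $i^*H=\mu h$, $i^*L=\sqrt{m\mu}\,\j$. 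Your decomposition $\Lambda=e\,\beta\wedge ds\wedge dp_\pl=e\,\beta\wedge dt\wedge dE$, followed by $\int_0^T\rho\,dt=T\overline{\rho}$ and $T\,dE=d(2L)$ at fixed $M$, is exactly that fibre integral done by hand, and your final step ($dH_j\to dh$ at marginality, $dj\mapsto d\j$ on the foliation of $\Sigma^-$ by $\{\j=j\}$) is the pullback $i^*$. So the structure of the argument coincides with the paper's and the functional form $m^{1/2}\mu^{3/2}\,\overline{\rho}\,dh\wedge d\j$ is correctly derived.

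The one genuine gap is that you decline to fix the numerical prefactor, and the explanation you offer for the difficulty is wrong. Your computation, carried to the end, leaves no freedom: $\pi_*\varrho=2e\,\overline{\rho}\,\beta\wedge dL$, and contracting with the reduced flow (which satisfies $i_v(e\beta)=-dH$ on each leaf) gives $\Gamma=2\,\overline{\rho}\,dH\wedge dL$ up to orientation, hence $2\,m^{1/2}\mu^{3/2}\,\overline{\rho}\,dh\wedge d\j$; you should commit to that constant rather than leave it open. The gyro-phase cannot supply a missing $2\pi$: by hypothesis $\rho$ is a density with respect to $\Lambda$ on $(X,v_\pl)$, with the gyro-phase already integrated out (that $2\pi$ enters only later, in Section~\ref{sec:Mel}, when converting guiding-centre volume flux to particle flux). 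The factor separating your $2$ from the stated $4\pi$ is precisely the normalisation of the bounce action: the paper substitutes $P=2L=\oint p_\pl\,ds$ into $\Gamma=2\pi\,\overline{\rho}\,dH\wedge dP$, whereas the momentum map of a $2\pi$-periodic $U(1)$-action is $\tfrac{1}{2\pi}\oint p_\pl\,ds=L/\pi$, with which the abstract formula reproduces your coefficient $2$; note that the paper's own direct flux computation in Section~\ref{sec:Mel} also lands on $2\,m^{1/2}\mu^{3/2}$. So rather than hunting for a hidden $2\pi$, state the action-variable convention explicitly and carry the constant through; the residual discrepancy with the theorem's $4\pi$ is a normalisation issue in the identification of $P$, not a defect of your method.
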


\begin{proof}
In the $3$-dimensional space of ZGCM bouncing orbits, particle flux is quantified by a $2$-form $\Gamma$. The flux of reduced orbits between classes is given by restricting $\Gamma$ to the $2$-manifold of marginal bouncers. So we will find a formula for $\Gamma$ and then analyze its restriction to the marginal orbits.

To determine an expression for $\Gamma$ we first argue generally in the context of symplectic Hamiltonian systems with $U(1)$ symmetry. This is relevant to bouncing particles because Kruskal's theory \cite{Kr} of nearly periodic systems implies guiding-centre dynamics formally comprises such a system when restricted to bouncing orbits. Let $V$ denote a Hamiltonian vector field with Hamiltonian $\mathcal{H}$ on the symplectic $2n$-manifold $(M,\omega)$. Assume there is a symplectic $U(1)$-action $\Phi_\theta:M\rightarrow M$ with momentum map $\mathcal{P}:M\rightarrow \mathbb{R}$ and that $\mathcal{H}$ is $U(1)$-invariant. We will also suppose that the quotient map $\pi:M\rightarrow M/U(1)$ sending phase points to their $U(1)$ orbits is a smooth mapping between smooth manifolds. 

Let $\Lambda= \tfrac{1}{n!}\omega^{\wedge n}$ denote the Liouville volume on $M$. Given a particle density $\varrho = \rho\,\Lambda$, regarded as a top-form on $M$, we will derive a formula for the flux of particles in the orbit space $M/U(1)$. At the level of measures, the density of particles in $M/U(1)$ is simply the measure-theoretic pushforward along $\pi$ of the phase space measure defined by $\varrho$. At the level of differential forms, this means the density on $M/U(1)$ is given by the fibre integral $\pi_*\varrho$ --- a volume form on $M/U(1)$. If $v$ denotes the vector field on $M/U(1)$ induced by the $U(1)$-invariant vector field $V$, the particle flux form on $M/U(1)$ is therefore $\Gamma = \iota_v\pi_*\varrho$. We can simplify this abstract formula as follows. Let $P,H$ denote the unique functions on $M/U(1)$ such that $\pi^*P = \mathcal{P}$ and $\pi^*H = \mathcal{H}$. A direct calculation shows that the fibre integral is given by
\begin{align*}
\pi_*\varrho = \frac{2\pi}{(n-1)!}\,\overline{\rho}\,dP\wedge \overline{\omega}^{\wedge (n-1)},
\end{align*}
where $\overline{\omega}$ is any $2$-form on $M/U(1)$ that restricts to the Marsden-Weinstein reduced symplectic form on level sets of $P$, and $\overline{\rho}$ is the $U(1)$-average of $\rho$. It follows that the flux form $\Gamma$ is given by
\begin{align*}
    \Gamma = \iota_v\,\pi_*\varrho = \frac{2\pi}{(n-2)!}\,\overline{\rho}\,dH\wedge dP\wedge\overline{\omega}^{\wedge (n-2)}.
\end{align*}

Specializing now to the guiding-centre case ($n=2$, $P = 2L$), we deduce that the flux form in the space of ZGCM orbits is given by $\Gamma = 4\pi\,\overline{\rho}\,dH\wedge dL$, where $H$ is the true (i.e.~unscaled) Hamiltonian. To restrict this $2$-form to the class boundary, we parameterize the set of marginal bouncers using the map $i$ that sends points in $\Sigma^-$ to the corresponding ZGCM separatrix orbit attached to $\Sigma^-$ in the appropriate direction along the field. Since we have the pullback identities
\begin{align*}
    i^*H = \mu\,h,\quad i^*L = \sqrt{m\,\mu}\, \j,
\end{align*}
the particle flux through the class boundary is given by
\begin{align*}
i^*\Gamma = 4\pi\,m^{1/2}\,\mu^{3/2}\,\overline{\rho}\,dh\wedge d\j,
\end{align*}
as desired.
\end{proof}

}

The flux-form of Theorem~\ref{thm:2} represents the transition fluxes out of and into a class as positive and negative contributions with respect to an orientation on $\Sigma^-$.  To obtain the flux in one direction, one has to integrate it over the subset with the appropriate sign.

The size of the function $\mathcal{M}$ can be quantified in various ways, for example, one could take the maximum of $|\mathcal{M}|$ over a relevant piece of $\Sigma^-$ (say between two contours of $h$), or the integral of $|\mathcal{M}|$ with respect to the flux form $\beta$ over a relevant piece of $\Sigma^-$. Note that the size of $\mathcal{M}$ can be used as an objective function for a magnetic field optimizer that encourages the optimized field to be weakly isodrastic.

The function
$\mathcal{M}$ can be computed from (\ref{eq:Mfromhj}) by numerical differentiation of $h$, $\text{\j}$ and $Z$, as will be illustrated in the next section, but a more direct method will be given in Section~\ref{sec:Mel}.

As a special case, a segment can approach marginality simultaneously at two (or more) points.  It leads to the reduced phase space $F_j$ having corners as well as edges.  This is discussed in Appendix~\ref{app:double}.  In particular, $\Sigma^-$ can contain curves along which there is heteroclinic connection:~the condition is just that there is a segment with both ends on local maxima, so basically a Maxwell equal-height condition.



\section{Examples of deviations from weak isodrasticity}
To help understand the constructions of the surfaces $\Sigma$ in a magnetic field and the reduced Hamiltonian $H_j$ for bouncing trajectories with given (scaled) value $j$ of longitudinal invariant, we treat  examples of the three types of field presented in Figure~\ref{fig:exSigma}.  For each we start from an axisymmetric field and then consider the effects of breaking axisymmetry.  The case of a dipole field has only $\Sigma^+$, so is trivially isodrastic; it is treated in Appendix~\ref{app:dipole}.  In the cases with $\Sigma^{-0}$, one in general loses weak isodrasticity.  The big question is whether there are special perturbations that keep weak isodrasticity.  For mirror fields, that will be answered positively in Section~\ref{sec:construct}.

See Appendix~\ref{app:sqrts} for some practical tips for computing expressions involving $b$ and $|B|$ and for computing $\text{\j}$.

\subsection{Mirror machine}
\label{sec:mir}
Here we consider a mirror machine.  As a simple axisymmetric and vacuum version, we take the field of two circular coils centred on a common axis that we orient vertically, with magnetic dipole moments in the same direction along the axis and with separation significantly larger than their radii so that the field is significantly weaker  between the coils than at their centres (in contrast to the Helmholtz case).  We focus attention on the region within less than the coil radii of the axis (outside the coils there are additional parts of $\Sigma^+$ and further away there can be nulls at which $\Sigma^+$ branches; the full picture will be presented in a future publication).  Then $\Sigma^-$ consists of two surfaces, one spanning each coil, and $\Sigma^+$ is an intermediate surface cutting the axis (recall Figure~\ref{fig:exSigma}(b)).  There is no $\Sigma^0$.  

Segments can bounce between the stronger fields near the coils, or if they have enough energy compared to magnetic moment they can escape through one or other coil.  
The bouncing segments can be labelled by the radius $r$ of their intersection with $\Sigma^+$.  For a segment $\gamma$ labelled by $r$ and with $|B|=h$ at the bounce points, $$j = J(r,h) = \int_\gamma \sqrt{2(h-|B|)}\, ds.$$  The function $J$ increases with $h$ and it is plausible that it increases with $r$ too.
Given $j \ge 0$, the Hamiltonian $H_j$ is defined on the part of $\Sigma^+$ for which there is a bouncing segment with that $j$, by $J(r,H_j(r))=j$.  The domain on $\Sigma^+$ is a disc for $j\le j^*$ and an annulus for $j>j^*$, where $j^*$ is the value for the segment bouncing along the axis between the saddle points of $|B|$ at the centres of the coils.
Under the assumption that $J$ increases with $r$, the derivative $dH_j/dr = - \frac{\partial J/\partial r}{\partial J /\partial h}< 0$, except 0 for $r=0$.  Its level sets are axisymmetric.  So the bouncing segments precess round the axis.

If axisymmetry is broken by a smooth perturbation but not too strongly then the surfaces $\Sigma^\pm$ deform smoothly and the function $H_j$ on $\Sigma^+$ deforms smoothly.  It remains a Morse function (i.e.~all its critical points are non-degenerate), so (under the assumption that $\frac{\partial J}{\partial r} > 0$) the low level sets remain closed curves around a central point and the bouncing segments precess around them, but level sets for higher values of $H_j$ may reach the boundary of definition. 

To consider transitions, it is simplest to break up-down symmetry by making, say, the lower coil produce a stronger field than the upper coil. Then there is a range of segments from the upper part of $\Sigma^-$ that bounce before the lower part of $\Sigma^-$. Let $\text{\j}$ be the function on the upper part of $\Sigma^-$ giving $j$ for the segment that starts at the given point on $\Sigma^-$ and goes into the mirror machine.
Let $j_1$ be the minimum of $\text{\j}$ on the upper part of $\Sigma^-$.  In the axisymmetric case this is $j^*$.
For $j<j_1$, the segments precess forever.  But for $j>j_1$, motion along a level set of $H_j$ could take the segment to the boundary of its domain of definition, where it becomes marginal.
Then one has to examine $h$ ($ = |B|$) and $\text{\j}$ on the upper piece of $\Sigma^-$.  In general, their level sets do not coincide, which corresponds to motion of segments leading to marginal cases.

The result of breaking axisymmetry for a mirror machine is illustrated in Figure~\ref{fig:nonaximm}(a,b).
\begin{figure}[htbp] 
   \centering
   \subfigure[]{\includegraphics[width=1.9in]{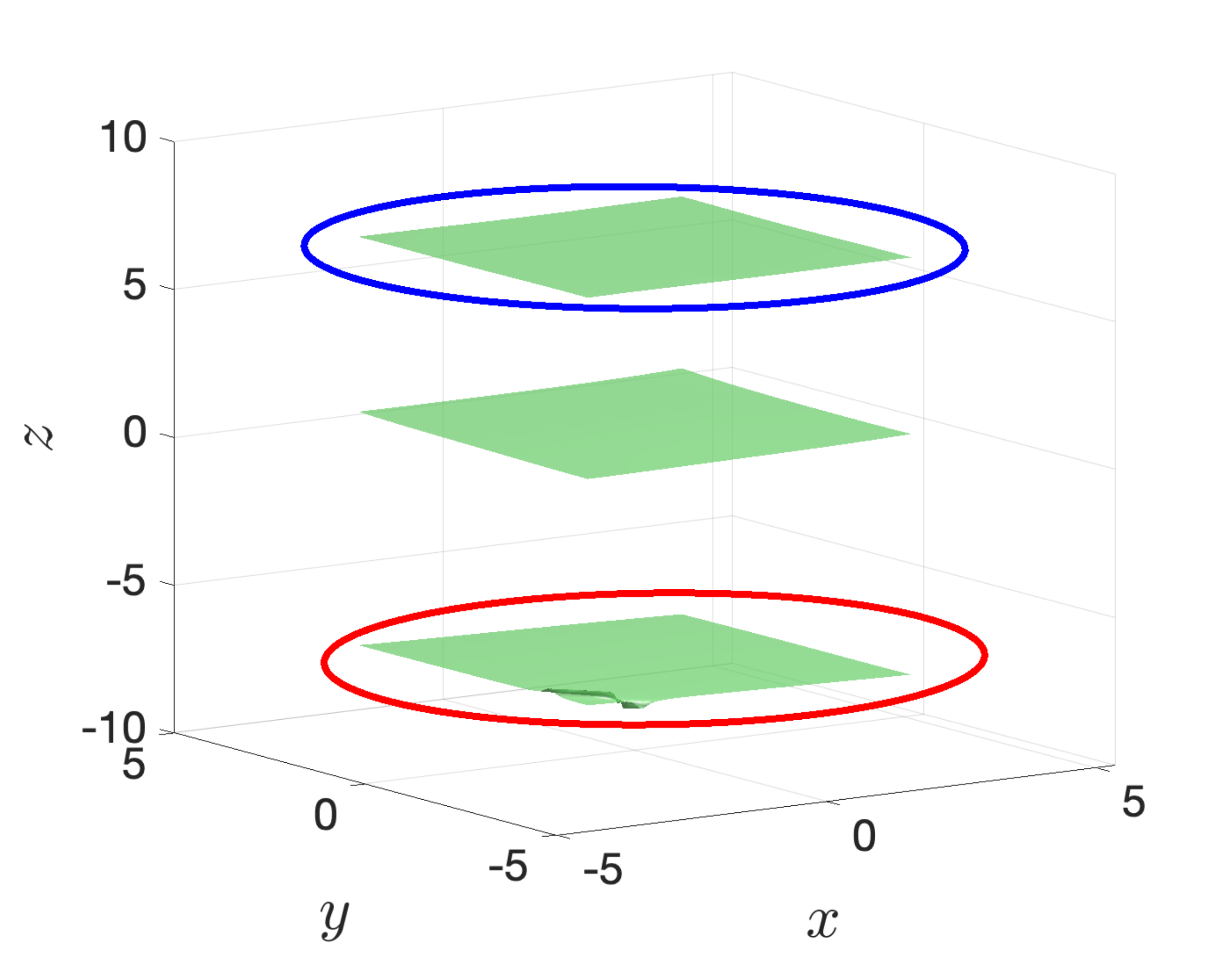}}
   \subfigure[]{\includegraphics[width=1.65in]{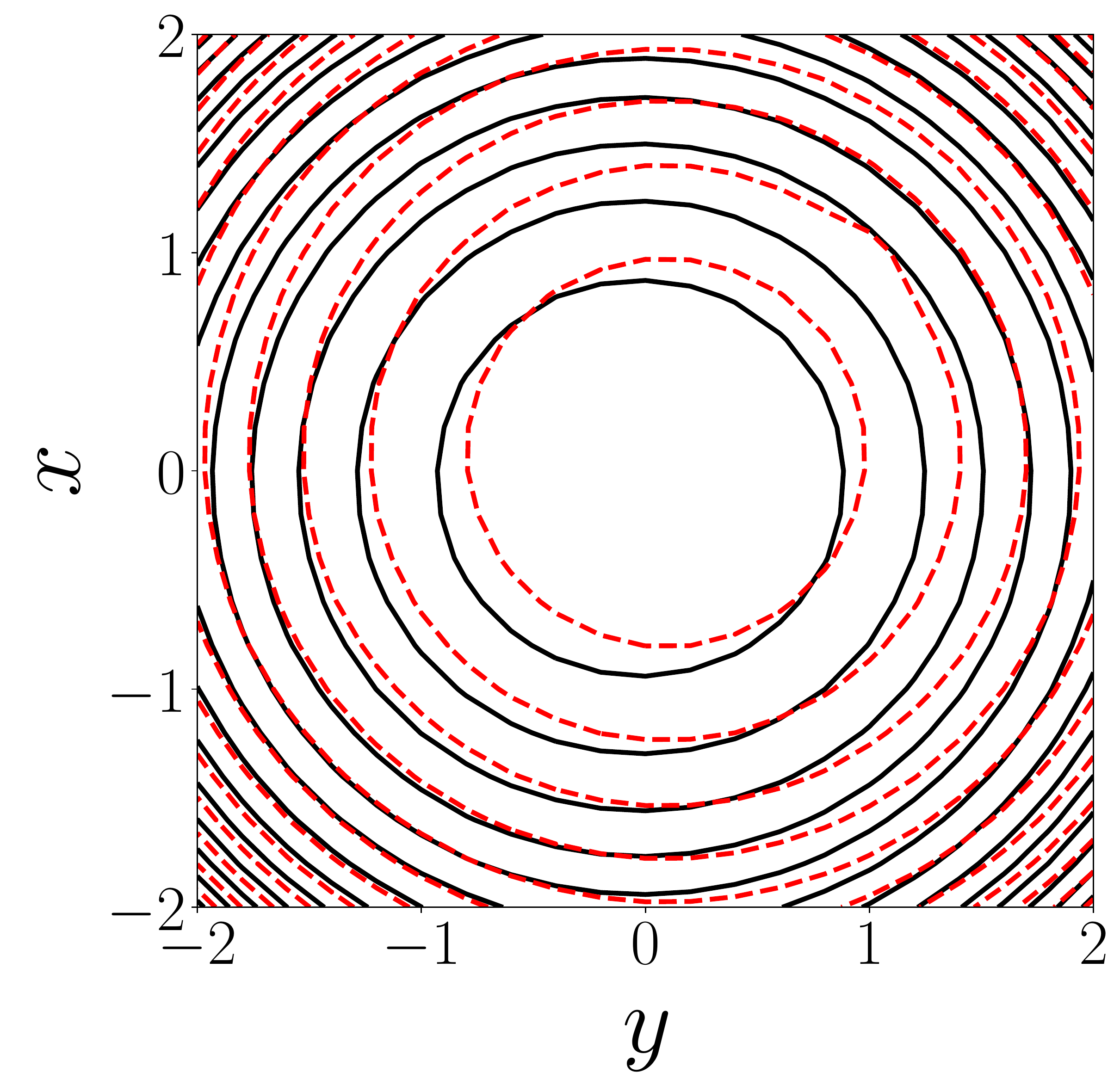}}
   \subfigure[]{\includegraphics[width=1.95in]{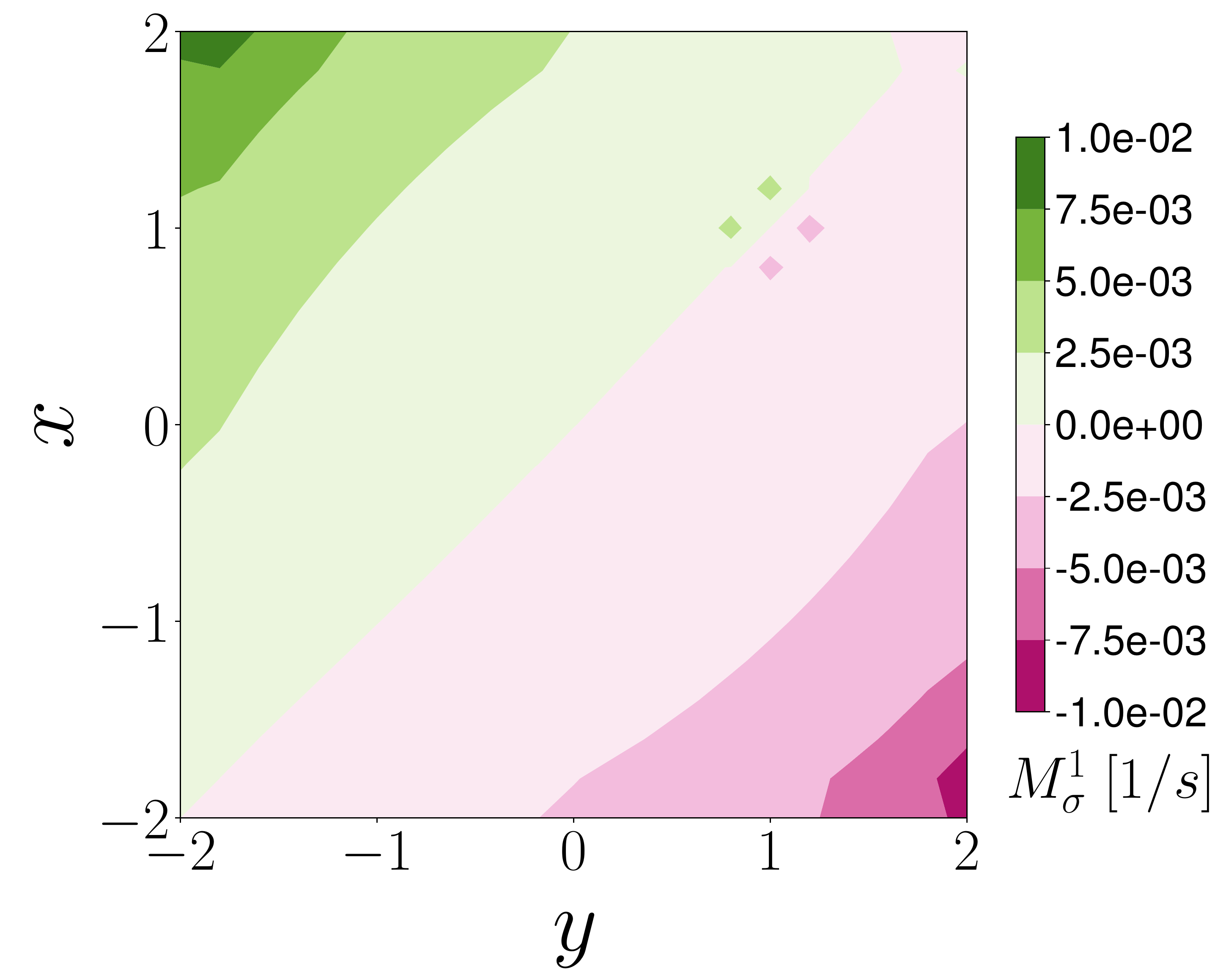}}
\caption{The effects of breaking axisymmetry in a mirror machine. The field is oriented vertically, with the stronger (red) coil at the bottom, having twice the current. Axisymmetry is broken by rotating the stronger coil by $\pi/60$ about the vector $(-\sin \pi/60,\cos \pi/60,0)$ at the center of the coil and then centering it at $(1,1,-7)$ while the weaker (blue) coil is at $(0,0,7)$.
(a) the parts of $\Sigma$ above the square $|x|,|y|\le 2$; the pieces surrounded by the two coils are in $\Sigma^-$, the piece in the middle is in $\Sigma^+$; (b) level sets of $h$ and $\text{\j}$ on the upper part of $\Sigma^-$ (solid black for $h$, dashed red for $\text{\j}$); (c) the function $\cM$ on the upper part of $\Sigma^-$.}
   \label{fig:nonaximm}
\end{figure}
As described in the previous section, one can quantify the failure of contours of $h$ and $\text{\j}$ to coincide by computing $dh \wedge d\text{\j}$.  It is a multiple $\cM$ of the flux 2-form $\beta$, so the convenient way to describe it is the function $\cM$.  Figure~\ref{fig:nonaximm}(c) shows $\cM$ for the example, computed by numerical differentiation.  

Analysis of the motion would be completed by computing $H_j$ on $\Sigma^+$, but for isodrasticity, it is enough to look at $h$ and $\text{\j}$ on $\Sigma^-$, as we have explained.

A more complicated case is perturbations of an up-down symmetric mirror machine.  In the unperturbed case all the marginal cases are doubly marginal, because both ends of the segment are zeroes of $|B|'$.  On breaking axisymmetry (and up-down symmetry if desired) a subset of initial conditions on each piece of $\Sigma^-$ bounce before reaching the other piece of $\Sigma^-$ but another subset cross the latter at a lower value of $|B|$ and thus escape.  For these points, $\text{\j}$ is undefined. Actually, some of the escaping fieldlines might encircle a coil and come back for another approach to $\Sigma^-$ and  bounce, thereby giving a defined but larger value of $\text{\j}$, so $\text{\j}$ would have jump discontinuity lines at heteroclinic cases.
We do not discuss this further here, but the same issue is unavoidable for tokamaks, to be addressed in the next subsection.

It would be interesting to compute $\Sigma$, $h$ and $\text{\j}$ on $\Sigma^-$ for some other mirror machines, for example, a baseball coil \cite{P}.  Indeed, we treat a different mirror field in Section~\ref{sec:mirror}.

\subsection{Tokamak}
\label{sec:tok}
A simple example of an axisymmetric tokamak field, in physical components for cylindrical polar coordinates $(R,\phi,z)$, is
$$B_R = -z/R, \ B_\phi = C/R, \ B_z = r/R,$$ in a solid torus $r^2+z^2 \le r_0^2$ for some $r_0<1$, with 
$$r = R-1$$ (not minor radius) and $C > 1$ (this is a simple case of Balescu's class of standard axisymmetric magnetic fields \cite{Ba}).  It has a closed fieldline $z=0,R=1$, called the ``magnetic axis'', whose radius has been scaled to $1$. 
The field is divergence-free but we made no effort to make it magnetohydrostatic; perhaps it would be better to use a Solov'ev equilibrium \cite{S,Ce}, but we chose this one because we could do more calculations explicitly. The parameter $C$ is chosen to exceed 1 for stability to kinks, as will be discussed shortly, though not being an equilibrium, this stability condition is not really applicable.  

The fieldlines preserve 
$$\psi = \tfrac12(r^2+z^2).$$
With poloidal angle $\theta$ around the magnetic axis, one obtains $$d\phi/d\theta = C/R = C/(1+\sqrt{2\psi} \cos\theta).$$  This can be integrated to show that the change in $\phi$ for one revolution in $\theta$ is $2\pi C/\sqrt{1-2\psi}$, hence the rotational transform $\iota = \sqrt{1-2\psi}/C$. In particular, if $C>1$ then the ``safety factor'' $q = 1/\iota$ exceeds 1 on all flux surfaces, satisfying the Kruskal-Shafranov condition for stability to kinks, as claimed.

The field strength is $$|B|=\tfrac{1}{R}\sqrt{C^2+2\psi}.$$  It follows that $$|B|' = b\cdot \nabla |B| = z/R^2.$$  Thus
$\Sigma$ is the annulus $z=0$.  
Furthermore
$$|B|'' = \frac{Rr+2z^2}{|B|R^4},$$ so on $\Sigma$ $$ |B|^{\prime \prime} = \frac{r}{\sqrt{C^2+r^2}R^2}.$$
So $\Sigma^0$ is the magnetic axis $r=z=0$ and separates $\Sigma$ into $\Sigma^+$ for $r>0$ and $\Sigma^-$ for $r<0$ (see Figure~\ref{fig:exSigma}(c)).  There are passing trajectories that circulate in the same direction forever.  There are bouncing segments that cross $\Sigma^+$ repeatedly, bouncing at the stronger field where $R$ is smaller; they give the bananas.

As for the axisymmetric mirror machine, $j = J(h,r)$ for some function of $h=|B|$ at the bounce points and $r$ the value at which the segment crosses $\Sigma^+$.  It is defined for $$\frac{\sqrt{C^2+r^2}}{1+r} \le h \le \frac{\sqrt{C^2+r^2}}{1-r},$$ 
the limits corresponding to the field strengths on $\Sigma^+$ and $\Sigma^-$ respectively.
The function
$J$ increases with $h$, to a maximum of $j^*(r)$ corresponding to the upper limit of $h$.  It is plausible also that it increases with $r$.  The Hamiltonian $H_j$ on $\Sigma^+$ for motion with given $j$ is again defined by $J(H_j,r)=j$, on the subset for which bouncing motion with the given $j$ is possible.  This is the set with $r \ge j^{*-1}(j)$.
Hence
$dH_j/dr = -\frac{\partial J/\partial r}{\partial J /\partial h}$ is negative (except on the magnetic axis, but that is relevant for only $j=0$).  Its level sets are axisymmetric, and the bananas precess at constant rate.

If axisymmetry is broken then $\Sigma$ deforms to a nearby surface, because $\partial_z |B|' = R^{-2} \ne 0$, so the implicit function theorem applies.  By the same argument, $\Sigma^0$ deforms to a nearby curve (on the surface) because $\partial_R |B|'' = \frac{1}{C} \ne 0$.  Furthermore, the magnetic axis deforms to a nearby closed curve if its rotational transform $\iota_0 \notin \Z$, by persistence of non-degenerate fixed points of the return map of fieldline flow to a transverse section.  Recall that for this example, $\iota_0 = 1/C \in (0,1)$.  If $\iota_0 \notin \Z/2$ then the perturbed magnetic axis remains elliptic.  But typically the new magnetic axis is not contained in $\Sigma$.  All fieldlines intersect $\Sigma^+$ and $\Sigma^-$ alternately, except for intersections with $\Sigma^0$.  The intersections with $\Sigma^0$ are tangential to $\Sigma$ so produce no crossing except at special points where the contact is of odd order.

The level sets of $H_j$ deform smoothly except near $\Sigma^0$ and near the cases of heteroclinic orbits from $\Sigma^-$ to $\Sigma^-$.  Thus a lot of the motion remains similar to the axisymmetric case, with precessing bananas, but the motion is qualitatively different near $\Sigma^0$ and near the former heteroclinic cases; the result in both cases is transitions between bouncing and passing.  

Figure~\ref{fig:tok} shows an example of $\Sigma$ for a perturbed tokamak, including the effect on $\Sigma^0$ and level curves of $|B|$ on $\Sigma$.
For this, we consider the physical components of a vector potential $A_{\varepsilon} = (0,\varepsilon z R \cos\phi,0)$ in cylindrical polar coordinates and perturb the axisymmetric field by adding $B_{\varepsilon} = \nabla \times A_{\varepsilon}$. We give expressions for $|B|$ and $ |B|^{\prime}$ in  Appendix~\ref{appsect:pert_tokamak} for sake of completeness. 

\begin{figure}[htbp] 
\centering
\subfigure[]{\includegraphics[width=5in]{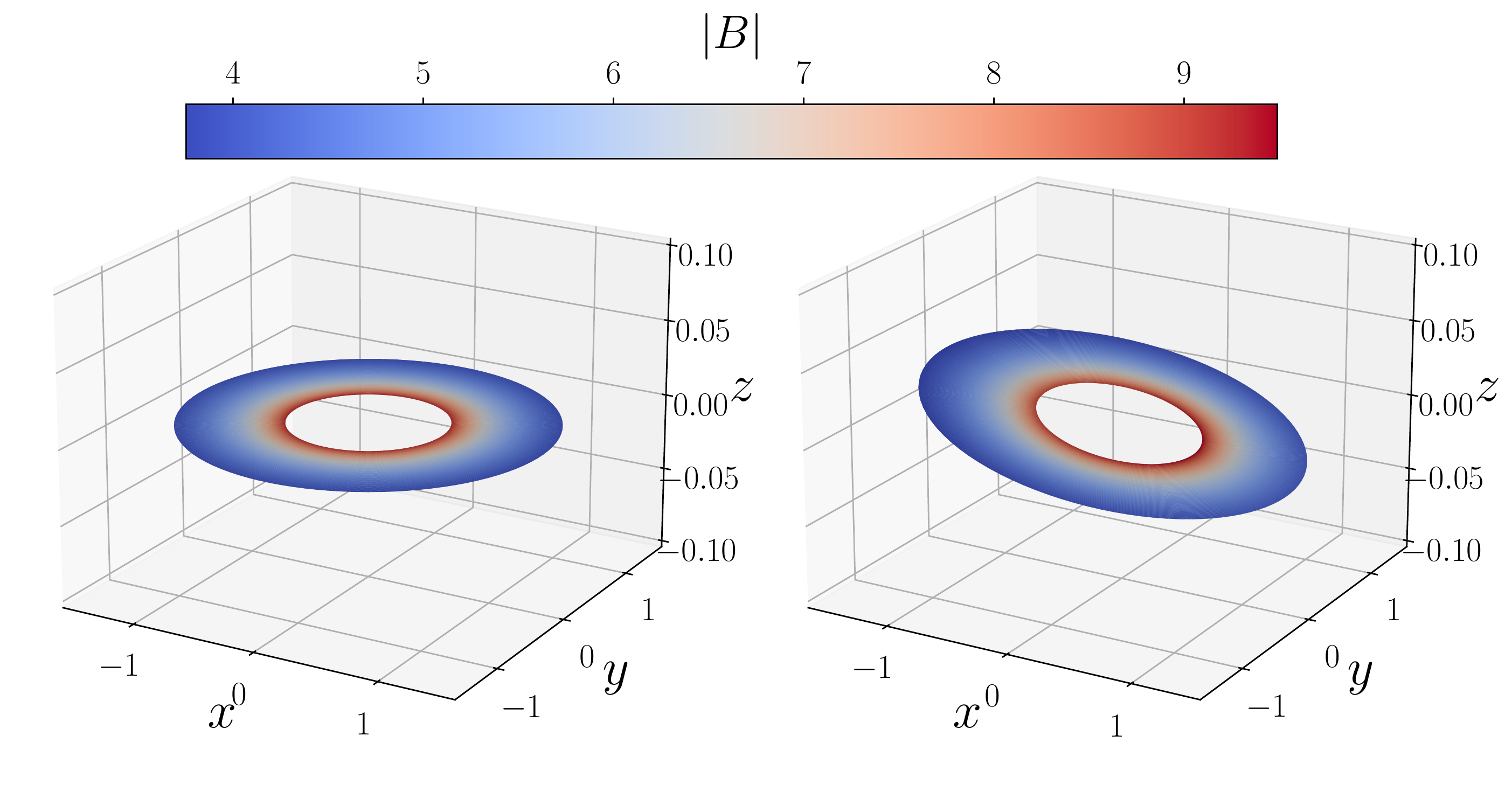}}
\subfigure[]{\includegraphics[width=2.5in]{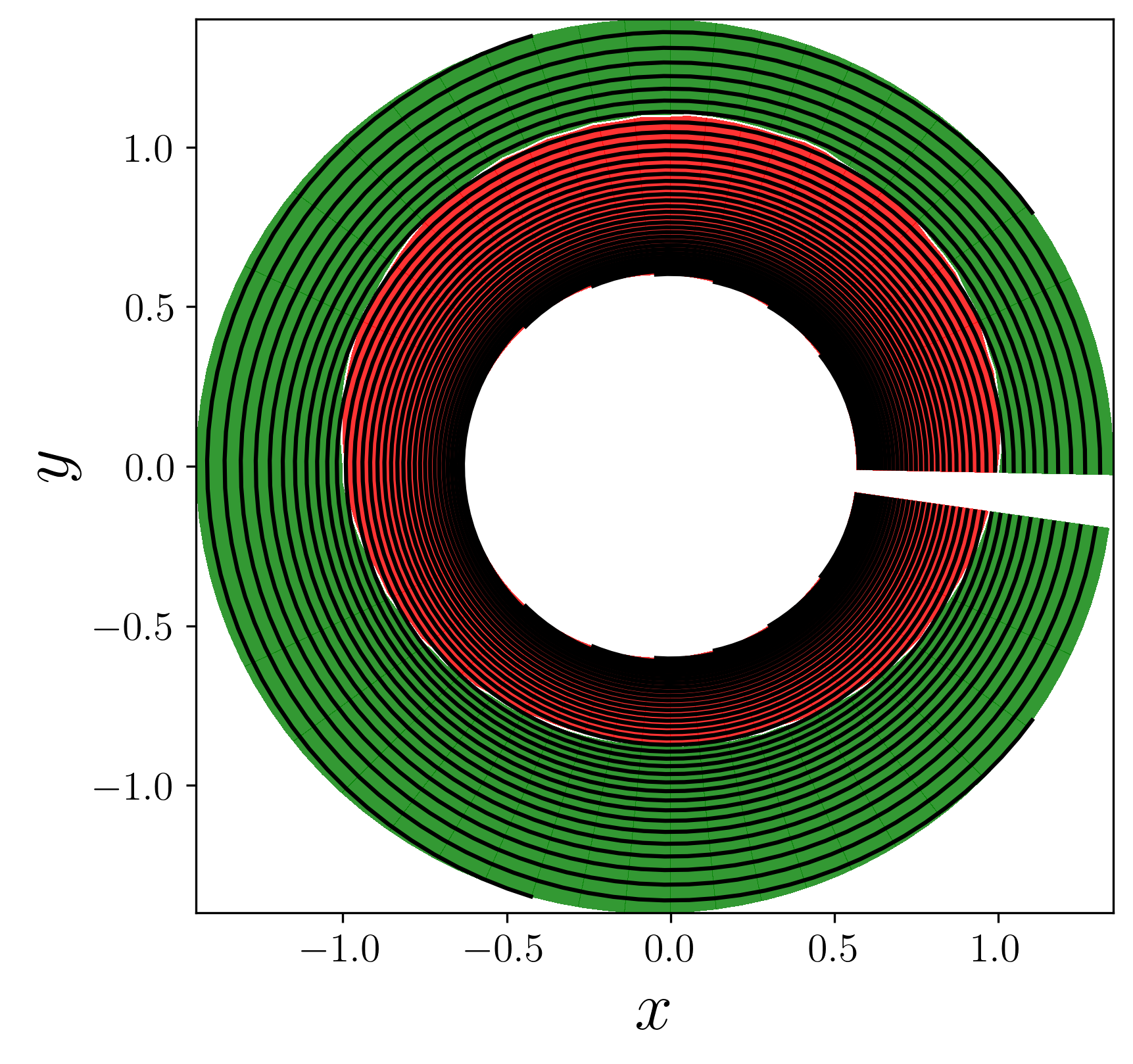}}
\caption{(a) $\Sigma$ for the (left) axisymmetric tokamak with $\eps=0$ and the (right) perturbed tokamak with $\eps=0.02$, for $C=5.5$.    
(b) Partition of $\Sigma$ for $C=5.5$, $\eps=0.02$ into $\Sigma^-$ (red), $\Sigma^+$ (green), separated by $\Sigma^0$, and level sets of $h = |B|$ on $\Sigma$.}
\label{fig:tok}
\end{figure}

Note from the righthand panel that some of the level curves of $|B|$ on $\Sigma$ cross from $\Sigma^+$ to $\Sigma^-$.  From this we deduce (following the discussion in the previous section) that some short bouncers (deeply trapped) drift into $\Sigma^-$ where they become unstable. 

Computation of $\text{\j}$ on $\Sigma^-$ is complicated by the fact that the unperturbed ZGC trajectories from $\Sigma^-$ are mostly heteroclinic (reaching critical points at both ends) rather than homoclinic (returning to the same critical point).  Under small perturbation, the values of $|B|$ at the two crossings of a fieldline with $\Sigma^-$ in general become different.  This implies that under small perturbation they may bounce just before reaching $\Sigma^-$ again or they may cross $\Sigma^-$ and make another poloidal revolution before approaching $\Sigma^-$ again.  They may bounce there or cross again, etc.  There may even be trajectories that never bounce.  Thus $\Sigma^-$ is divided into many components labelled by the number $N \in \{0,1,2\ldots\}$ of crossings with $\Sigma^-$ before bouncing. They are separated by the curves on which the fieldline from the given point of $\Sigma^-$ crosses $\Sigma^-$ some number $M$ of times at lower values of $|B|$ than it started and then reaches $\Sigma^-$ at a point with exactly the same value of $|B|$ as it started.  On moving the starting point on $\Sigma^-$ across the curve labelled by $M$, the number $N$ changes from something less than $M$ to something at least $M$, but they are not necessarily neighbours.  The function $\text{\j}$ has jump discontinuities at these curves.
Interpreting this picture becomes challenging, though the simplest option is just to keep the component $0$ and consider the rest of $\Sigma^-$ to go to the ``circulating'' class.

In addition, one should remember that different functions $\text{\j}^\pm$ are defined for each direction along the field, so one should plot two pictures of level curves of $\text{\j}$, and that the circulating classes for the two are in opposite directions.

Near $\Sigma^0$, the picture is particularly intricate.
Analysis of what happens near $C^4$-generic $\Sigma^0$ is carried out in Appendix~\ref{app:generic}.  
In particular, near a generic point of $\Sigma^0$ there are short bouncing segments in the well of a cubic (see Figure~\ref{fig:genSigma0}) and they make transitions to longer ones and vice versa.  
Necessary and sufficient conditions for the short bouncing class to make no transitions are derived in Appendix~\ref{app:generic}.
Also treated there are the longer bouncing classes that come close to $\Sigma^0$.
Breaking of axisymmetry in a tokamak induces many transitions between classes and hence some form of mixing in the core. 
Mixing in the core is not necessarily a bad thing, however; \cite{Boozer} makes the case for ``annular confinement''.

The picture will become clearer when we go to the exact version of the theory in the second half of this paper.
The effect of the drifts in FGCM is in general to replace the equilibria of ZGCM by periodic orbits.
We will analyse the effects of this after introducing strong isodrasticity, but for now, we conclude that for typical perturbation of the tokamak example, at any value of $\psi$ there are in general repeated transitions between different types of bouncing trajectory and passing trajectories.  They can produce relatively rapid diffusion in $\psi$, which is bad for confinement.  Hence the desire to make the field isodrastic.

The same issue about critical bouncers being critical at both ends rather than just one end holds for all quasi-symmetric fields.  This is because for a quasisymmetric field $|B|$ is constant along the lines of the symmetry field.

\section{Realisation of weak isodrasticity}
\label{sec:construct}
Can isodrastic fields be realised, outside omnigenity?  
Here we show how one can construct many weakly isodrastic mirror fields that are not omnigenous.  

Firstly, we construct such examples in the form of field strength as a function of fieldline coordinates.  Then we prove under some additional conditions that such a function can be realised by a divergence-free field in Euclidean $\R^3$.

\subsection{Construction in fieldline coordinates}

By fieldline coordinates for a magnetic field, we mean a pair of fieldline labels $u,v$ with independent derivatives, and (signed) arclength $s$ along the fieldline from a transverse reference surface.

Choose a positive $C^2$ function $h$ on a disk with coordinates $(u,v)$, with a non-degenerate minimum at $(0,0)$ and no other critical points, so its level sets are nested closed curves around the origin.  It will represent $|B|$ on $\Sigma^-$.  Extend $h$ to a $C^2$ function $\cB$ of $(s,u,v)$ for an interval of $s$ with $\cB(0,u,v)=h(u,v)$, such that along each line of constant $(u,v)$, $\cB$ has a non-degenerate local maximum at $s=0$, a minimum at some $s_m(u,v)>0$ and a first point $s_b(u,v)> s_m(u,v)$ at which $\cB=h$ with positive $s$-derivative.  $\cB$ will represent $|B|$ along the fieldlines.  The functions $s_m$ and $s_b$ are to be chosen $C^2$.  Most importantly, we require also that $$\j(u,v) = \int_{0}^{s_b(u,v)} \sqrt{2(h(u,v)-\cB(s,u,v))}\, ds$$ be a function of $h$, call it $\j = J(h)$.
There is a lot of freedom in these choices.  

Any magnetic field realising such a function $\cB$ is weakly isodrastic, but in general it is not omnigenous. Fields realising $\cB$ automatically have a flux function, namely the value of $h$ as a function of fieldline labels $(u,v)$.  By the isodrastic condition, any other flux-function has to have the same flux surfaces.  Thus it is omnigenous iff in addition, $j(E,u,v)=\int_\gamma \sqrt{2(E-\cB(s,u,v))}\, ds$ is a function of just $E$ and $h(u,v)$, where $\gamma$ is the segment of fieldline with $\cB(s,u,v)\le E$.  In particular, if it is omnigenous then $\cB(s_m(u,v),u,v)$ has to have the same value along each line with the same value of $h$.  It is easy to make counterexamples.


Concretely, let us take $\cB$ to be a cubic function of arclength along each fieldline: $$\cB(s) = c+ r^2 -a(r,\theta)s^2+b(r,\theta)s^3$$ along the fieldline starting from $(r,\theta)$ in polar coordinates on $\Sigma^-$.  One can take $a$ and $b$ in the form of polynomials $a(r,\theta) = \Re \sum_{n=0}^{N} a_n r^n e^{i n\theta}$, for example ($N=1$ suffices).
Then $h=c+r^2$ has level curves $r=$ constant, and 
$$\j = \int_0^{a/b} \sqrt{2(as^2-bs^3)}\, ds = \frac{4\sqrt{2} a^{5/2}}{15 b^2}.$$
Thus we have an isodrastic field iff $b^2 \propto a^{5/2}$ on $r=$ constant.  But the well depth (difference in $|B|$ between the local maximum and minimum) is $\frac{4 a^3}{27 b^2}$, which can be varied independently of keeping $\j$ constant on $r=$ constant.

Let $f(r^2)  = 1 + \frac{r^2}{1+r^2}$ and
\begin{align*}
     a(u,v)  = \frac{\bigg(1 + \alpha \left(\tfrac{1}{3}u^2 + \tfrac{2}{3}v^2\right)\bigg)^2}{f(u^2+v^2)^2},\quad b(u,v)  = \frac{\bigg(1 + \alpha \left(\tfrac{1}{3}u^2 + \tfrac{2}{3}v^2\right)\bigg)^{5/2}}{f(u^2+v^2)^3},
\end{align*}
where $\alpha\geq 0$ is a real parameter ($a$ and $b$ are expressed here in Cartesian rather than polar coordinates, to facilitate seeing that the result is smooth). For the field $\mathcal{B} = 1 + u^2 + v^2 -a s^2 + b s^3$, $h(u,v) = 1+u^2 + v^2$ and $s_m(u,v) = \tfrac{2}{3}\tfrac{a(u,v)}{b(u,v)}$, consistent with the above discussion. The separatrix action $\j(u,v)$ is given by
\begin{align*}
    \j(u,v) = \frac{4\sqrt{2} a^{5/2}}{15 b^2} = \frac{4\sqrt{2}}{15}f(u^2 + v^2).
\end{align*}
This field is therefore weakly isodrastic for each value of $\alpha$, as illustrated in Fig. \ref{fig:isodras_demo}.
\begin{figure}
    \centering
    \includegraphics[scale=.5]{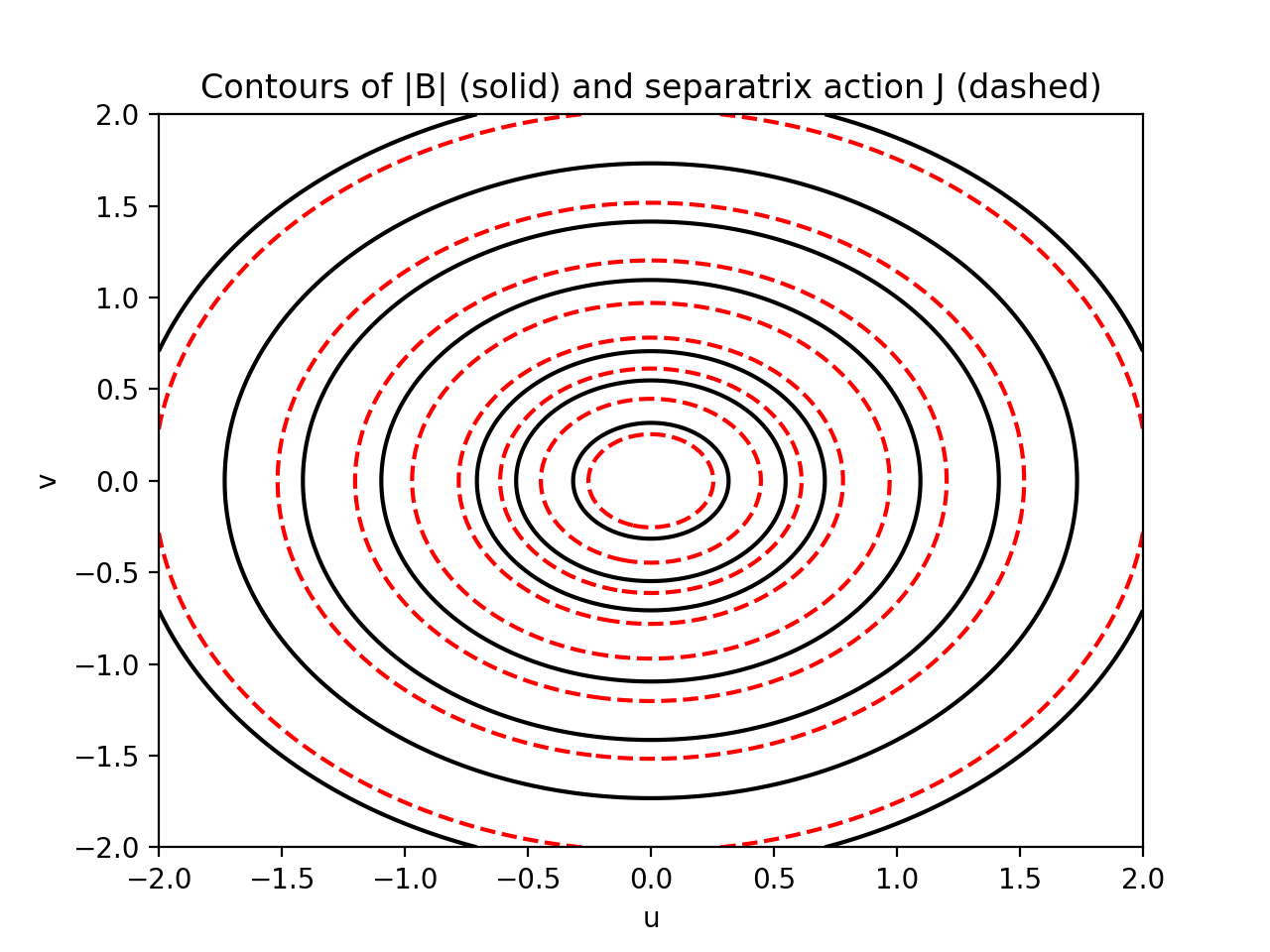}
    \caption{Contours of $h=|B|$ and $\j$ on $\Sigma^-$ for the isodrastic mirror example with $\alpha =0.3$.}
    \label{fig:isodras_demo}
\end{figure}
On the other hand,
the field strength at the local minimum $s_m$ is given by
\begin{align*}
    \mathcal{B}(s_m(u,v),u,v) & = 1 + u^2 + v^2 - \tfrac{4}{27}\bigg(1 + \alpha \left(\tfrac{1}{3}u^2 + \tfrac{2}{3}v^2\right)\bigg),
\end{align*}
which is not constant along the circles $h = \text{const.}$ when $\alpha\neq 0$, as shown in Fig. \ref{fig:modB_contours_flux}.
\begin{figure}
    \centering
    \includegraphics[scale=.7]{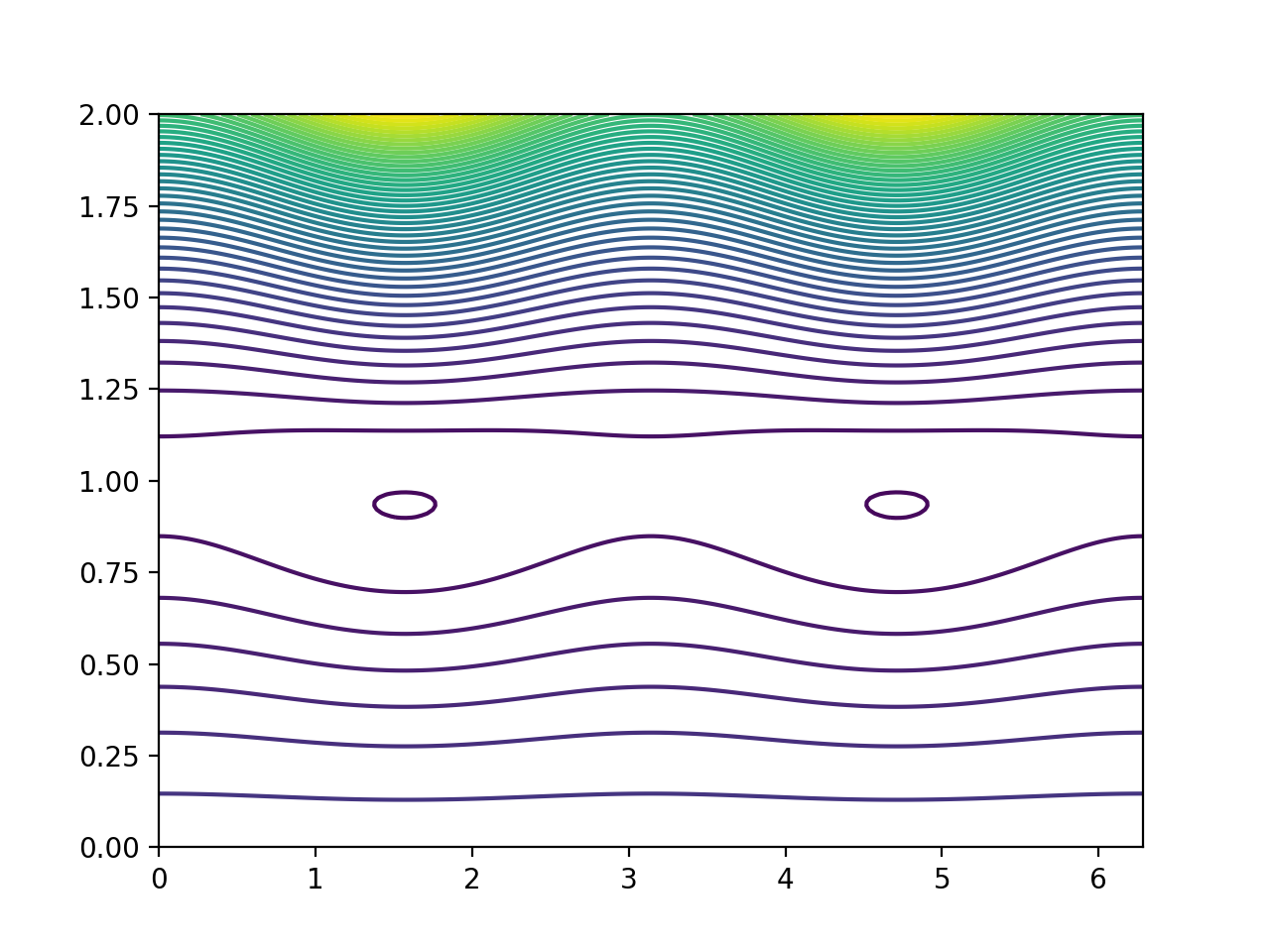}
    \caption{$\mathcal{B}$-contours on the flux surface $\sqrt{u^2 + v^2} = 1.2$ when $\alpha = 0.3$.}
    \label{fig:modB_contours_flux}
\end{figure}
It follows that the field is not omnigenous for nonzero $\alpha$. In fact the field is omnigenous iff $\alpha = 0$. Also, Figure \ref{fig:modB_contours_flux}  shows that this field is not pseudosymmetric; see Appendix \ref{app:ps}. Finally, note that a field that is not omnigenous cannot admit a rotation symmetry. Therefore our construction produces a truly $3$-dimensional weakly-isodrastic field.

\subsection{Realisability by divergence-free field in Euclidean space}
\label{sec:real}

In this subsection we address the general question whether a given function $\cB$ of fieldline coordinates can be realised as the field strength along the fieldlines of a divergence-free field in $\R^3$.  We obtain a positive answer locally if the function is analytic.

We suspect that analyticity is not really necessary, but it suffices for realising examples, such as the cubic examples of the previous subsection.  We believe also that although the result is local, if $c$ is large enough then it applies at least as far as $s = a/b$, thus containing all the bouncing trajectories of those examples.

Our construction makes a closed two-form $\beta$ representing the magnetic flux-form for a magnetic field $B$. Given a volume-form $\Omega$, it is immediate from this to construct $B$ as the unique vector field such that $i_B\Omega = \beta$, and $\beta$ being closed is equivalent to $B$ being divergence-free.

To make sure that $\beta$ is closed we will find a diffeomorphism $\phi$ from a neighbourhood $V$ of $(0,0,0)$ in $(s,u,v)$ space to a neighbourhood $U$ of a point $p$ in physical space such that the pullback $\phi^*\beta = du \wedge dv$.
This makes $u$ and $v$ locally into Clebsch coordinates for the field.  Then $\beta$ is closed because $du \wedge dv$ is closed.  $\phi^*\beta = du \wedge dv$ looks like a large constraint.  It could be relaxed to $\phi^*\beta = f(u,v)du \wedge dv$ for any positive function $f$, but we will see that there is still an immense amount of freedom for the construction.

The two remaining constraints are that $\partial_s \phi$ should be a unit vector ($s$ is supposed to represent arclength) and that $|B|\circ \phi = \cB$ (which can be written as $\phi^*|B|=\cB$).

Although the case of interest is $\R^3$ with the standard Euclidean metric, our construction can be done in an arbitrary 3D Riemannian manifold, so we start with the more general formulation.

\begin{defn}
Let $\beta$ be a nowhere-vanishing closed $2$-form on a Riemannian $3$-manifold $(M,g)$. Use $(s,u,v)$ to denote the standard Euclidean coordinate system on $\mathbb{R}^3$. A system of \textbf{Clebsch coordinates} for $\beta$ near $p\in M$ comprises a pair of open sets, $U\subset M$ containing $p$, and $V\subset \mathbb{R}^3$ containing $0$, together with a diffeomorphism $\varphi:V\rightarrow U$ such that
\begin{align*}
    \varphi^*\beta = du \wedge dv,
\end{align*}
and
\begin{align*}
    g_{\varphi(s,u,v)}(\partial_s\varphi(s,u,v),\partial_s\varphi(s,u,v)) = 1,\quad (s,u,v)\in V.
\end{align*}
\end{defn}

\begin{defn}
Let $\cB:V\subset \mathbb{R}^3\rightarrow \mathbb{R}$ be a positive smooth function. We say $\cB$ is \textbf{realizable} on a Riemannian $3$-manifold $(M,g)$ if there exists a nowhere-vanishing closed $2$-form $\beta$ on $M$ and a system of Clebsch coordinates $(p,U,V)$ for $\beta$ such that $\varphi^*|\beta| = \cB$. Here $|\beta|$ denotes the pointwise norm of $\beta$ defined by the inner product of $2$-forms induced by $g$.
\end{defn}

\begin{thm}
Let $g = dx^2 + dy^2 + dz^2$ denote the standard flat Riemannian metric on $\mathbb{R}^3\ni (x,y,z)$, and let $M\subset \mathbb{R}^3$ be an open set. A positive function $\cB:V\subset \mathbb{R}^3\rightarrow\mathbb{R}$ is realizable on $(M,g)$ if and only if there exist smooth functions $X,Y,Z:V\rightarrow \mathbb{R}^3$ that satisfy the system of partial differential equations
\begin{gather}
    \text{det}\begin{pmatrix}
    \partial_sX & \partial_sY & \partial_sZ\\
    \partial_uX & \partial_uY & \partial_uZ\\
    \partial_vX & \partial_vY & \partial_vZ\\
    \end{pmatrix} = \frac{1}{\cB(s,u,v)}\label{Clebsch_1}\\
    (\partial_sX)^2 + (\partial_sY)^2 + (\partial_sZ)^2 =1.\label{Clebsch_2}
\end{gather}
\end{thm}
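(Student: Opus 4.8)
The plan is to translate both sides of the claimed equivalence into statements about the map $\varphi=(X,Y,Z):V\to\R^3$ and its associated flux form, and then to let a single identity — expressing $\partial_s\varphi$ as a scalar multiple of the field $B$ — carry all the geometric content. Throughout I write $J=\det D\varphi$ for the Jacobian determinant of $\varphi$, which is exactly the $3\times3$ determinant in (\ref{Clebsch_1}) since determinants are invariant under transposition. I would recall three flat-space facts: a $2$-form $\beta$ corresponds to a vector field $B$ by $\beta=i_B\Omega$, with $|\beta|=|B|$; the Clebsch condition $\varphi^*\beta=du\wedge dv$ forces $\beta=d\tilde u\wedge d\tilde v$ with $\tilde u=u\circ\varphi^{-1}$ and $\tilde v=v\circ\varphi^{-1}$, so that $\beta$ is automatically closed and nowhere-vanishing; and correspondingly $B=\nabla\tilde u\times\nabla\tilde v$.

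The crux is the identity
\begin{equation}
\partial_s\varphi = J\,(B\circ\varphi).
\label{eq:sphi}
\end{equation}
I would prove it as follows. Differentiating $\tilde u\circ\varphi=u$ and $\tilde v\circ\varphi=v$ in $s$ gives $d\tilde u(\partial_s\varphi)=d\tilde v(\partial_s\varphi)=0$, so $\partial_s\varphi$ is annihilated by $d\tilde u$ and $d\tilde v$, hence parallel to $B=\nabla\tilde u\times\nabla\tilde v$; write $\partial_s\varphi=\lambda\,(B\circ\varphi)$. To pin down $\lambda$, evaluate the volume form on the coordinate frame: $\Omega(\partial_s\varphi,\partial_u\varphi,\partial_v\varphi)=(\varphi^*\Omega)(\partial_s,\partial_u,\partial_v)=J$, whereas $\Omega(B,\partial_u\varphi,\partial_v\varphi)=\beta(\partial_u\varphi,\partial_v\varphi)=(\varphi^*\beta)(\partial_u,\partial_v)=1$. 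Substituting $\partial_s\varphi=\lambda B$ into the first relation gives $\lambda=J$, which is (\ref{eq:sphi}). Taking norms and using $|B|=|\beta|$ yields
\begin{equation}
|\partial_s\varphi| = |J|\,\big(|\beta|\circ\varphi\big).
\label{eq:sphinorm}
\end{equation}

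With (\ref{eq:sphinorm}) in hand both directions are short. For the ``only if'' direction, suppose $\cB$ is realizable, so there are $\beta$, $U$, $V$ and $\varphi=(X,Y,Z)$ with $\varphi^*\beta=du\wedge dv$, $|\partial_s\varphi|=1$, and $\varphi^*|\beta|=\cB$. Equation (\ref{Clebsch_2}) is precisely the unit-speed condition $|\partial_s\varphi|=1$. Feeding $|\partial_s\varphi|=1$ and $|\beta|\circ\varphi=\cB$ into (\ref{eq:sphinorm}) gives $|J|=1/\cB$, i.e.\ $J=\pm1/\cB$; replacing the coordinate $s$ by $-s$, which preserves both $\varphi^*\beta=du\wedge dv$ and the unit-speed condition while flipping the sign of $J$, we may take $J=1/\cB$, which is (\ref{Clebsch_1}). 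Conversely, given smooth $X,Y,Z$ solving (\ref{Clebsch_1}) and (\ref{Clebsch_2}), set $\varphi=(X,Y,Z)$. Then $J=1/\cB\neq0$, so by the inverse function theorem, after shrinking $V$ about $0$ so $\varphi$ is injective and translating (using translation-invariance of $g$) so that $U:=\varphi(V)\subset M$, the map $\varphi$ is a diffeomorphism onto $U$; put $p=\varphi(0)$. Define $\beta:=(\varphi^{-1})^*(du\wedge dv)$, a nowhere-vanishing closed $2$-form with $\varphi^*\beta=du\wedge dv$. Now (\ref{eq:sphinorm}) with $J=1/\cB>0$ gives $|\beta|\circ\varphi=|\partial_s\varphi|/|J|=\cB$ by (\ref{Clebsch_2}), i.e.\ $\varphi^*|\beta|=\cB$, so $(p,U,V,\varphi,\beta)$ realizes $\cB$.

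I expect the only friction to be bookkeeping rather than a genuine obstacle. Two points deserve care: fixing the sign in (\ref{Clebsch_1}), which is absorbed by the $s\mapsto-s$ symmetry and the orientation convention implicit in writing $+1/\cB$; and the domain issues in the ``if'' direction — shrinking $V$, translating so the image lands in $M$, and producing $\beta$ on $M$ rather than merely on $U$. The last is harmless because the Clebsch structure and all three defining conditions are local near $p$ and are tested only through $\varphi$ on $U$; one may regard realizability as a statement about the germ at $p$, or simply take $M=U$. The genuinely content-bearing step is (\ref{eq:sphi}), and I would present that computation with the most care.
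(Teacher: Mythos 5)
Your proof is correct and follows essentially the same route as the paper's: both hinge on showing that $\partial_s\varphi$ lies in the kernel of $\beta$ and is therefore a scalar multiple of $B$, with the Jacobian/flux normalization and the unit-speed condition then combining to give $J\,\cB=1$. The only differences are cosmetic --- you fix the proportionality constant as $J$ by evaluating the volume form and then take norms, whereas the paper fixes it as $1/|\bm{B}|$ from the norm and then pulls back the volume identity; your explicit $s\mapsto -s$ sign adjustment is a small point the paper glosses over.
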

\begin{proof}
First suppose that $\cB$ is realizable. Then we have a nowhere-vanishing closed $2$-form $\beta$ on $M$ and a diffeomorphism $\varphi: V\rightarrow U:(s,u,v)\mapsto (X,Y,Z)$ such that $\varphi^*\beta = du\wedge dv$, $g(\partial_s\varphi,\partial_s\varphi) = 1$, and $\varphi^*|\beta| = \cB$. Let $\Omega = dx\wedge dy\wedge dz$ denote the Euclidean volume form on $U$ and let $e_s$ denote the standard basis vector along the $s$-axis in $V$. There is a unique nowhere-vanishing vector field $\bm{B}$ on $U$ such that $\iota_{\bm{B}}\Omega = \beta$. We also know that the vector field $\partial_s = \varphi_*e_s$ on $U$ satisfies $\iota_{\partial_s}\beta = 0$, since 
\begin{align*}
    \iota_{\partial_s}\beta = \iota_{\partial_s}\varphi_*(du\wedge dv) = \varphi_*(\iota_{e_s}du\wedge dv) = 0.
\end{align*}
Since the null space for $\beta$ is one-dimensional by hypothesis there must therefore be a smooth function $\alpha$ with $\partial_s = \alpha\,\bm{B}$. But since
\begin{align*}
    1 = g_{\varphi(s,u,v)}(\partial_s\varphi(s,u,v),\partial_s\varphi(s,u,v)) = g(\partial_s,\partial_s)\circ \varphi,
\end{align*}
$\partial_s$ must be a unit vector. Therefore the function $\alpha$ is given by $\alpha = 1/|\bm{B}| = 1/|\beta|$ (using the standard result that $|\beta| = |\bm{B}|$). We arrive then at the useful identity
\begin{align*}
    \iota_{|\beta|\partial_s}\Omega = \iota_{\bm{B}}\Omega = \beta.
\end{align*}
Upon introducing the Jacobian determinant
\begin{align*}
    J = \text{det}\begin{pmatrix}
    \partial_sX & \partial_sY & \partial_sZ\\
    \partial_uX & \partial_uY & \partial_uZ\\
    \partial_vX & \partial_vY & \partial_vZ\\
    \end{pmatrix},
\end{align*}
we may express the pullback of the previous identity along $\varphi$ as 
\begin{align*}
    \iota_{\cB\,e_s}\,J\,ds\wedge du\wedge dv = du\wedge dv,
\end{align*}
which implies $J\,\cB = 1$. This formula, together with the condition 
$$1 = g_{\varphi(s,u,v)}(\partial_s\varphi(s,u,v),\partial_s\varphi(s,u,v)),$$ recovers the desired system of PDEs \eqref{Clebsch_1} and \eqref{Clebsch_2}.

Conversely, suppose that $\varphi=(X,Y,Z)$ satisfies \eqref{Clebsch_1} and \eqref{Clebsch_2}. Since $\cB$ is nowhere-vanishing, $\varphi$ is a local diffeomorphism. By restricting to an appropriate open set we may therefore assume it is a diffeomorphism onto its image. The diffeomorphism $\varphi$ defines a system of Clebsch coordinates for the $2$-form $\beta = \varphi_*(du\wedge dv)$ by \eqref{Clebsch_2}. But since $J\,\cB = 1$ by \eqref{Clebsch_1}, we may also write $\beta$ as
\begin{align*}
\beta = \varphi_*(\cB\,J\,\iota_{e_s}\,ds\wedge du\wedge dv) = \iota_{\varphi_*(\cB\,e_s)}\Omega.
\end{align*}
Using $|\varphi_*e_s| = 1$, we therefore have
\begin{align*}
    |\beta| = |\varphi_*(\cB\,e_s)| = \varphi_*\cB\,|\varphi_*e_s| = \varphi_{*}\cB,
\end{align*}
which says that $\cB$ is realizable.
\end{proof}

\begin{thm}
For each real analytic positive $\cB$ there is an open set $V\subset \mathbb{R}^3\ni (s,u,v)$ containing the origin and real analytic functions $X,Y,Z$ defined on $V$ such that
\begin{gather}
    \text{det}\begin{pmatrix}
    \partial_sX & \partial_sY & \partial_sZ\\
    \partial_uX & \partial_uY & \partial_uZ\\
    \partial_vX & \partial_vY & \partial_vZ\\
    \end{pmatrix} = \frac{1}{\cB(s,u,v)}\label{Clebsch_1_thm}\\
    (\partial_sX)^2 + (\partial_sY)^2 + (\partial_sZ)^2 =1.\label{Clebsch_2_thm}
\end{gather}
\end{thm}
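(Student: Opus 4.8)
The plan is to read \eqref{Clebsch_1_thm}--\eqref{Clebsch_2_thm} as an underdetermined first-order PDE system for $\varphi=(X,Y,Z)$, reduce it to a determined system in Cauchy--Kovalevskaya (CK) normal form by a gauge choice, and then invoke the analytic CK existence theorem. I treat $s$ as the evolution (``time'') variable and $(u,v)$ as the transverse variables. Writing $e_u=\partial_u\varphi$, $e_v=\partial_v\varphi$ and $N=e_u\times e_v$, equation \eqref{Clebsch_1_thm} reads $\partial_s\varphi\cdot N = 1/\cB$ and \eqref{Clebsch_2_thm} reads $|\partial_s\varphi|^2=1$. Since the two equations constrain only $\partial_s\varphi$, the restriction $\varphi(0,u,v)$ may be prescribed freely, and I would take the flat disc $\varphi(0,u,v)=(au,av,0)$ for a constant $a>0$ to be fixed below; then $e_u(0)=(a,0,0)$ and $e_v(0)=(0,a,0)$ are independent and $N(0)=(0,0,a^2)$, so $|N(0)|=a^2$.

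Next I would reduce to a determined system. Near $s=0$ the triple $\{e_u,e_v,N\}$ is a basis, so I write $\partial_s\varphi = p\,e_u+q\,e_v+r\,N$. Since $N\perp e_u,e_v$, equation \eqref{Clebsch_1_thm} gives $r|N|^2=1/\cB$, hence $r=1/(\cB\,|N|^2)$, an analytic function of $(s,u,v)$ and of $e_u,e_v$ wherever $|N|\neq0$. Equation \eqref{Clebsch_2_thm} then becomes the single quadratic $p^2|e_u|^2+2pq\,(e_u\cdot e_v)+q^2|e_v|^2 = 1-r^2|N|^2$ in $(p,q)$, whose left-hand side is a positive-definite form by Cauchy--Schwarz. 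This is one equation for two unknowns, reflecting the one remaining degree of freedom; I would fix the gauge by setting $q=0$, which leaves $p=\sqrt{\,(1-r^2|N|^2)/|e_u|^2\,}$.

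It then remains to check analyticity of the resulting right-hand side and apply CK. Since $r^2|N|^2 = 1/(\cB^2|N|^2)$, the radicand satisfies $1-r^2|N|^2>0$ precisely when $|N|>1/\cB$; at the origin this reads $a^2>1/\cB(0,0,0)$, which I arrange by taking $a$ large (possible because $\cB$ is positive, hence bounded below near the origin). By continuity this strict inequality, and the independence of $e_u,e_v$, persist on a neighbourhood of the origin, where the radicand is bounded away from $0$, the square root is analytic, and $p$, $q=0$, $r$ are analytic functions of $(s,u,v,e_u,e_v)$. Hence the system takes the CK form $\partial_s\varphi=\Phi(s,u,v,\partial_u\varphi,\partial_v\varphi)$ with $\Phi=p\,e_u+r\,N$ analytic and analytic initial data $\varphi(0,\cdot,\cdot)$, and the Cauchy--Kovalevskaya theorem yields a unique analytic solution $\varphi$ on some open $V\ni0$. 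Because $r$ was defined so that $\partial_s\varphi\cdot N=1/\cB$ and $p$ so that $|\partial_s\varphi|^2=1$, this $\varphi$ satisfies \eqref{Clebsch_1_thm} and \eqref{Clebsch_2_thm} identically, completing the argument.

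The main obstacle is the positivity of the radicand $1-r^2|N|^2$: this is exactly the geometric requirement that a unit vector $\partial_s\varphi$ can carry the prescribed normal component $1/(\cB\,|N|)\le1$, and it would fail for a generic transverse slice if $\cB$ were small. It is the reason the result is only local and the reason one exploits the freedom in choosing the initial surface, enlarging it until $|N|$ dominates $1/\cB$. The gauge choice $q=0$ is inessential --- any analytic prescription of the tangential direction of $\partial_s\varphi$ would serve --- but it is the cleanest way to turn the single quadratic constraint into an explicit analytic formula suitable for CK.
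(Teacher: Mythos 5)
Your argument is correct, and it runs on the same engine as the paper's proof --- gauge-fix the underdetermined system into Cauchy--Kowalevskaya normal form and invoke analytic local existence --- but the reduction is genuinely different. The paper fixes the gauge by the ansatz $Z=v$, collapsing the problem to two equations for $(X,Y)$, and then solves the resulting quadratic for $(\partial_sX,\partial_sY)$ in coordinates, yielding explicit but opaque right-hand sides involving $\partial_uX/\partial_uY$ whose analyticity is checked near the particular seed $X_0=s+u/B_0$, $Y_0=u/B_0$ (chosen so that the field points principally along the $X$-axis). You instead keep all three components, decompose $\partial_s\varphi$ in the moving frame $\{e_u,e_v,N\}$ with $N=e_u\times e_v$, read off the normal component $r=1/(\cB\,|N|^2)$ directly from \eqref{Clebsch_1_thm}, and spend the residual tangential freedom on the gauge $q=0$. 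This buys two things: the solvability obstruction becomes geometrically transparent (a unit vector can supply normal flux $1/(\cB\,|N|)$ only if $|N|\ge 1/\cB$), and you remove it cleanly by dilating the initial disc, whereas in the paper the analogous condition is arranged implicitly through the choice of seed (whose initial surface has $|N_0|=\sqrt{2}/B_0>1/B_0$). Your formulation also avoids the divisions by $\partial_uY$ and $\partial_uX/\partial_uY$ that clutter the paper's expressions. What the paper's version buys in exchange is concreteness: the ansatz $Z=v$ keeps one Clebsch label equal to a Cartesian coordinate and produces formulas one could actually integrate for a given $\cB$. Both arguments are purely local, and your closing remark on why locality is intrinsic is consistent with the paper's own caveats.
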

\begin{proof}
The proof is an application of the Cauchy-Kowalevski theorem \cite{RR}. There is still a lot of freedom, so we will establish existence of a solution with $Z = v$. For such solutions the PDE system reduces to
\begin{gather*}
    \partial_sX\,\partial_uY - \partial_sY\,\partial_uX = \frac{1}{\cB(s,u,v)}\\
    (\partial_sX)^2 + (\partial_s Y)^2 = 1.
\end{gather*}
Let $B_0 = \cB(0,0,0)$. Observe that $X_0 = s + u/B_0$, $Y_0 = u/B_0$ solves the system at the origin since
\begin{align*}
    &\partial_s X_0\,\partial_u Y_0 - \partial_s Y_0\,\partial_u X_0  = B_0^{-1}\\
    &(\partial_sX_0)^2 + (\partial_sY_0)^2 = 1.
\end{align*}
We will therefore also restrict our search to solutions near $(X_0,Y_0)$ (this means $B$ is principally along the $X$-direction).

Assuming $(X,Y)$ is close to $(X_0,Y_0)$, we may reformulate the reduced system by solving for $\partial_s X$ and $\partial_s Y$ according to
\begin{align*}
    \partial_s Y  & = \frac{\partial_u X/\partial_u Y}{\cB\,\partial_uY}\left(\frac{-1 + \sqrt{1 - [1+(\partial_u X/\partial_u Y)^2][1-(\cB\,\partial_u Y)^2]/(\partial_uX/\partial_u Y)^2}}{1+(\partial_u X/\partial_u Y)^2}\right)\\
    \partial_s X & = \frac{1}{\cB\,\partial_u Y} + \frac{(\partial_u X/\partial_u Y)^2}{\cB\,\partial_uY}\left(\frac{-1 + \sqrt{1 - [1+(\partial_u X/\partial_u Y)^2][1-(\cB\,\partial_u Y)^2]/(\partial_uX/\partial_u Y)^2}}{1+(\partial_u X/\partial_u Y)^2}\right).
\end{align*}
Since the right-hand-sides of these formulae are real analytic near 
\[
\begin{pmatrix}
s\\
u\\
v\\
X\\
Y\\
\partial_sX\\
\partial_uX\\
\partial_vX\\
\partial_sY\\
\partial_uY\\
\partial_vY
\end{pmatrix}=\begin{pmatrix}
0\\
0\\
0\\
0\\
0\\
1\\
B_0^{-1}\\
0\\
0\\
B_0^{-1}\\
0
\end{pmatrix},
\]
the Cauchy-Kowalevski theorem implies that the initial value problem
\begin{align*}
    X(0,u,v) &= B_0^{-1}u\\
    Y(0,u,v) & = B_0^{-1}u,
\end{align*}
has a unique analytic solution in some open neighborhood of $(s,u,v) = 0$ in $\mathbb{R}^3$. This $X$ and $Y$, together with $Z=v$, comprise the desired solution of the original PDE system \eqref{Clebsch_1_thm}-\eqref{Clebsch_2_thm}.
\end{proof}


Estimating a neighbourhood in which the Cauchy-Kowalevski theorem applies requires some work.  An example where this has been done is \cite{GRR}, in which a given analytic magnetic field on a given 2D region with analytic boundary is proved to have a vacuum field extension to a neighbourhood.

Perhaps there are alternative proofs not requiring analyticity, but the results are likely to still be local in character.

The real challenge is to make a non-trivial isodrastic stellarator field.  That will have to wait for a future publication.  The cases of heteroclinic and homoclinic connections have to be addressed.

We close this section by commenting that, despite claims in the literature, it is not clear whether omnigenity can be realised outside axisymmetry.  References like \cite{CS,PCHL} construct $|B|$ as a function of Boozer coordinates, but it is not evident that one can realise an arbitrary such function as the strength of a divergence-free field in $\R^3$.

\section{Strong Isodrasticity}
\label{sec:exact}
The treatment of weak isodrasticity rests on assuming conservation of the adiabatic invariant $j$, but that assumption fails near the transitions.

So now we develop a version that does not assume conservation of $j$.  We derive an exact condition for absence of transitions, which we call ``strong isodrasticity''.  We illustrate it in Section~\ref{sec:illust} and elaborate on the theory in Section~\ref{sec:splitting}.  In Section~\ref{sec:Mel} we recover the results for weak isodrasticity as a first-order approximation.

The key idea is that $\Sigma^- \times \{v_\pl=0\}$ is an approximate normally hyperbolic submanifold for guiding-centre motion.  A {\em normally hyperbolic submanifold (NHS)} for a dynamical system is an invariant submanifold such that any tangential contraction or expansion is weaker than normal contraction or expansion, respectively (precise specification of this property is technical, see \cite{F,HPS}, or \cite{K} for a tutorial).  An approximate NHS is a submanifold that is close to being tangent to the vector field and similar tangential versus normal contraction and expansion comparisons hold.  

It follows from the theory of NHS that there is a locally unique true NHS $N^-$ near $\Sigma^-\times\{0\}$ for the guiding-centre dynamics.  In general, computing the true NHS near an approximate one is hard, but in this context the approximate NHS consists of equilibria so is a ``slow manifold'' and there is an algorithm to compute higher-order slow manifolds to arbitrary order (see \cite{M04} for an indication of how to get started, though higher than first order is less straightforward than that reference would lead one to believe, and \cite{Bu2} for more).  Furthermore, in our context, for $\mu>0$ the system is Hamiltonian and the initial slow manifold is symplectic (meaning that the symplectic form is non-degenerate on it)  and there is a streamlined procedure to compute arbitrarily high-order symplectic slow manifolds (\cite{M04} with the same caveat).  Even more, in our context, the resulting NHS has only 1DoF so consists principally of periodic orbits plus some equilibrium points and homoclinic or heteroclinic orbits between them.

For this discussion, assuming $\mu \ne 0$, it is simplest to treat FGCM in a scaled time $\tau = \sqrt{\frac{\mu}{m}} t$, scaling velocity to $u = \sqrt{\frac{m}{\mu}} v$ and magnetic moment to $\tilde{\mu} = \frac{m}{e^2}\mu$.  
Then FGCM becomes
\begin{eqnarray}
\frac{dX}{d\tau} &=& \frac{1}{\tilde{B}_\pl} (u\tilde{B} + \sqrt{\tilde{\mu}}\, b\times\nabla|B|) \\
\frac{du}{d\tau} &=& -\frac{\tilde{B}\cdot\nabla |B|}{\tilde{B}_\pl} \\
\tilde{B} &=& B + \sqrt{\tilde{\mu}}\, u\ \curl\, b,
\end{eqnarray}
and is the Hamiltonian dynamics of the scaled Hamiltonian and symplectic form
\begin{eqnarray}
\tilde{H} = \tfrac{1}{\mu} H &=& \tfrac12 u^2 + |B| \\
\tilde{\omega} = \tfrac{1}{\sqrt{m\mu}}\omega &= & \tfrac{1}{\sqrt{\tilde{\mu}}} \beta +\, d(u b^\flat). \label{eq:scaledomega}
\end{eqnarray}
This scaling reduces the set of parameters to just $\sqrt{\tilde{\mu}}$.  For the excluded limiting case $\tilde{\mu}=0$, the inverse square root in $\tilde{\omega}$  looks singular, but recall that it is the inverse of the symplectic form (the Poisson bracket) that gives the dynamics; the Poisson bracket is degenerate at $\tilde{\mu}=0$ leading to the conservation of fieldline (this is a case of Casimirs for degenerate Poisson brackets, e.g.~\cite{MR}). Thus the dynamic for $\tilde{\mu}=0$ is a well defined case, namely the motion of a unit mass in potential $|B|$ along each fieldline.  
This scaling also allows one to extend to higher-order guiding-centre approximations (with corrections to $\tilde{H}$ and $\tilde{\omega}$) that are relevant for high energy, in particular for the $\alpha$-particles produced by $D-T$ fusion.
One could also non-dimensionalise arclength $s$ by a typical lengthscale $\ell$ for variation of $B$, $B$ by a typical field-strength $B_0$, and $\tilde{\mu}$ by $1/B_0$, but little is gained by this.

Applying the symplectic slow manifold method of \cite{M04} to leading order in $\sqrt{\tilde{\mu}}$ produces $N^-$ as a graph over $\Sigma^-$ (see Appendix~\ref{app:slow}). There is a displacement tangent to $\Sigma$, which plays negligible role, and a scaled parallel velocity
$$u = U(x) =  \sqrt{\tilde{\mu}} \frac{\Omega(b,\nabla|B|',\nabla |B|)}{|B||B|''}.$$
Approximations to $N^-$ can alternatively be computed by expanding and solving the PDE expressing invariance of a graph to desired order (this may appear in a separate paper).

The dynamics on $N^-$ is given by the restrictions of the guiding-centre Hamiltonian and symplectic form to it (the restriction of the symplectic form is non-degenerate).  Being 2D, the bounded trajectories are mostly periodic, the exceptions being equilibria and trajectories connecting them. 

NHS have forward and backward contracting submanifolds $W^\pm$ (usually called stable and unstable manifolds respectively, but that terminology is inconsistent with the concepts of stable and unstable sets), consisting of the set of points whose trajectory in the stated direction of time converges to the NHS.  They are made up of sub-submanifolds $W^\pm(x)$ for each point $x$ of the NHS (Arnol'd's ingoing and outgoing ``whiskers''\cite{Ar}), consisting of the set of points whose trajectory in the stated direction of time converges together with the trajectory of $x$.

To prevent transitions, the main part of our strong isodrastic condition is that the relevant branches of $W^\pm$ coincide, forming ``separatrices'':~invariant submanifolds that separate motions of different types.  A familiar example is the separatrices $\tfrac12 p^2 = 1-\cos \theta$ for the pendulum, which separate librating motion from rotating motion.  There the NHS is just a saddle point in 2D, but the same idea extends to higher dimensions (2D NHS in 4D in our case). 

Perfect separatrices are achieved by integrable systems. In 2DoF, integrability corresponds to a continuous symmetry.  In the GCM context,  integrability is implied by quasisymmetry \cite{BKM}, but perfect quasisymmetry is perhaps not achievable outside of axisymmetry.  Even if one allows a velocity-dependent symmetry (as in weak quasisymmetry) \cite{BKM2}, we are not aware of any exact examples.

Integrable systems are not the only way to obtain perfect separatrices in Hamiltonian systems, however; there are constructions with perfect separatrices that are not integrable (see Appendix~\ref{app:sep}).  Thus there is hope that one might be able to achieve this for GCM.

A little care is required in the above construction of $N^-$, however, because the theory of NHS requires $|B|'' \le c$ for some constant $c < 0$ (depending on the perturbation size), so it fails near the boundary $\Sigma^0$ of $\Sigma^-$ (if it has boundary).  For cases with no $\Sigma^0$, like the mirror machine, nothing needs doing, but for cases like the tokamak, one has potentially to exclude a neighbourhood of $\Sigma^0$ in the construction of $N^-$.
Indeed, as will be described in the next section, when $\mu$ is turned on, $N$ for this example truly develops a gap around $\Sigma^0$.  Nonetheless, we will see that a good understanding of $N^-$ can be obtained.

To complete the strong isodrastic condition, we have to deal with the issue that if $N^-$ has a poorly defined edge then GCM trajectories on it might fall off its edge.  So we require that the above mentioned potential failure of continuation of $\Sigma^-$ to $N^-$ near $\Sigma^0$ does not occur.  Specifically, we ask for $\Sigma^{-0}\times\{0\}$ to continue to an invariant submanifold $N^{-0}$ with boundary consisting of a NHS $N^{-}$ and its boundary $N^0$.  For $N^-$ to be invariant, $N^0$ has to also be invariant. Being 1D 
, the invariance condition for $N^0$ is just that $\tilde{H}=|B|+\frac12 u^2$ is constant on it.

We suspect that the above problem does not occur if (i) $\Sigma^0$ is generic, as per Appendix~\ref{app:generic}, and (ii) $|B|$ is constant on it, as for weak isodrasticity, but have not established this (see Appendix~\ref{app:pers+} for some discussion). 
So for present purposes we make the following definition.


\begin{defn} A magnetic field is {\em strongly isodrastic} if $\Sigma^{-0} \times \{0\}$ continues to a maximal invariant submanifold $N^{-0}$ with boundary for guiding-centre motion for a range of $\sqrt{\tilde{\mu}}>0$, which can be decomposed into normally hyperbolic $N^-$ and its boundary $N^0$, the relevant branches of the contracting submanifolds of $N^-$ coincide, and $\tilde{H}$ is constant along $N^0$.
\end{defn}

In the case that there is no $\Sigma^0$ and hence no $N^0$, the continuation is guaranteed, so strong isodrasticity is just the coincidence of $W^\pm$.  This applies to many mirror fields.  But in tokamak and quasisymmetric stellarators, $\Sigma^0$ is an essential feature and thus its continuation to an invariant $N^0$ and the continuation of $N^-$ right up to $N^0$ is an additional consideration for isodrasticity.

In the next section, we will illustrate how strong isodrasticity is in general lost for perturbations of axisymmetric mirror and tokamak fields.  This will lead to a quantification of failure of isodrasticity that could be useful for reducing it.

The definition allows also for use of higher-order guiding-centre approximations, relevant to the alpha-particles generated by fusion of $D$ and $T$ for example.

\section{Illustrations of the exact picture}
\label{sec:illust}
We illustrate the ideas of the previous section (construction of normally hyperbolic submanifolds for FGCM and their contracting manifolds) by a mirror machine and tokamak again.

\subsection{Mirror machine}
\label{sec:mirror}
For a mirror machine of the type described in Section~\ref{sec:mir} (not restricted to axisymmetry), there is a non-degenerate saddle point of $|B|$ near the centre of each coil.  For the gradient field $\nabla |B|$, each of them has one-dimensional downhill subspace and two-dimensional uphill subspace.  They give unstable equilibrium points of guiding-centre dynamics with $v=0$.  They are each surrounded by a family of periodic orbits of guiding-centre motion, called {\em Lyapunov orbits}, which form the 2D centre manifold of the equilibrium point.  The periodic orbits are hyperbolic and the centre manifold is normally hyperbolic.  This is a case of a general phenomenon for Hamiltonian systems with an index-one saddle, understood by Conley in the context of celestial mechanics \cite{Co}.  The forward contracting submanifold of the periodic orbit at given energy separates trajectories that bounce from those that pass over the saddle.  The flux of energy-surface volume passing over the saddle at given energy is the action of the corresponding periodic orbit \cite{M90}. 
This ``flux over a saddle'' picture is the basis for the current subsection.
\cite{R18} validated the flux formula of~\cite{M90} on a 2DoF four-well potential energy surface, using a numerical method for computing hyperbolic periodic orbits and their forward and backward contracting submanifolds that we shall use again here. 

The field for the two-coil example used earlier involves elliptic integrals, which turned out to be tedious to deal with for the exact approach.  So we switched to a mirror field based on \cite{G+}.  After scaling the field strength to 1 at $r=0,z=\frac{\pi}{2k}$, theirs is an axisymmetric vacuum field
$$B^z = 1 - a \cos kz\, I_0(kr), \quad B^r = -a \sin kz\, I_1(kr),\quad B^\phi = 0,
$$
with $I_j$ being modified Bessel functions. A corresponding vector potential, written as a 1-form, is $A_\phi\,d\phi$, with $A_\phi = \frac{1}{2}r^2 - r\,a\, k^{-1}\,\cos(k\,z)\,I_1(k\,z)$.  We take $a \in (0,1)$ to avoid introducing nulls on the axis.  The field is periodic in $z$ but they consider one period to be a model for a mirror field.  $|B|$ has saddles at $z=\pm \pi/k$.  

For realism one should concentrate on just a core $r<R(z)$ for some relatively small function $R$.  In particular, the field has a ring of nulls on $z=0$ at the radius $r_1$ such that $I_0(kr_1)=1/a$ and $R(0)$ should be taken less than $r_1$ (the same issue of nulls arises for the two-coil example).

To simplify treatment of transitions, we add a similar vacuum field of twice the period to break the reflection symmetry about $z=0$, so
$$
B^z = 1 - a \cos kz\, I_0(kr)- a \eta \sin \tfrac{kz}{2} I_0(\tfrac{kr}{2}), \quad B^r = -a \sin kz\, I_1(kr)+a \eta \cos \tfrac{kz}{2} I_1(\tfrac{kr}{2}),
$$
with $\eta \in (0,4)$, thereby making the upper saddle weaker than the lower one so that we can study transitions involving passing through the top alone, as for the two-coil example.  We restrict to $a< (1+\eta^2/8)^{-1}$ so as not to introduce zeroes on the axis. A potential for this field is given by $A_\phi\,d\phi$, with
\begin{align*}
    A_\phi & =\tfrac{1}{2}r^2 - r\,a\, k^{-1}\,\cos(k\,z)\,I_1(k\,z) -2\,k^{-1}\,r\,a \eta\,\sin(k\,z/2)\,I_1(k\,r/2).
\end{align*}
The saddles remain at $z=\pm \pi/k$; indeed there is still reflection symmetry about $z=\pm \pi/k$.  In particular, $\Sigma^-$ is the two planes $z = \pm \pi/k$. The parameter $k$ can be scaled to any desired value; we will take $k = 2.0$ in Figures~\ref{fig:bottle}--\ref{fig:mmmodel1958-axiasym-manifolds}.

The saddles have $|B| = B_\pm = 1+a (1\mp \eta)$.
The saddle at $z=+\pi/k$ is the weaker one and it is the one on which we will focus attention.  It is surrounded by a family of periodic orbits of guiding-centre motion.  Indeed the plane $z=\pi/k, v_\pl=0$, is invariant and the dynamics is a drift around the axis.  Thus at energy $E > \mu B_+$ there is a periodic orbit at $z=\pi/k$ with radius such that $|B|=E/\mu$.  If also $E<\mu B_-$ then the region accessible to the guiding centre has the form of a bottle (Figure~\ref{fig:bottle}(a)).  
\begin{figure}[htbp] 
\centering
\subfigure[]{\includegraphics[width=0.45\textwidth]{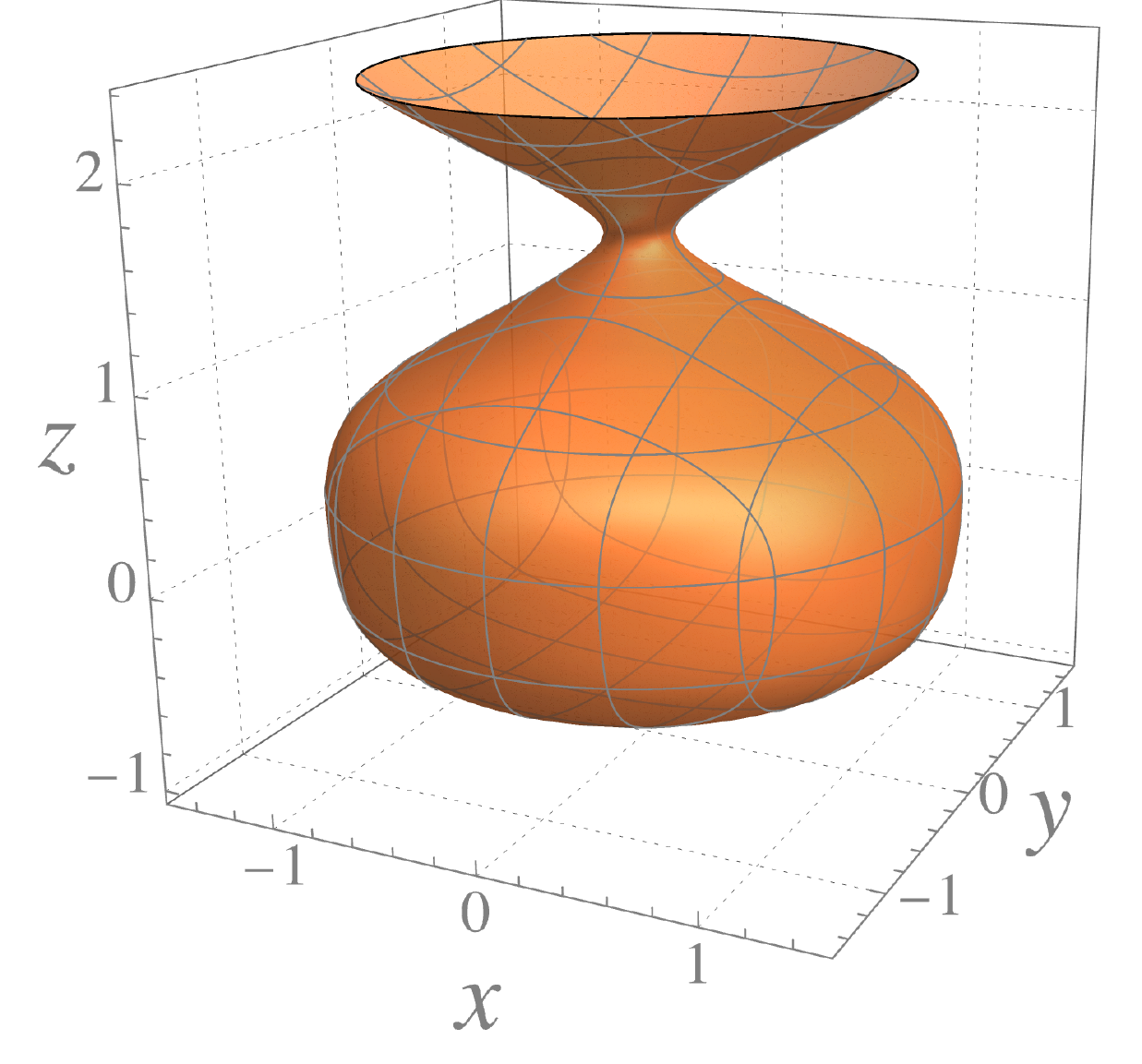}}\quad
\subfigure[]{\includegraphics[width=0.245\textwidth]{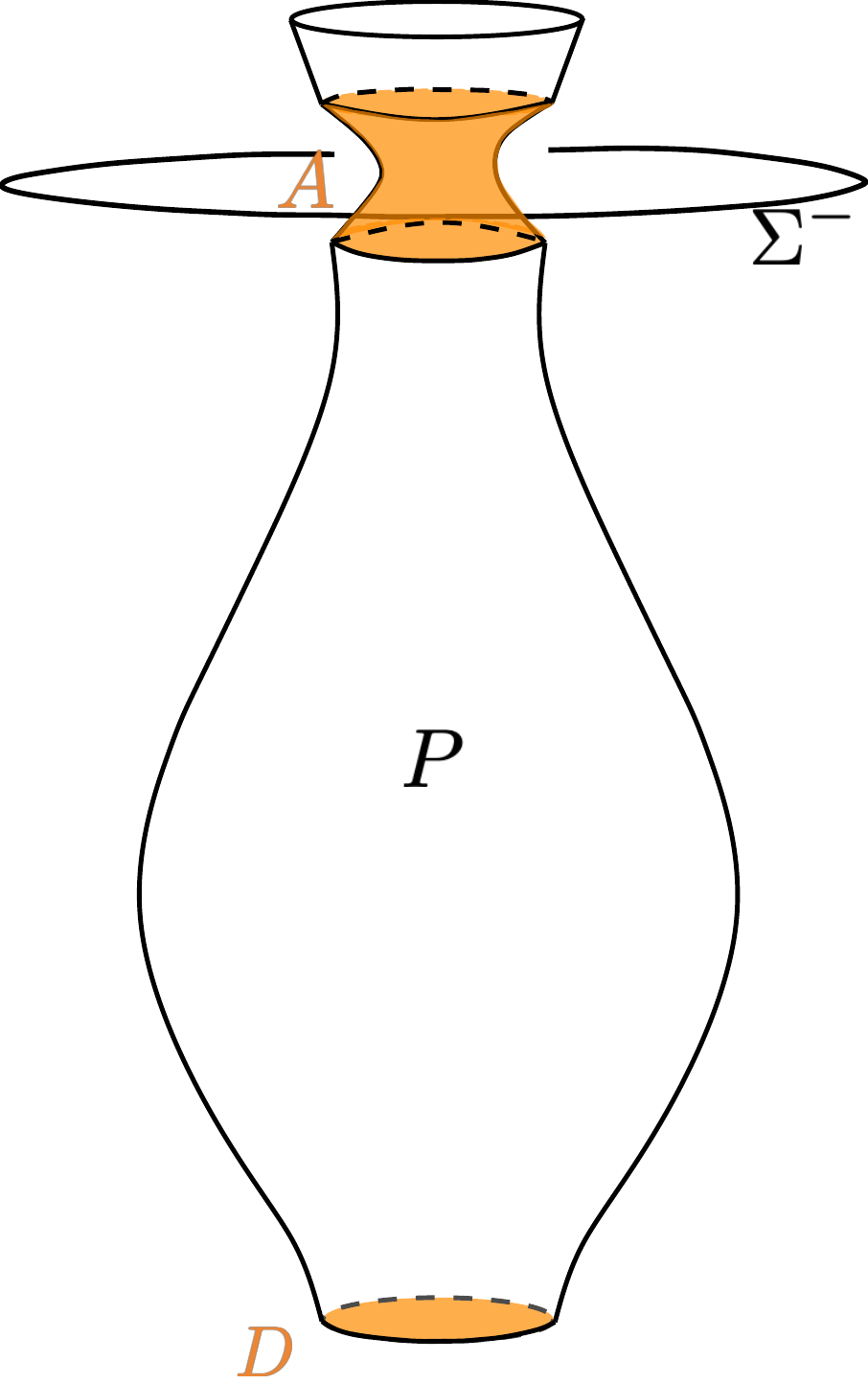}}
\caption{(a) The region accessible to guiding centres for the axisymmetric case $a=0.5, \eta=1.5, E=1.135, (B_+ = 1.125), \tilde{\mu} = 10^{-2}$;
(b) Restriction $P$ of the accessible region to $r \le R(z)$, showing also the relevant part of $\Sigma^-$. The neck $A$ and disk $D$ are pieces of the level set of $|B|$ shown on the left.}
\label{fig:bottle}
\end{figure}
This is somewhat irrelevant though, because the bottle contains the above-mentioned ring of nulls, whereas a realistic mirror machine would look like only a smaller core $r\le R(z)$ of the field.  So we should retain only the features that the level set of $|B|$ has an annular neck $A$ and a roughly horizontal disk $D$ at the bottom (Figure~\ref{fig:bottle}(b)).

Around the neck of the bottle is a periodic orbit of FGCM, given by the intersection of the bottle with $\Sigma^-$.  It is hyperbolic.
We plot it in Figure~\ref{fig:mmmodel1958-z0asym-manifolds}(a) together with its forwards and backwards contracting submanifolds $W^\pm$ in projection to physical space, up to the first bounce.  
\begin{figure}[htbp] 
\centering
\includegraphics[width=5.2in]{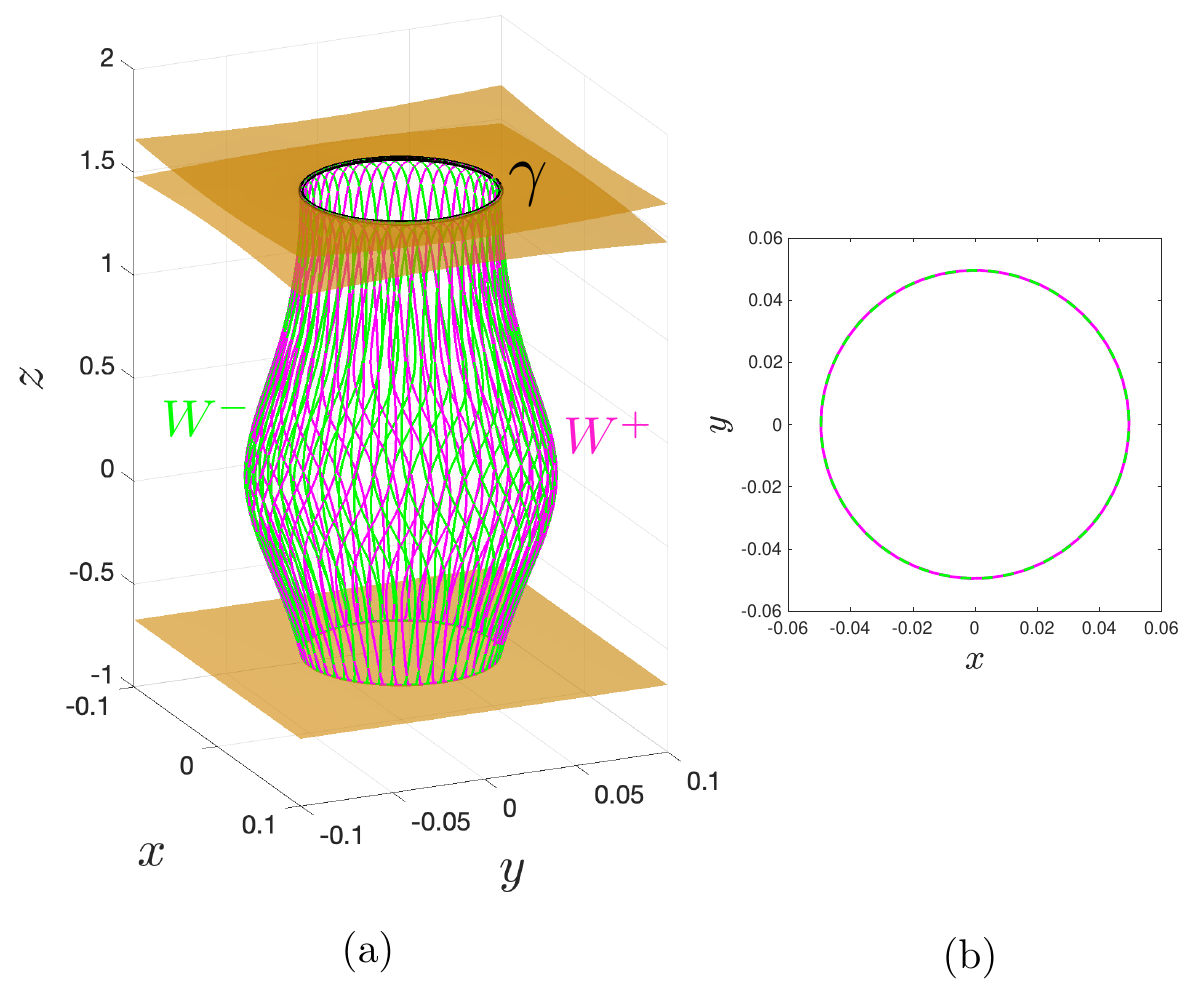}
\caption{(a) Projection to physical space of the hyperbolic periodic orbit (blue) for an axisymmetric case with $a=0.5, \eta=1.5, E = 1.126, (|B|_+ = 1.125), \tilde{\mu} = 10^{-2}$, and some trajectories on its backwards (magenta) and forwards (green) contracting manifolds up to the first bounce; (b) the traces of the first bounces for the contracting manifolds where dashed green is for the forward and continuous magenta is for the backward contracting manifolds.}
\label{fig:mmmodel1958-z0asym-manifolds}
\end{figure}
$W^-$ is plotted by releasing initial conditions with the same energy slightly below  the periodic orbit and integrating forwards in time.  The system has time-reversal symmetry under simultaneous change of sign of $t, v_\pl$ and $\phi$.  Thus, $W^+$ is just the time-reverse of $W^-$.  Consequently, in this projection, $W^\pm$ coincide.  In phase space they are distinct, having opposite signs of $v_\pl$, but at the first bounce they merge and thus form a perfect separatrix.  It separates trajectories that bounce periodically in the mirror machine from those that enter downwards via the neck, make one bounce and then leave via the neck.

Then we break axisymmetry by adding (in contravariant components) 
$$B^\phi=-\eps x k \sin kz, B^z = \eps y \cos kz.$$  
This can be generated from 1-form $A_r\, dr$ with $A_r = \eps r^2 \cos\phi \cos kz$, by $B^\phi = \tfrac{1}{r}\partial_z A_r, B^r = -\tfrac{1}{r}\partial_\phi A_r$.
The hyperbolic periodic orbit has a locally unique continuation, which no longer has $v_\pl$ exactly zero, but as the average velocity in the field direction is zero, it crosses $v_\pl=0$ at least twice in a period.  Thus it forms a closed loop slightly inside the neck but touching it at at least two points.  We compute it by finding a fixed point of a return map.  Then we compute its contracting manifolds, as for the axisymmetric case.  This time they do not coincide.  In particular, their curves of first bounce do not coincide, as indicated in Figure~\ref{fig:mmmodel1958-axiasym-manifolds}.  Transitions between bouncing and escape are now possible.
\begin{figure}[htbp] 
\centering
\includegraphics[width=5.2in]{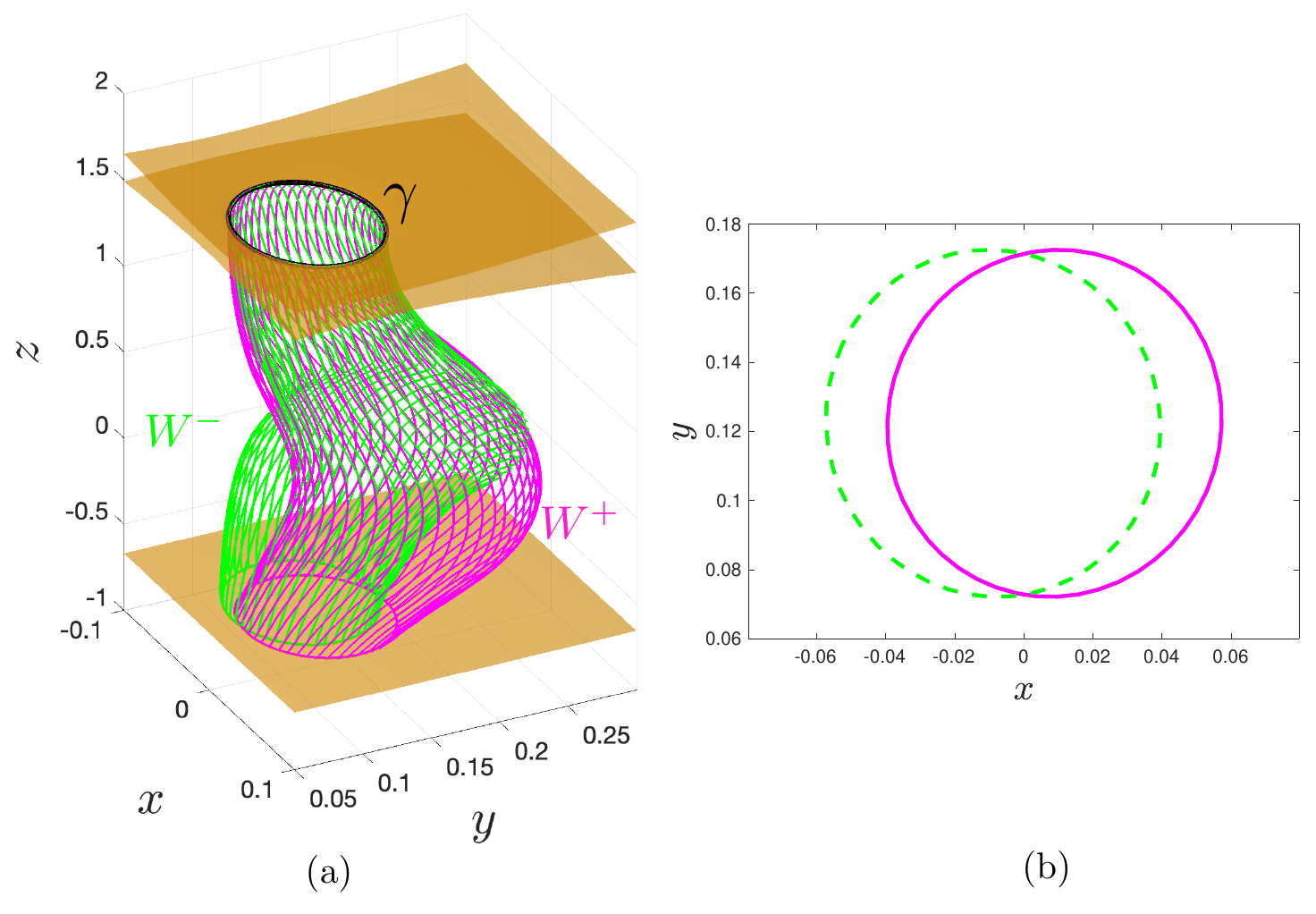}
\caption{(a) Projection to physical space of the hyperbolic periodic orbit for a non-axisymmetric case with $a=0.5, \eta=1.5, \eps=0.1, E = 1.11987, (|B|_+ = 1.11887), \tilde{\mu} = 10^{-2}$, and some trajectories on its backwards (magenta) and forwards (green) contracting manifolds up to the first bounce; (b) the traces of the first bounces for the contracting manifolds.}
\label{fig:mmmodel1958-axiasym-manifolds}
\end{figure}

In a range of energies a small amount above the saddle compared to the breaking of axi-symmetry, it is possible for $W^\pm$ to completely miss each other at the first bounce, as illustrated in Figure~\ref{fig:miss}.
\begin{figure}[htbp] 
\centering
\includegraphics[width=5.2in]{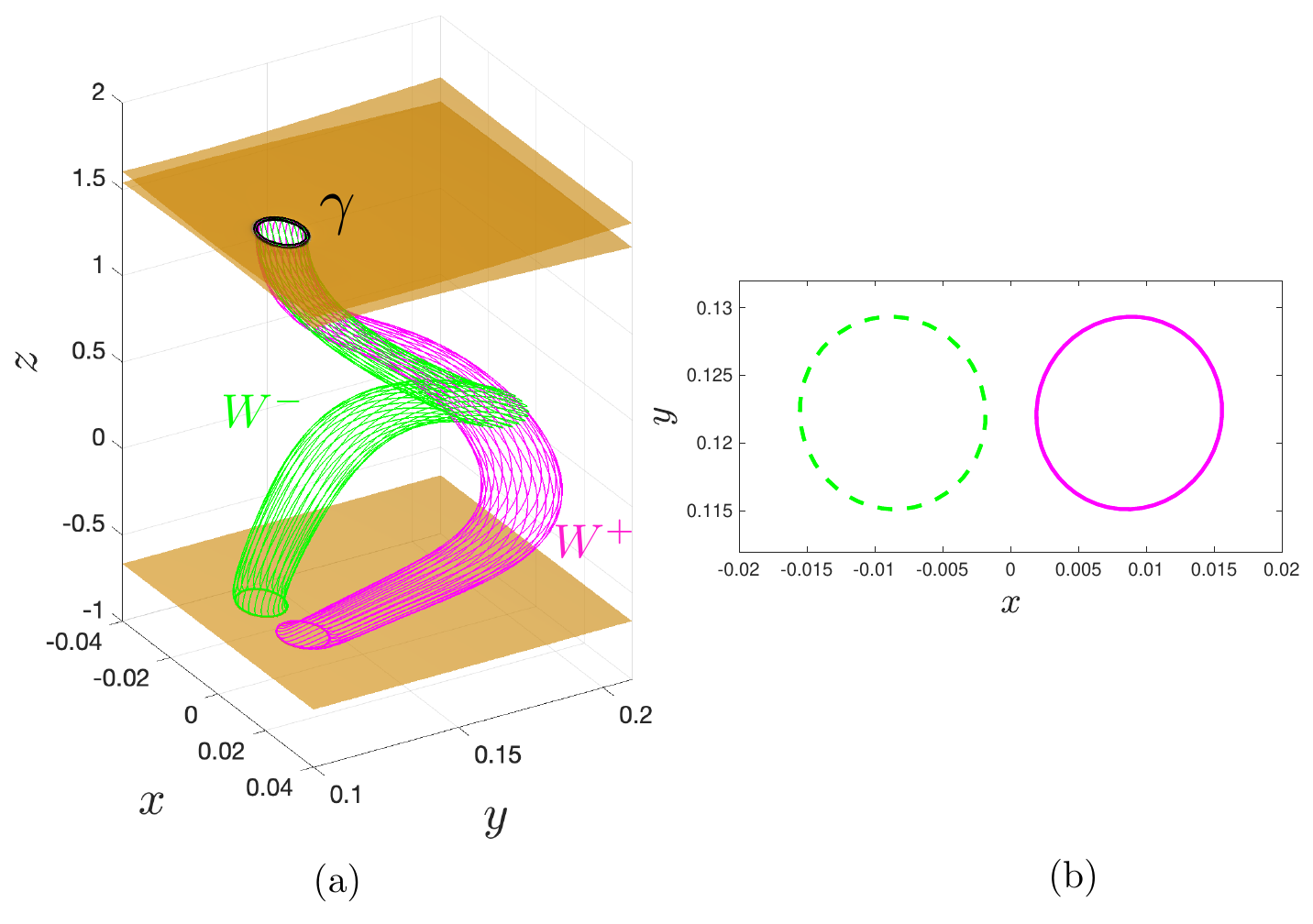}
\caption{(a) Projection to physical space of the hyperbolic periodic orbit for a non-axisymmetric case with $a=0.5, \eta=1.5, \eps=0.1, E = |B|_+ + 2 \times 10^{-5}, (|B|_+ = 1.11887), \tilde{\mu} = 10^{-2}$, and some trajectories on its backwards (magenta) and forwards (green) contracting manifolds up to the first bounce; (b) the traces of the first bounces for the contracting manifolds.}
\label{fig:miss}
\end{figure}

To understand Figures~\ref{fig:mmmodel1958-z0asym-manifolds}--\ref{fig:miss} better in phase space, it is essential to resolve the two-to-one nature of the projection of an energy level to physical space.
Let $P$ be the region $|B|\le E/\mu$ of physical space, restricted to a core of the form $r \le R(z)$ and to $z$ between the disk $D$ and a little above the neck (recall Figure~\ref{fig:bottle}(b)).
To each point of $P$ except on $D$ and $A$ there are two velocities $\pm v$ with the given energy.  On $D$ and $A$, these merge into a single value $v=0$.  In a neighbourhood of $D\times\{0\}$ the energy surface is diffeomorphic to $D\times I$, with $I$ being an interval of velocities containing $0$. This is because $\nabla |B| \ne 0$ on $D$ so we can label nearby points of the energy surface by pairs $(x,v) \in D \times I$, by flowing along $-\nabla |B|$ from $x$ to the locally unique point where $|B| = h-\frac{mv^2}{2\mu}$. 
Similarly, in a neighbourhood of $A\times \{0\}$ (again cf. Fig. \ref{fig:bottle}), the energy surface is diffeomorphic to $A\times I$. {Since $\partial_r|B|\neq 0$ on $A$, one way to explicitly realize such a chart is simply by projecting into $(\phi,z,v_\parallel)$ coordinates; the coordinates $(\phi,z)$ parameterize $A$ and $v_\parallel$ parameterizes $I$. To visualize the local stable and unstable manifolds with this chart we introduce the  variables $Z=z$, $X=(2+v_\parallel)\cos\phi$, $Y = (2+v_\parallel)\sin\phi$ for convenience and display the results in Figure \ref{fig:local_stable_unstable}. }  
\begin{figure}[htbp] 
\centering
\includegraphics[width=3in]{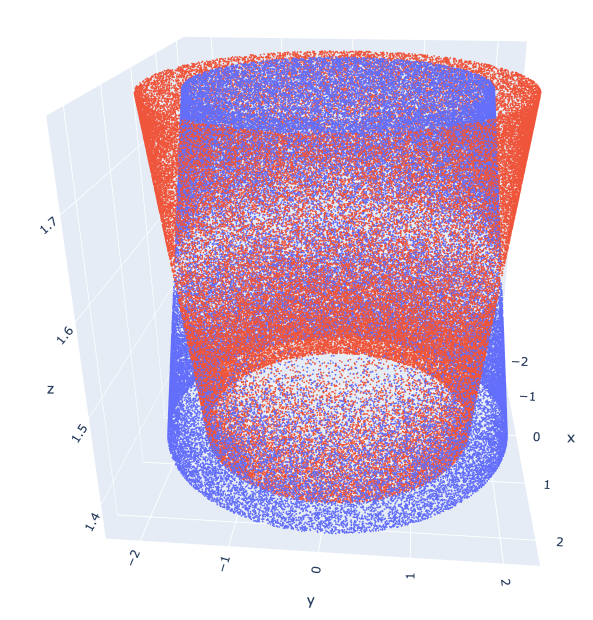}
\caption{Samples of stable (blue) and unstable (red) manifolds attached to the hyperbolic periodic orbit at energy level $E=1.061155$ in $(X,Y,Z)$ coordinates described in the text. Other parameters include $e^{-1}=0.1$, $\epsilon = 0.01$, $m=1$, $\mu = 1.1$, $a=0.5$, 
$\eta=1.1$,
$k=2$. Samples of stable manifold were generated using a rejection-sampling technique. The energy level is sampled randomly from a distribution whose spatial projection is uniform. Each sample is evolved forward in time using a well-resolved Runge-Kutta integration. If a sample stays in the window $\pi/k-0.2 < z<\pi/k+0.2$ for a fairly large time ($\Delta t = 6$ in this case) it is kept. Samples on the unstable manifold are generated in the same manner, but with backward-in-time integration.  }
\label{fig:local_stable_unstable}
\end{figure}
Away from $A\cup D$, we  take two copies of physical space, one for each sign of $v_\pl$.
The whole energy surface (restricted to $r\le R(z)$) is obtained by overlapping these charts. 

In principle, one can choose a global coordinate system for the energy surface (at least for the part projecting to $P$). For example, choose coordinates $(x,y)$ on $D\cup C \cup A$, where $C$ is the cylinder $r=R(z)$ connecting $D$ to $A$ (one can make them smooth by suitable choice of the function $R$); choose a vector field $U$ transverse to $D\cup C \cup A$ inwards and nowhere zero, so that every point of $P$ is reached uniquely by flowing along $U$ for some time $t>0$ from a point $(x,y)$; then given $v_\pl$ at this point, let $s = \sign(v_\pl) \sqrt{t}$.  Then $(x,y,s)$ are global coordinates for this part of the energy level, but there is a lot of choice and no obvious one to settle on.

One way to do something like this is to write $(x,y) = t(z)(X,Y)$, $v_\pl = s(z) W$, for some positive scaling functions $s$ and $t$ to be chosen as functions of $z$.  The origin of $(x,y)$ should be chosen to be the lowest point on the surface $|B|=E/\mu$ (or the scaling extended to include a shift).  Then for suitable $s$ and $t$, the energy surface is a graph $z = Z(X,Y,W)$.  To see how and when this can be done, consider the equation for the energy surface in the scaled variables:
$\tfrac12 s(z)^2 W^2 + |B|(t(z)X,t(z)Y,z) = E/\mu$.  The $z-$ derivative of this expression is
$$s' s W^2 + t' (X |B|_{,x}+Y |B|_{,y}) + |B|_{,z} = \tfrac{s'}{s}v_\pl^2 + \tfrac{t'}{t} (x|B|_{,x}+y|B|_{,y}) + |B|_{,z}.$$
At the bottom of the accessible region, $|B|_{,z}<0$ and the other two terms are zero.  We can make it negative on the whole of the boundary of the accessible region by choosing $\frac{t'}{t}$ sufficiently negative in the parts where $|B|_{,z}\ge 0$; this assumes $x|B|_{,x}+y|B|_{,y} >0$ away from the bottom, which is true for mild perturbations from axisymmetry.  Lastly, we can make it negative for $v_\pl \ne 0$ by choosing $\frac{s'}{s}$ sufficiently negative in places where what was constructed before is not already negative.  Then by the implicit function theorem, the energy surface is a graph $z = Z(X,Y,W)$. The function $Z$ has a minimum at $(0,0,0)$ and its other level sets are topologically two-spheres.

We did not yet implement such a choice of global coordinates, but following this idea, it is convenient in sketches to represent $v=0$ in the energy surface as a horizontal plane, with $v>0$ above the plane, $v<0$ below it.  We put $D\times \{0\}$ at the centre of this plane and $A\times \{0\}$ as a concentric annulus.  Between them is a torus $T$ representing points on $r=R(z)$ with $z$ between the two components of $|B|=h$, and outside $A$ is a cylinder $C$ representing points on $r=R(z)$ above the torus component.
This is illustrated in Figure~\ref{fig:ZZZ}.
\begin{figure}[htbp] 
   \centering
    \includegraphics[width=4in]{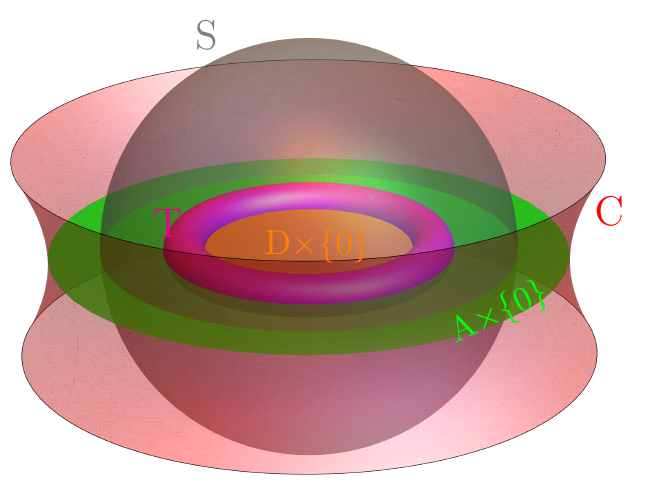}
   \caption{Double cover of the region $P$, glued along $A$ and $D$, showing the torus $T$, the cylinder $C$, the sphere $S$, the disk $D\times \{0\}$ and the annulus $A\times \{0\}$.}
   \label{fig:ZZZ}
\end{figure}
Also, denoting by $\pi$ the map from $H^{-1}(E)$ to physical space, $S= \pi^{-1}(\Sigma^-)$ is a double cover of the disk $\Sigma^- \cap P$ glued along its boundary $\Sigma^- \cap A$; so it is a sphere.  A neighbourhood of $S$ can be obtained in the form $S\times I$ for an interval $I$ representing height $z$ relative to $\Sigma^-$; this is known as the Conley-McGehee representation (see for example the sketch in \cite{M90}, and \cite{KW}).

As a first pass, 
the region of the energy surface between $S$ and $T$ might be considered to be the states that are inside the machine.  The upper hemisphere has $v>0$ so one might say it consists of states that are just exiting the machine; but this is not quite right because guiding-centre drifts could compensate for small $v$, so we will give a dynamical construction shortly.  Similarly, the lower hemisphere has $v<0$ so consists of states that are just entering the machine, modulo the drift corrections. On $T$, states may enter or exit, but this is an aspect that we do not address here, as our focus is on transitions between different classes of guiding-centre motion.

Now we come to the dynamics in the energy surface.  FGCM has a periodic orbit $\gamma$ which is the continuation of the circle $\pi^{-1}(\Sigma^- \cap A)$ of equilibria for ZGCM.  It is the Lyapunov orbit with energy $E$ for the upper saddle.  In general it does not have $v$ identically zero, but $v$ oscillates about $0$ on it.  It is hyperbolic,
so it has forward and backward contracting manifolds $W^\pm$.  It is possible to span $\gamma$ by a surface diffeomorphic to a sphere that is transverse to the dynamical vector field except on $\gamma$.  It is a perturbation of the sphere $S$, so we denote it by the same symbol.  It is non-unique but an essential feature is that it lies in the sectors between $W^\pm$ indicated in Figure~\ref{fig:M3}.
\begin{figure}[htbp] 
    \centering
    \includegraphics[width=3.5in]{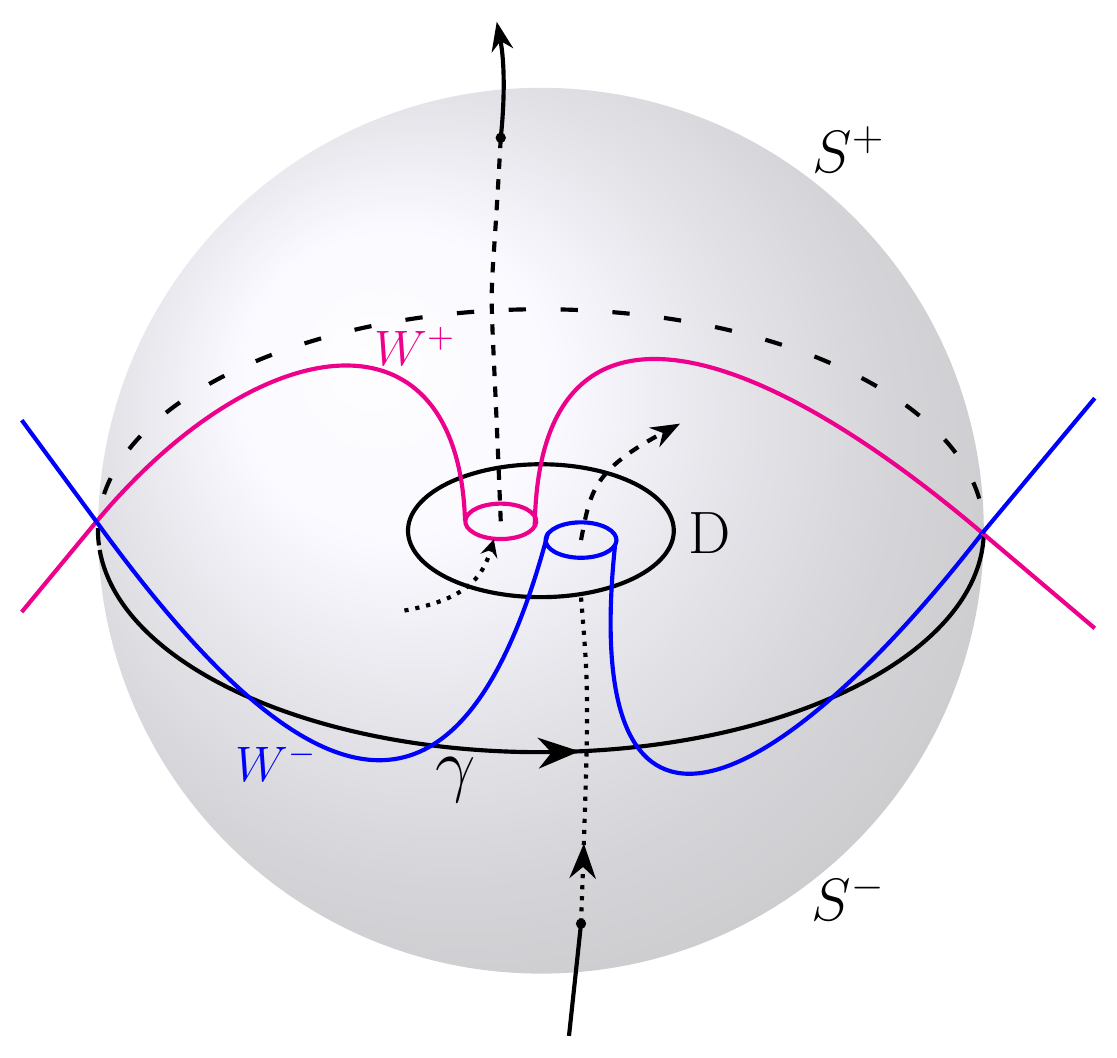} 
    \caption{Dynamics in the energy surface $H^{-1}(E)$, showing the periodic orbit $\gamma$, the contracting manifolds $W^\pm$ and the transverse hemispheres $S^\pm$, in the case when the first intersections of $W^\pm$ with $v=0$ miss each other. Inside the sphere S, we show the trajectories below the horizontal as dotted and above as dashed while escaping out of the sphere is shown as continuous.}
   \label{fig:M3}
\end{figure}
It is called a dividing surface, because the upper hemisphere $S^+$ has unidirectional flux from inside to outside, representing guiding centres leaving the machine, and the lower hemisphere $S^-$ has unidirectional flux from outside to inside, representing guiding centres entering the machine.
The manifolds $W^\pm$ form tubes that in one direction go inside the sphere.  They necessarily cross $v=0$ transversely, representing bounce on the disk $D$.  Their first intersections with $v=0$ are diffeomorphic to two circles.  In the axisymmetric case, they coincide, but if axisymmetry is broken then they need not coincide.  Generically for $h$ only slightly above $h_0$ they miss each other entirely, as shown in Figure~\ref{fig:M3}.  This will be justified at the end of the subsection.
In this case, we see that all the flux entering the sphere transitions to bouncing trajectories (in fact, making at least three bounces) and all the flux leaving the sphere came from bouncing trajectories (making at least three bounces).  We call them ``trapping'' and ``detrapping'' fluxes, respectively, to align with standard terminology.  The fluxes of energy-surface volume are equal and can be expressed as the action integral of $\gamma$:
$$S = \int_{\gamma} \alpha, \quad \alpha = eA^\flat + mv b^\flat,$$
or $S = \int_{\gamma} (eA + mv b)\cdot dx$, where $A$ is any vector potential for $B$.

\begin{figure}[htbp] 
\centering
\includegraphics[width=3.5in]{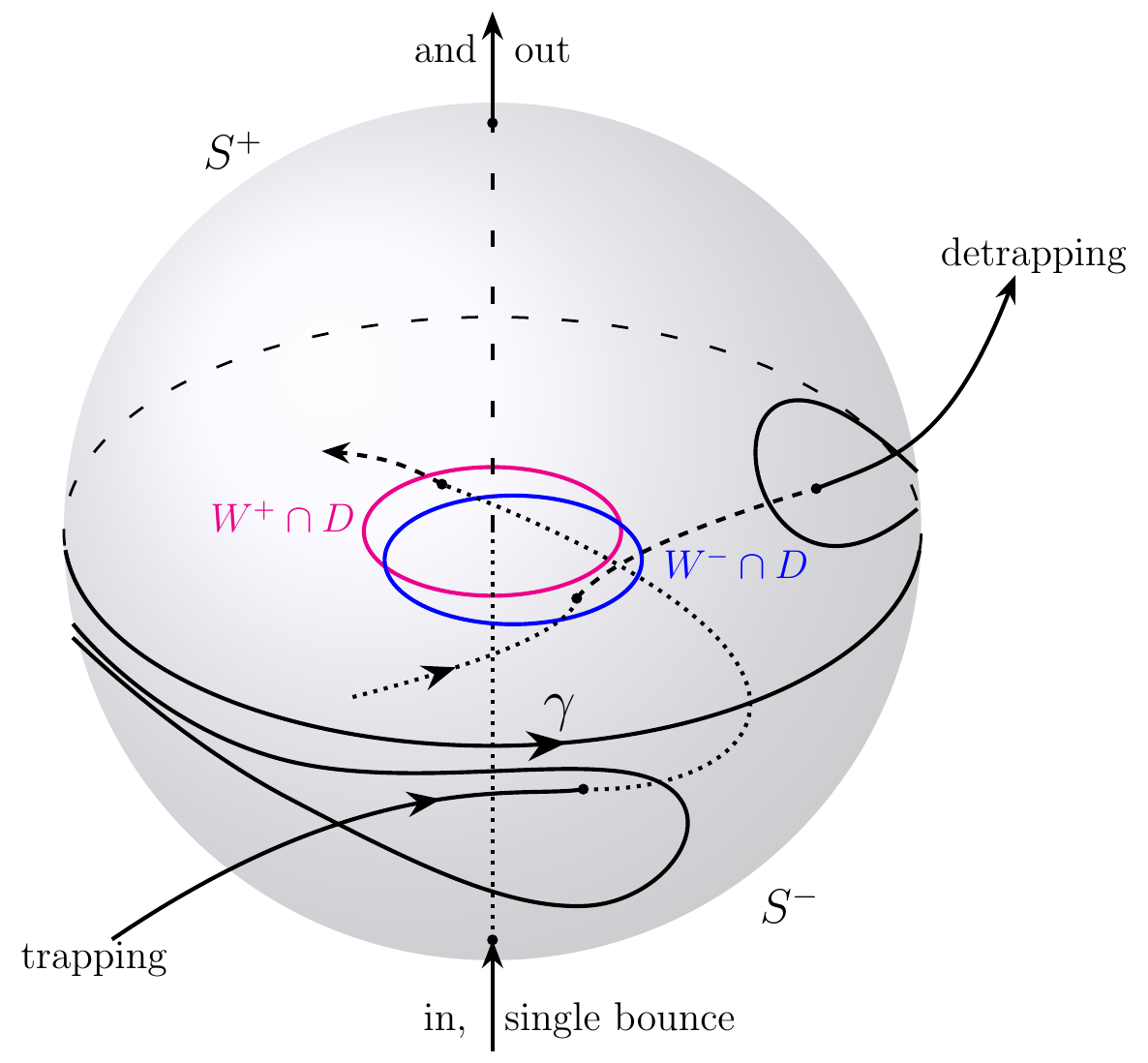}
\caption{Case when the first intersections of $W^\pm (\gamma)$ with the bounce surface $\{v=0\}$ intersect. $W^\pm$ not drawn, except for their first intersections with $D$, to give priority to the lobes on the dividing surface. Same convention for the trajectories as in Fig.~\ref{fig:M3}.} 
\label{fig:M4}
\end{figure}
For larger $h$, the two circles of first intersection of $W^\pm$ with the bounce surface $\{v=0\}$ might intersect, as in Figure~\ref{fig:M4}.  We have drawn the simplest case of two intersections, but of course there could be more.  The trajectories of the intersections are homoclinic to $\gamma$.  They wrap onto $\gamma$ in both directions in time.  
Now there are three types of flux.  Firstly there is the trapping flux across a lobe on the upper hemisphere that crosses $v=0$ in the indicated lobe and turns into a bouncing trajectory (at least three bounces).  Secondly, there is the detrapping flux that comes from bouncing trajectories that cross $v=0$ in the other indicated lobe and exit the lower hemisphere via the indicated lobe.  Thirdly, there is the remaining flux across the upper hemisphere, that passes through the intersection of the disks bounded by the circles on $v=0$ (where they perform one bounce) and then exit via the rest of the lower hemisphere.  These are single-bounce trajectories.

The fluxes of energy-surface volume are related to the actions of the homoclinic orbits and of $\gamma$.  Namely, the trapping flux is the difference in action between the homoclinic orbits at the ends of the corresponding lobe.  The action of a homoclinic orbit does not converge, but the difference in actions of two homoclinic orbits to the same periodic one does, as long as one takes the end points to converge together, and that is what is assumed in this statement.  The detrapping flux is equal to the trapping flux because they are both given by the difference in action of the same pair of homoclinic orbits.  The single-bounce flux is the difference between the action of $\gamma$ and the trapping flux.

To complete this discussion, we explain why generically the circles miss each other for $h$ only slightly above $h_0$.  In the limiting case $h=h_0$, the periodic orbit $\gamma$ shrinks to an equilibrium point, namely the saddle of $|B|$ near the centre of the top coil.  The accessible region can be considered as an inside and an outside that are pinched together in a conical point at the saddle.  The part of the energy surface corresponding to the inner part is a double cover of a sphere with a conical singularity at the saddle, hence it is a 3-sphere with a conical singularity. The ``shrinking'' of $\gamma$ to this conical point corresponds in our representation to $\gamma$ growing to a circle at infinity:~think of how a small circle around the north pole maps under stereographic projection to a large circle in the plane tangent to the south pole.  
The saddle has 1D contracting manifolds.  In one direction they pierce $v=0$.  Generically they pierce it in distinct points.  Now deform the picture to $h>h_0$:~the points expand to small circles, but still miss each other for small $h-h_0$.

\subsection{Tokamak}
For an axisymmetric tokamak, $\Sigma \times \{v=0\}$ also persists to an invariant submanifold for $\mu>0$, but only after taking its union with another invariant submanifold that crosses it transversely, corresponding to closed guiding-centre trajectories, and allowing the crossing to break generically.  The result was already illustrated in \cite{M94}. It is illuminating to analyse this example in some detail. {This analysis will demonstrate how and why strong isodrasticity generically breaks in magnetic configurations close to tokamaks.}

We use the magnetic field of Section~\ref{sec:tok}.  For the guiding-centre motion we choose to scale $e,m$ to $1$ and take $H = \tfrac12 v^2 + \mu |B|$, $\omega = \beta + d(vb^\flat)$. This is so that the effect of turning on small $\mu$ is on the Hamiltonian instead of the symplectic form, which makes it easier to understand.  We can take an axisymmetric vector potential $A$ for $B$, leading to $\beta = dA^\flat$.  Its (physical) component $A_\phi = \psi/R = \tfrac12 (r^2+z^2)/R$,
where $r=R-1$ (and in case it is useful, one can take $A_z=-C\log R, A_R = 0$).  By axisymmetry, 
\begin{equation}
p_\phi=R (A_\phi + vb_\phi) = \psi + vC/|B| \label{eqn:pphi_axisym}
\end{equation}
is conserved.  So on $p_\phi=p$,
\begin{equation}v = \frac{\sqrt{C^2+2\psi}}{RC} \tilde{p}
\label{eq:vfromp}
\end{equation}
where $$\tilde{p}=p-\psi.$$
Thus we obtain reduced Hamiltonian $H_p$ on $p_\phi=p$ (modulo $\phi$),
\begin{equation}
H_p(r,z) = \frac12 \frac{C^2+2\psi}{C^2R^2}(p-\psi)^2 + \frac{\mu}{R}\sqrt{C^2+2\psi}.
\label{eq:Hp}
\end{equation}
A typical contour plot of $H_p$ for $p>0$ and $\mu$ small is shown in Figure~\ref{fig:H_p}, which shows some banana trajectories and some circulating and counter-circulating trajectories, and makes clear it has three critical points.
\begin{figure}[htbp] 
 \centering
\includegraphics[width=3in]{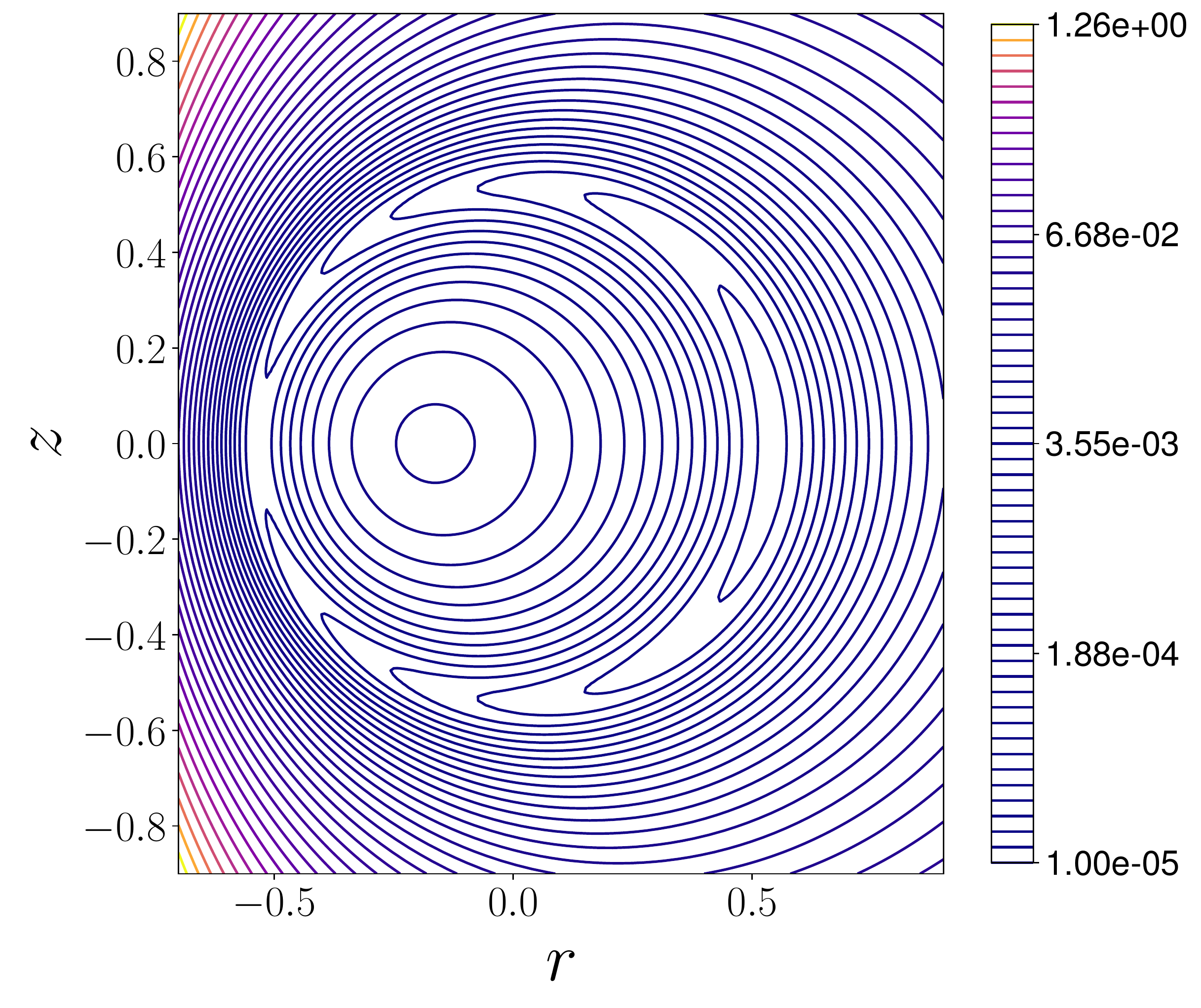}
\caption{Contours of $H_p$ in $(r,z)$ for $C=5.5, \mu=0.0002, p=0.145$.}
 \label{fig:H_p}
\end{figure}

We look for critical points of $H_p$ because they generate axisymmetric periodic orbits for $H$ (or exceptionally, a circle of equilibria). The ones of interest for isodrasticity are the hyperbolic ones, but for the moment we consider them all.  Using $\psi = \tfrac12 (r^2+z^2)$ and $R=1+r$, we see the $z$-derivative is zero iff $z=0$.  The $r$-derivative on $z=0$ is
$$\frac{\partial H_p}{\partial r}(r,0) = -\frac{C^2-r}{C^2R^3}\tilde{p}^2 - \frac{C^2+r^2}{C^2R^2}r\tilde{p} - \mu \frac{C^2-r}{R^2\sqrt{C^2+r^2}}.$$ 
So critical points of $H_p$ are the solutions $r$ of
\begin{equation}
(C^2-r)\tilde{p}^2+Rr(C^2+r^2)\tilde{p} + \mu\frac{RC^2(C^2-r)}{\sqrt{C^2+r^2}} = 0,
\label{eq:ptilde}
\end{equation}
where $\tilde{p} = p-r^2/2$.
We can consider this instead as an equation for $\tilde{p}$ given $r$.  It has real roots iff 
$$Rr^2(C^2+r^2)^{5/2} \ge 4\mu C^2(C^2-r)^2,$$
that is for $r$ outside an interval approximately $(-2\sqrt{\mu C}, 2\sqrt{\mu C})$.
The value of $\tilde{p}$ at the fold points is $\tilde{p} = -\frac{Rr(C^2+r^2)}{2(C^2-r))} \approx -r/2$.
For $r \ll 1$ but large compared with $\sqrt{\mu C}$, the dominant balances are (i) the second and third terms, giving a solution approximately $\tilde{p} = - \frac{\mu C}{r}$,
and (ii) the first and second terms, giving a solution approximately $\tilde{p}=-r$.  
These can be converted from $\tilde{p}$ to $v$ by (\ref{eq:vfromp}), which gives $v \approx \tilde{p}$ and so the same approximate formulae. 

Thus we obtain an invariant manifold $N$ consisting of circular orbits in $z=0$, as shown in Figure~\ref{fig:N4tok}.
The parts with $v \approx -\frac{\mu C}{r}$ are a small perturbation of $\Sigma \times \{v=0\}$ away from the gap.  The parts with $v \approx -r$ correspond to perturbation of another invariant submanifold for $\mu=0$, consisting of the circular periodic orbits in $z=0$ for which the vertical components of the curvature drift and parallel velocity balance (this can be computed exactly if desired, but has the leading form $v = -r$).

\begin{figure}[htbp] 
\centering
\includegraphics[width=4in]{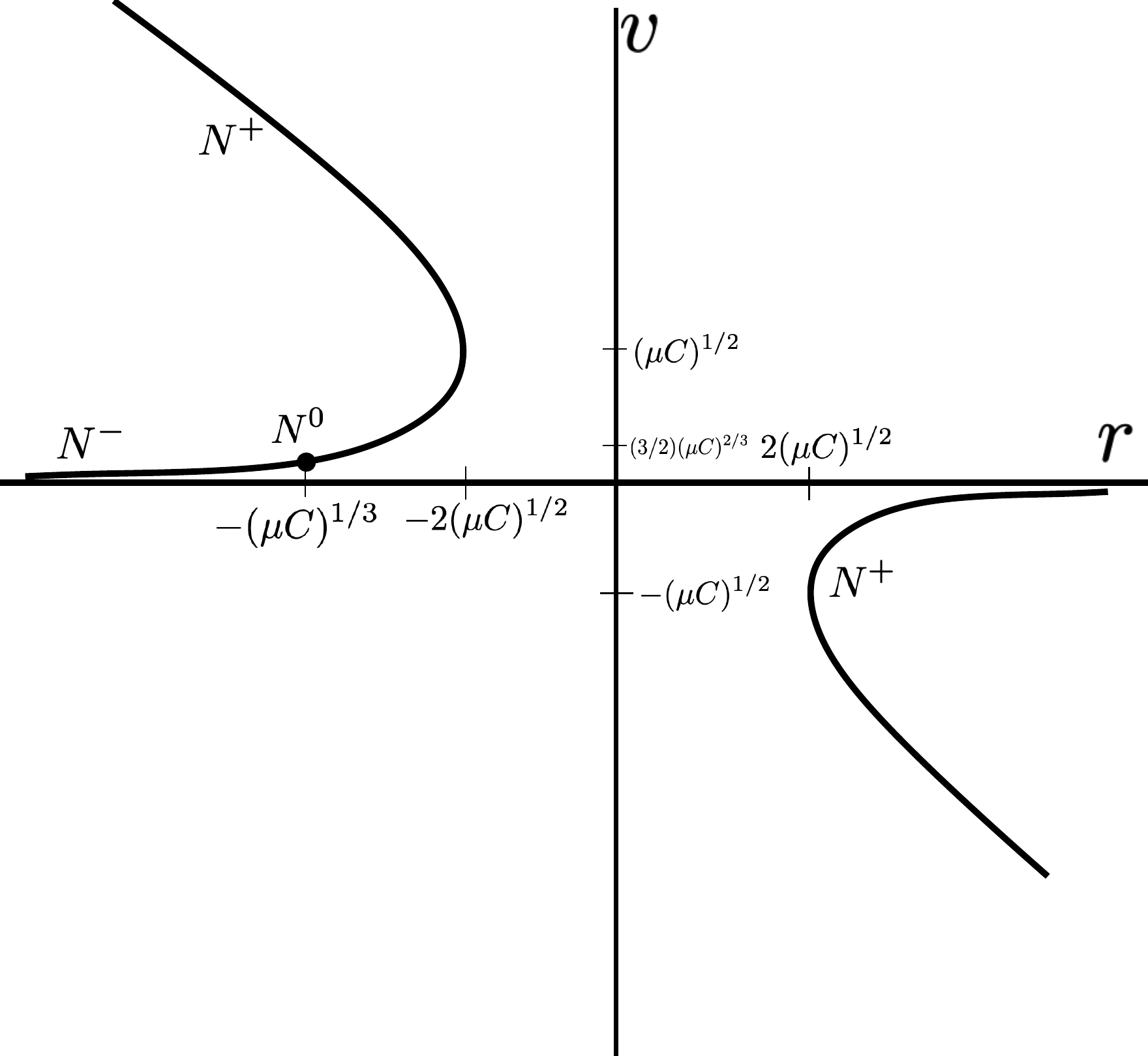} 
\caption{Sketch of the invariant manifold $N$ for FGCM in the tokamak example, showing its decomposition into $N^\pm$ and $N^0$. 
}
\label{fig:N4tok}
\end{figure}

To identify which parts are normally hyperbolic or elliptic, we have to do further work. The boundary between them is determined by the turning points of $p_\phi$ along this curve of critical points.  There is just one such turning point and it is near $r=-(\mu C)^{1/3}$. 
To see this, $\delta p = 0$ iff $\delta\tilde{p}=-r\,\delta r$,  divide (\ref{eq:ptilde}) by $C^2-r$ for convenience and differentiate the result to obtain the condition for $\delta p = 0$, $\delta r \ne 0$:
\begin{equation}
\tilde{p} \left(-2r + \left(\frac{Rr(C^2+r^2)}{C^2-r}\right) ' \right) -\frac{Rr^2(C^2+r^2)}{C^2-r} + \mu C^2 \left(\frac{R}{\sqrt{C^2+r^2}}\right) ' = 0, 
\label{eq:degen}
\end{equation}
where the derivatives are with respect to $r$.  For $\mu$ and $r$ small this is $\tilde{p} -r^2 + \mu C \approx 0$.  This occurs on the lower left branch, where $\tilde{p} \approx -\mu C/r, r<0$, as a balance principally between the first and second terms.  Hence the result.  This gives $N^0$.  
Study of the type of the critical points of $H_p$ gives the decomposition of the complement into $N^\pm$ as shown in the figure. One can compute $N^0$ as a curve in ($\mu,E,p)$ parametrised by $r$.
Note that at $N^0$, $r \approx - (\mu C)^{1/3}, \tilde{p} \approx (\mu C)^{2/3}$, so $p \approx \frac32 (\mu C)^{2/3}$.  Thus, the regime for which $H_p$ has three critical points is approximately $p> \frac32 (\mu C)^{2/3}$.





This invariant manifold $N$ persists under breaking axisymmetry, because the degenerate case $N^0$ is an elementary saddle-centre periodic orbit. This is because the zero of equation (\ref{eq:degen}) is transverse.

Having established the existence of the invariant submanifold $N^-$, we now wish to understand its contracting submanifolds.  In the axisymmetric case, $p_\phi$ is conserved, so they form pairs of homoclinic connections, as can be divined from Figure~\ref{fig:H_p}.
To consider the effect of breaking axisymmetry, we have to abandon conservation of $p_\phi$, but energy is still conserved.  So the useful viewpoint is to consider the energy surfaces $H^{-1}(E)$ and the level sets of $p_\phi$ in them for the axisymmetric case and then consider the effects of losing conservation of $p_\phi$.  This is a view already promoted in \cite{M94} (though some aspects of the figures there are inaccurate).

Firstly, as in \cite{M94}, one can represent points of $H^{-1}(E)$ in the axisymmetric case by $(v,z)$ because given $(v,z,E)$ there is at most one compatible $r$ in $r^2+z^2 \le r_0^2$.  
To see this, 
\begin{equation}
\tfrac12 v^2 = E - \frac{\mu}{R}\sqrt{C^2+(R-1)^2+z^2}, 
\label{eq:v2}
\end{equation}
so its derivative 
$$\frac{\partial (v^2/2)}{\partial R} = \frac{\mu(C^2+1+z^2-R)}{R^2 \sqrt{C^2+(R-1)^2+z^2}} $$
is positive for $R \in (0,C^2+1+z^2)$, which contains $r^2+z^2\le r_0^2 < 1$.
Thus there can not be two values of $R$ with the same value of $(v,z,E)$.



We can obtain $R$ explicitly in terms of $(v,z,E)$ because the equation for $R$ is a quadratic:
\begin{equation}
R^2(E-v^2/2)^2=\mu^2(C^2+(R-1)^2+z^2).
\label{eq:R}
\end{equation}
To save studying more than one case for the sign of the coefficient of $R^2$, let us suppose that $C > \sqrt{3}$.  Then $|B| > 1$ for $C^2+(R-1)^2+z^2 > R^2$, i.e.~for $R < \tfrac12 (C^2+1+z^2)$, which is true for all $r^2+z^2 \le r_0^2$.  So $E-\tfrac12 v^2 >\mu$.  It follows that the quadratic has one positive real root $R$.  Writing $R=1+r$ it has
$$ r = \frac{-(h-u^2/2)^2+ \sqrt{1+((h-u^2/2)^2-1)(C^2+1+z^2)}}{(h-u^2/2)^2-1},$$
where $h=E/\mu$ and $u = v/\sqrt{\mu}$.

Then $p_\phi$ can be expressed in terms of $E, \mu, v$ and $z$, by rewriting Eqn.~\eqref{eqn:pphi_axisym} using $|B| = \frac{E}{\mu} - \frac{v^2}{2 \mu} = h - u^2/2$, and substituting the above expression for $r$ into
$$p_\phi = \tfrac12(r^2+z^2)+\frac{\sqrt{\mu} u C}{h-u^2/2}.$$
A contour plot of $p_\phi$ in $(u,z)$ for some $\mu,C,E=h\mu$ is given in Figure~\ref{fig:p_phi}.
\begin{figure}[htbp] 
\centering
\includegraphics[width=3in]{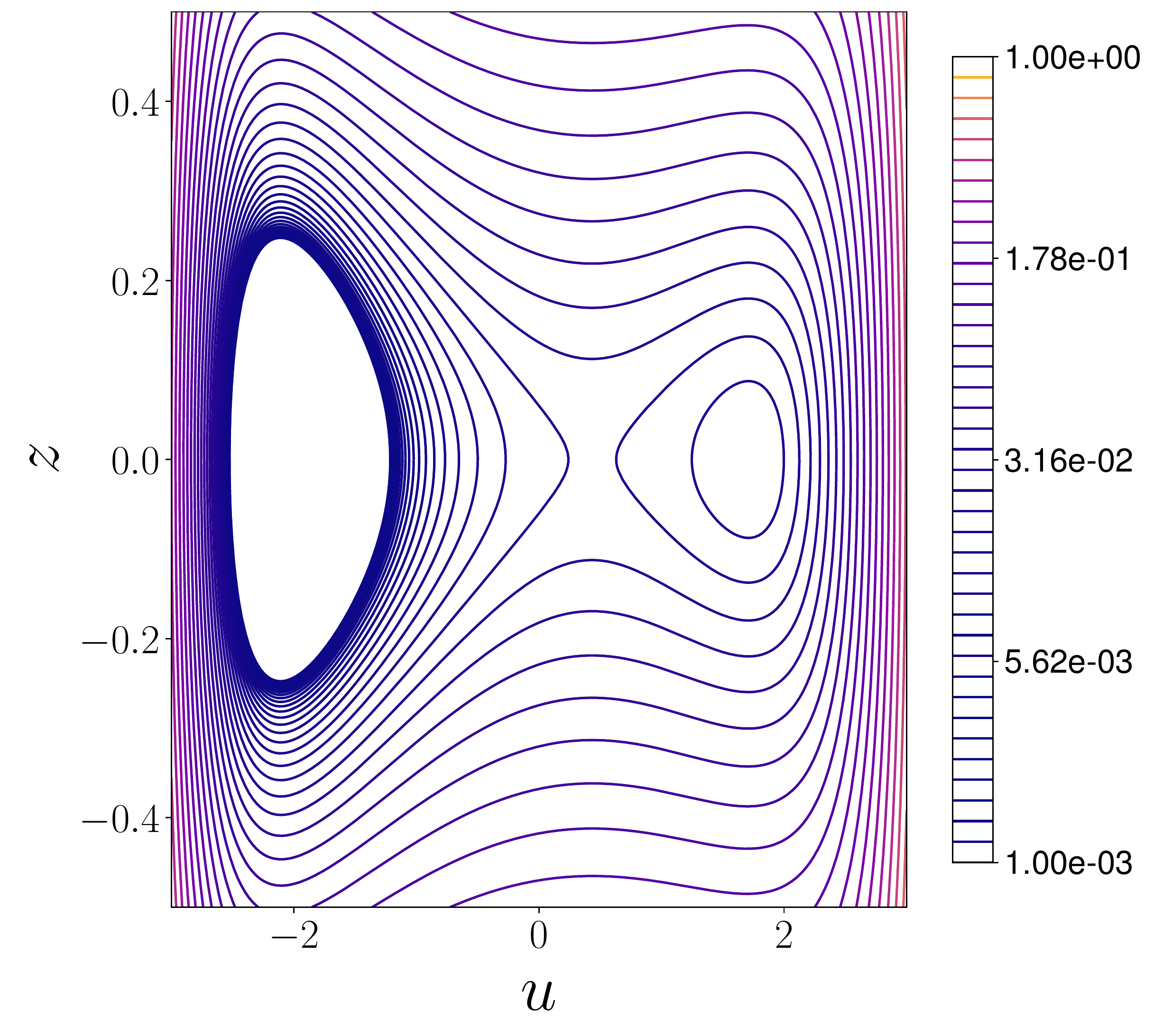} 
\caption{Some contours of $p_\phi$ in $(u,z)$ for $C=5.5, \mu=0.0002, h=7.4$.}
\label{fig:p_phi}
\end{figure}
Note that the previous considerations restrict us to $C>\sqrt{3}$, $h>1$ and $|u| < \sqrt{2(h-1)}$.

The region in $(\mu,E)$ for which there are three critical points of $p_\phi$ can be obtained from the previous analysis.
The energies for particles near the magnetic axis and with small $v$, as on the branches $v \approx -\frac{\mu C}{r}$, are near $\mu C$, so it is natural to consider a further scaled version $\cE = \frac{h}{C} - 1$.
The scaled energy at the degenerate critical point $N^0$ is $\cE \approx \frac32 (\mu C)^{1/3}$.
Degenerate critical points of $p_\phi$ for given $\cE$ correspond to degenerate critical points of $H$ for given $p$, so we obtain that the region of energy for which $p_\phi$ has three critical points is approximately $\cE > \tfrac32 (\mu C)^{1/3}$.


In the regime of $\cE$ for three critical points of $p_\phi$, one is a hyperbolic point (corresponding to that for $H_p$).  It has a positive value of $v$ (it is born near $v=(\mu C)^{2/3}$ at $\cE \approx \tfrac32 (\mu C)^{1/3}$ and for larger $\cE$ has approximately $v = \mu C/\cE$). The rate of change of ${\phi}$ is given by
$$\tilde{B}_\pl R \dot{\phi} = v\frac{C}{R} + v^2 \frac{2\psi}{(C^2+2\psi)^{3/2}} + \mu \frac{2\psi-rC^2}{R^2(C^2+2\psi)},$$
which at the hyperbolic point has each term positive (remember $v>0$ and $r<0$), so $\dot{\phi}$ is positive there.
Thus in the full phase space including $\phi$, it corresponds to a hyperbolic periodic-orbit of guiding-centre motion.  The hyperbolic point has two homoclinic orbits, corresponding to the joint level set of $H$ and $p_\phi$ containing the hyperbolic point.  The mapping (\ref{eq:vfromp}), considered as giving $v$ from $(r,z)$, shows that the joint level set forms a figure of eight in $(v,z)$, arranged as in Figure~\ref{fig:p_phi}.  The closed levels of $p_\phi$ for given $E$ inside the right and left lobes of the eight correspond to positively and negatively circulating trajectories, respectively (actually, some of the negatively circulating trajectories bounce, as can be seen in the Figure from the sign change of $u$, but it is convenient to consider them as purely circulating in this exact treatment).  Those round the outside of the eight correspond to bouncing trajectories.
In the full phase space the figure-eight gives two homoclinic submanifolds to the hyperbolic periodic orbit.
For generic perturbation they can be expected to break, producing homoclinic oscillations and a zone of chaos around them.  See Figure~\ref{fig:mflds}.
\begin{figure}[htbp] 
\centering
\includegraphics[width=4in]{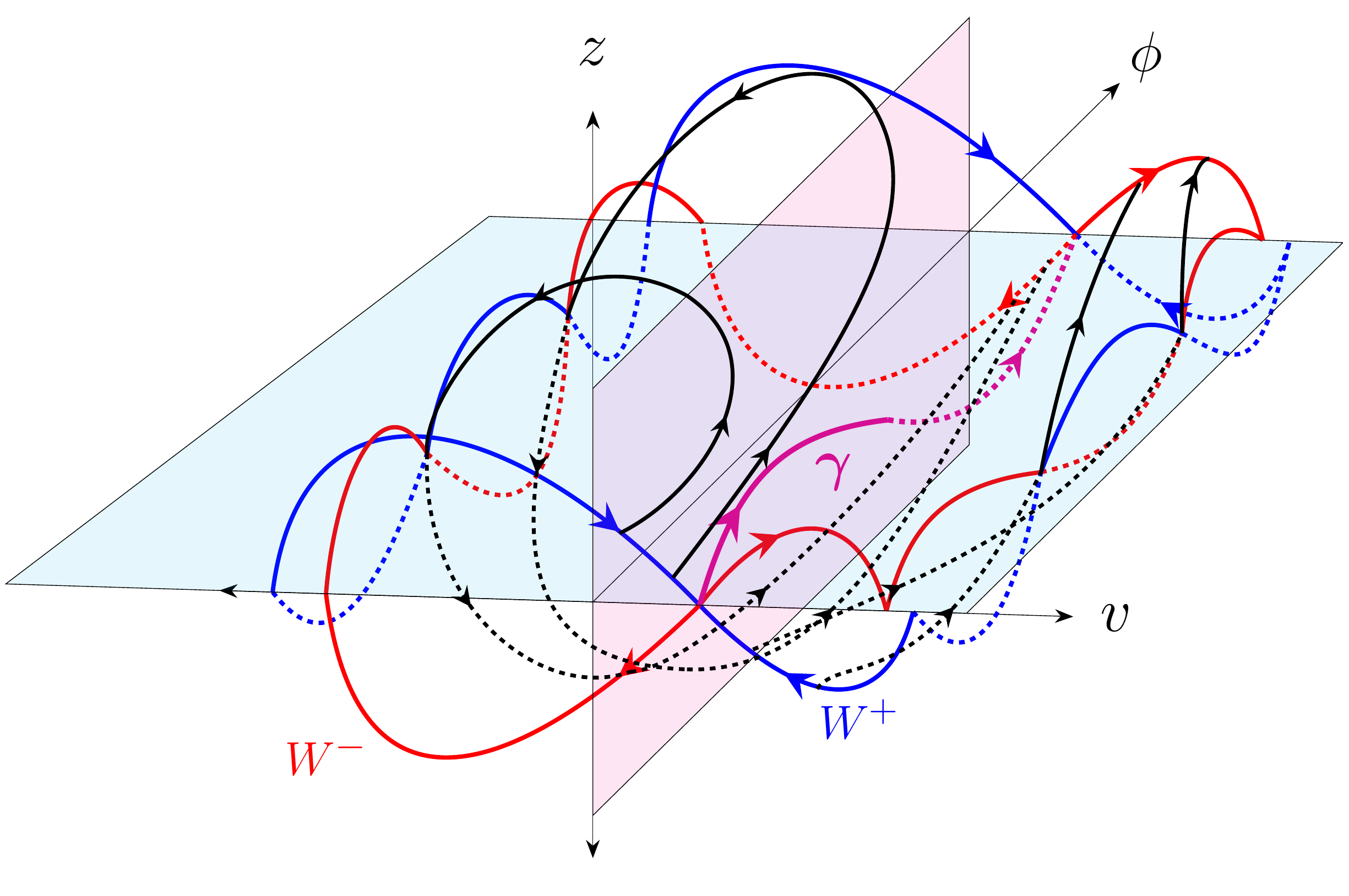} 
\caption{Expected picture for the contracting manifolds $W^\pm$ of the hyperbolic periodic orbit $\gamma$ after a perturbation breaking axisymmetry.}
\label{fig:mflds}
\end{figure}

Note that $\dot{\phi}$ is not of fixed sign (its sign is essentially that of $v$), so one can not draw a first return map to $\phi=0$ modulo $2\pi$.  But one should still get a similar homoclinic tangle (which would be worth illustrating).
This corresponds to repeated transitions between bouncing trajectories and trajectories circulating in either of the two directions. Note that trajectories on the counter-circulating side and the bouncing trajectories on the outside of the figure-eight still have a net drift in the positive direction if close to the separatrix. But further away, $\la \dot{\phi}\ra$ may change to negative.
It looks likely that for each of the counter-circulating and the bouncing classes there is a level set of $p_\phi$ for given $E$ for the unperturbed problem for which $\la \dot{\phi}\ra$ is zero; generically they break into island chains in the $\phi$ direction, creating what are called ``super-bananas'', together with their own broken separatrices.  The scenario merits more in-depth analysis from this point of view (and discussion of ripple-trapping).

The conclusion for isodrasticity is that to be strong isodrastic, one would need to restrict perturbations from axisymmetry to preserve the pairs of homoclinic orbits to the hyperbolic periodic orbits.  This requires two functions $f_\pm:\R \times \R \times \S^1 \to \R$ to be zero, $f_\pm(E,\mu,\phi)$ representing the splitting of the separatrix on the right or left, respectively, for energy $E$ and magnetic moment $\mu$ at phase $\phi$.
It is an open question how one might achieve this in general. Perhaps the helical fields of \cite{KIVM} would give some insights.
 
We note that quasi-symmetry  implies the same situation of double homoclinics to periodic orbits.  Indeed, the analysis proceeds just by replacing conservation of $p_\phi$ by conservation of $K = -e\psi + m v_\pl u\cdot b$, where $u$ is the quasi-symmetry vector field.


\section{Exact treatment of the splitting of separatrices}
\label{sec:splitting}

To decide whether $W^\pm$ for $N^-$ coincide there is a standard technique, usually called Melnikov's method, though its origins go back to Poincar\'e \cite{Poi}.  It is usually presented in the context of time-periodic perturbation of an autonomous 1 DoF system, whereas we want it for an autonomous 2DoF system (nevertheless, see \cite{R,HM82a,HM82b}, for example).  Furthermore, it is usually presented in the case that the unperturbed system has a manifold of hyperbolic periodic orbits with homoclinic connections, whereas here the unperturbed system has a manifold of equilibria with homoclinic orbits and the periodic orbits appear only as a result of the perturbation.  It is often presented as a first-order calculation in some perturbation parameter, but there is an exact version, to be presented here.
Furthermore, the integral of the Melnikov function between zeroes represents a flux of energy-surface volume making a given transition and the flux can be written as the difference in action between homoclinic orbits \cite{MM}.





The NHS $N^-$ is a graph of arclength and parallel velocity over $\Sigma^-$. $\Sigma^-$ is transverse to the magnetic field, so we can label its points (and those of $N^-$) by labels for the fieldline through the point.  We write such labels as $(x,y) \in \R^2$, often shortened to $\xi \in \R^2$.  It is possible to choose them so that the magnetic flux 2-form $\beta = dx \wedge dy$, in which case $(x,y)$ are called Clebsch coordinates, but that is not really necessary.

For ZGCM, the stable and unstable manifolds $W^\pm(\xi)$ of $\xi \in \Sigma^-$ follow the fieldline through $\xi$ and have $u = \pm\sqrt{2(|B(\xi)|-|B|)}$.  We suppose that in a given direction from $\Sigma^-$, the fieldline through $\xi$ reaches a bounce point, where $|B| = |B(\xi)|$ again.  Then $W^\pm(\xi)$ merge at the bounce point.  The union of $W^\pm(\xi)$ over $\xi \in \Sigma^-$ gives a separatrix that it is impossible to cross.

For FGCM with given $\tilde{\mu}$, $W^\pm(\xi)$ for $\xi \in N^-$ follow close to the fieldline through $\xi$ and with parallel velocity $u = \pm\sqrt{2(\tilde{H}(\xi)-|B|)}$. $W^\pm(\xi)$ reach bounce points where $u=0$, equivalently $|B| = \tilde{H}(\xi)$, but in general at different points because they are on different fieldlines by that time.  

Define $\Xi^\pm(\xi)$ to be the fieldlines on which $W^\pm(\xi)$ find themselves at their first bounces.  The splitting of the separatrices of ZGCM under the perturbation parameter $\tilde{\mu}$ is to do with the displacement from $\Xi^+(\xi)$ to $\Xi^-(\xi)$ in the surface of section $\{u=0\}$.
More precisely, given $\xi \in N^-$ let $G(\xi)$ be its trajectory on $N^-$; then
the minimum over $\xi' \in G(\xi)$ of the displacement from $\Xi^+(\xi)$ to $\Xi^-(\xi')$ is an appropriate quantifier of the splitting at $\xi$, because if it is zero that would still make a homoclinic orbit, just with a phase change.  As a global quantifier of the splitting, one can take the maximum over $\xi$.  Note that, in general there is not a natural notion of subtraction in the space of fieldline labels, but one can make one locally.  Furthermore, the limit for small separations makes sense as a tangent vector in the space of fieldlines, which we will use for first-order treatment.

For the moment, we stay with the exact treatment, and show how to compute $\Xi^\pm(\xi)$, at least conceptually.
The idea is to integrate the rate of change in fieldline along $W^\pm(\xi)$.  Note that $W^\pm(\xi)$ are not trajectories in general.  They are the sets of points whose forward/backward trajectory converges together with that of $\xi \in N^-$.  But because there is only one unstable/stable normal direction to $N^-$, they are one-dimensional.  Furthermore, $W^\pm(\xi)$ are in the same energy level as $\xi$.

Because the dynamics on $N^-$ is two-dimensional and Hamiltonian, it consists of periodic orbits or exceptionally, equilibria or connecting orbits between equilibria.  The tangent to $W^+(\xi)$ at an equilibrium $\xi \in N^-$ is the contracting eigenvector of the derivative of the vector field.  An equilibrium in $N^-$ is actually in $\Sigma^-$ so its contracting eigenvector is relatively easy to find.  The tangent to $W^+(\xi)$ at a point $\xi \in N^-$ of a periodic orbit on $N^-$ is the contracting eigenvector of $D\phi_T(\xi)$ where $\phi$ is the flow of FGCM and $T$ is the period of the orbit.  This is not so straightforward to compute but is feasible.  For $\xi$ on a connecting orbit on $N^-$ from one equilibrium $\xi_1$ to another $\xi_2$, the tangent to $W^+(\xi)$ is given by flowing the contracting eigenvector at $\xi_2$ backwards along the connecting orbit to $\xi$.

The analogous constructions for the opposite direction of time give the tangent to $W^-(\xi)$ for all points $\xi \in N^-$.

For $\xi \in N^-$ an equilibrium, $W^\pm(\xi)$ are trajectories, and they start along the fieldline with $u = \pm \sqrt{-|B|''}\, s$ to first order in $s$.  The velocity $dX/d\tau$ can be pulled back to $\Sigma^-$ by the derivative of $b$ (plus a correction to bring it tangent to $\Sigma^-$) and thus the point $\Xi^+(\xi)$ can be obtained by integrating this pulled-back vector field on $\Sigma^-$.  Similarly for $\Xi^-(\xi)$.

For $\xi \in N^-$ on a periodic orbit, if we already found a small piece of $W^+(\xi)$ near $\xi$ then we can construct a longer piece by applying $\phi_{-T}$ to it, where $T$ is the period.  So we can take a small piece given approximately by the straight line in the tangent direction at $\xi$ and iterate it backwards by $\phi_T$.  The forwards contracting manifold of a periodic orbit $\gamma$ is the union of $W^+(\xi)$ over $\xi \in \gamma$.


Note that the choice of surface of section to be the bounce surface is not necessary, but is somewhat natural in this problem.


Note also that this exact picture gives an exact formula for the flux of energy-surface volume making a transition.  The transitions are represented by lobes formed by the backwards and forwards contracting manifolds between successive intersections.  The intersections correspond to homoclinic orbits.  The flux of energy-surface volume across such a lobe is precisely the difference in action between the two homoclinic orbits.  Here, the action of a curve is $\int eA^\flat + mv_\pl b^\flat$ and to compare the actions of two homoclinic orbits one has to take the limit of long segments whose backwards endpoints converge together (and to the periodic orbit) and forwards endpoints converge together (and to the periodic orbit).
This follows from \cite{MM}.

\section{Melnikov functions}
\label{sec:Mel}
Here we will show that to first order the conditions for strong isodrasticity are precisely the weak isodrasticity conditions, and give a computational test for weak isodrasticity in terms of ``Melnikov functions'' that is simpler than that of Section~\ref{sec:wiso}.  Finally, we interpret the Melnikov functions in terms of flux of phase-space volume making transitions.

\subsection{Strong isodrasticity to first order}
To compute $\Xi^{\pm}$ to first order in $\tilde{\mu}$, we can integrate the rate of change of fieldline label along the unperturbed trajectory of ZGCM, compensated by the rate of change of the fieldline label at the base point.

Recall that the equations to first order can be written in scaled time $\tau$ as
$$\frac{dX}{d\tau} = ub+\frac{\sqrt{\tilde{\mu}}}{|B|} \left( u^2 c_\perp  + b \times \nabla |B| \right) ,$$
with 
\begin{equation}
u = \pm\sqrt{2(h-|B|)}
\label{eq:uh}
\end{equation}
and $c = \curl\, b$ (as already mentioned, $c_\perp$ can be written as $b\times \kappa$ with $\kappa = b\cdot \nabla b$).

Let $\xi$ be a coordinate on $\Sigma^-$ (e.g.~$x$ or $y$ from the previous section) and extend it along the fieldlines to a fieldline label.  Its derivative $a = d\xi$ (a covector) can be computed by setting $a(0)=d\xi$ on tangents to $\Sigma^-$ and $a(0) b = 0$, and integrating the adjoint equation along the fieldlines: 
\begin{equation}
\frac{da_i}{ds} = -a_j(s)\partial_i b^j(x(s)), \quad \frac{dx^i}{ds} = b^i(x(s)),
\label{eq:a}
\end{equation}
where $s$ is arclength. 
The covector $a$ quantifies the linearised change in the fieldline label.
For trajectories starting on $N^-$, take $h$ in (\ref{eq:uh}) to be $|B|$ on $\Sigma^-$ to first order.
Then the first-order change $\Delta\xi$ in fieldline label $\xi$ along $W^-$ is given by integrating the rate of change of $\xi$ by the perturbed vector field along the unperturbed trajectory, while simultaneously subtracting off its rate of change at the base point.

Thus
$$\Delta \xi = \sqrt{\tilde{\mu}} \int \frac{u}{|B|} a\, c_\perp + \frac{1}{u} \left(a \frac{b\times \nabla |B|}{|B|} - C\right)\ ds,$$
where $C = a\frac{b\times\nabla |B|}{|B|}$ at the initial point on $\Sigma^-$.
The subtraction of $C$ compensates for the denominator $u$ at the start of the integration, where $u \sim \sqrt{-|B|''}\, s$ for small $s$.  The denominator does not give a problem at a generic bounce point because it behaves like $\sqrt{s_0-s}$ there, which gives an integrable singularity; a tidy way to compute the integral accurately is to switch variable of integration from $s$ to $u$ near the bounce.

The first-order change along $W^+$ is the negative of $\Delta\xi$.  So the displacement in the chosen fieldline label $\xi$ from $W^+$ to $W^-$ is to first order $$2 \Delta \xi = 2 \sqrt{\tilde{\mu}} \mathcal{M}_\xi,$$ with 
$$\mathcal{M}_\xi =  \int \frac{u}{|B|} a\, c_\perp + \frac{1}{u} \left(a \frac{b\times \nabla |B|}{|B|} - C\right)\ ds.$$
It is a function of the initial point on $\Sigma^-$ and we call it the {\em Melnikov function} for the fieldline label $\xi$, by analogy with formulae for splitting of separatrices in other contexts.  

If one takes two independent fieldline labels $x,y$ then the Melnikov function has two components $\mathcal{M}_x, \mathcal{M}_y$.  This is particularly important when one takes initial points on or near a critical point of $|B|$ on $\Sigma^-$.  Compare the discussion of the mirror machine in Section~\ref{sec:mirror}.

For a periodic orbit on $N^-$, however, we are more interested in the displacement in fieldline label $h=|B|$ where it crosses $\Sigma^-$ than the possible phase shift along the periodic orbit.  Thus we specialise to take $h$ as coordinate on $\Sigma^-$ and compute the resulting first-order change in $h$.  In this case several things simplify.  Firstly, we can take $a(0) = d|B|$ because $h=|B|$ on $\Sigma^-$ and $i_b d|B| = 0$ on $\Sigma^-$.   Secondly, $C=0$.  Then letting
$$k = \curl (ub)$$ we obtain
$$\mathcal{M}_h =  \int \frac{a k_\perp}{|B|} \ ds$$
and $a b = 0$ so we can drop the $\perp$. Thus we obtain the simple formula
\begin{equation}
\mathcal{M}_h = \int \frac{a k}{|B|}\, ds.
\label{eq:M_h}
\end{equation}
Of course this hides the computation of $k$ and also the fact that $k$ is singular at $s=0$ (but the singularity is annihilated by $a$).

Note that in practice it may be better to integrate with respect to time $T$ for fieldline flow  $dx/dT = B(x)$ than arclength $s$.  This would replace $ds/|B|$ by $dT$, and also simplify the computation of (\ref{eq:a}) as indicated in Appendix~\ref{app:sqrts}.

\subsection{Relation to weak isodrasticity}
Next, we relate the Melnikov function $\mathcal{M}_h$ to the ratio of $dh \wedge d\text{\j}$ to $\beta$ on $\Sigma^-$.
\begin{thm}
$dh \wedge d\text{\j} = \mathcal{M}_h\beta.$  
\end{thm}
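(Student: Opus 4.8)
The plan is to reduce the claimed $2$-form identity on $\Sigma^-$ to a scalar statement and then match integrands. Since $\beta$ is a nondegenerate area form on the surface $\Sigma^-$, I introduce the reduced Hamiltonian vector field $X_h$ of $h$, defined by $i_{X_h}\beta = -dh$; this is precisely the limiting reduced dynamics (\ref{eq:reduceddyn}) on the separatrix, where $dH_j \to dh$. Contracting $dh\wedge d\text{\j} = \mathcal{M}_h\,\beta$ with $X_h$ and using $i_{X_h}dh = 0$ gives $-(i_{X_h}d\text{\j})\,dh$ on the left and $-\mathcal{M}_h\,dh$ on the right, so the identity is equivalent to the scalar relation $i_{X_h}d\text{\j} = \mathcal{M}_h$ (valid wherever $dh\neq 0$, hence everywhere by continuity). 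In Clebsch coordinates with $\beta = dx\wedge dy$ this is the familiar $h_{,x}\text{\j}_{,y} - h_{,y}\text{\j}_{,x} = \mathcal{M}_h$, matching (\ref{eq:Mfromhj}).

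Next I would compute $i_{X_h}d\text{\j}$ by the variation-of-action calculation of Proposition \ref{prop:dh}, taken in the homoclinic limit in which one turning point sits at the critical point $x\in\Sigma^-$ and $h = |B(x)|$. Because $X_h$ is tangent to the level set of $h$ on $\Sigma^-$, we have $L_{X_h}h = 0$, so the $\tfrac{dj}{dh}L_X h$ term drops and, writing $X$ for the fieldline extension of $X_h$ (Lie-dragged, $[b,X]=0$) and $u = \sqrt{2(h-|B|)}$,
\begin{equation*}
i_{X_h}d\text{\j} = \int_\gamma\left(-\frac{X_\perp\cdot\nabla|B|}{u} + u\,\Omega(c,X,b)\right)ds.
\end{equation*}
The integrand is bounded at the critical-point end: there $|B|' = 0$ and $i_{X_h}d|B| = 0$ force $X_\perp\cdot\nabla|B| = O(s)$ while $u = O(|s|)$; this is the coordinate-free counterpart of the vanishing of the subtraction constant $C$ noted for the $h$-label.

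The core step is to identify this with $\mathcal{M}_h = \int_\gamma \tfrac{a\,k}{|B|}\,ds$. Expand $k = \curl(ub) = \nabla u\times b + u\,c$. The linchpin is the duality $a = -i_X\beta$ along the whole orbit: both $a$ (by the adjoint equation (\ref{eq:a}), which is $L_b a = 0$) and $i_X\beta$ (since $L_b\beta = 0$ from $d\beta=0$ and $i_b\beta=0$, together with $[b,X]=0$) are Lie-transported along the fieldline, and they agree on $\Sigma^-$ because $i_{X_h}\beta = -dh = -a$ there. Granting this, $a(W) = |B|\,\Omega(X,b,W)$, so the grad-$B$-drift piece gives $\tfrac{a\cdot(\nabla u\times b)}{|B|} = X_\perp\cdot\nabla u = -\tfrac{X_\perp\cdot\nabla|B|}{u}$ using $b\times(\nabla u\times b) = (\nabla u)_\perp$ and $\nabla u = -\nabla|B|/u$, while the curvature piece gives $\tfrac{u\,(a\cdot c)}{|B|} = u\,\Omega(c,X,b)$ by the cyclic symmetry of $\Omega$. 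These reproduce the two terms of the displayed formula exactly, completing the proof.

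The main obstacle I anticipate is establishing and using the transport duality $a = -i_X\beta$ cleanly: one must verify $L_b a = 0$ and $L_b(i_X\beta) = 0$ and check that agreement on $\Sigma^-$ propagates along $\gamma$. The secondary technical points are the rigorous homoclinic limit of Proposition \ref{prop:dh} (the two individually divergent contributions at the critical point combining into a convergent integral) and the integrability of the integrand at both the critical-point and bounce-point ends.
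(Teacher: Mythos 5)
Your proof is correct and is essentially the paper's own argument: both reduce the $2$-form identity to computing the derivative of $\text{\j}$ along the level direction of $h$ and identify the result with $\int a(k)\,ds/|B|$ using $d(u\,b^\flat)=i_k\Omega$ together with Lie-transport of fieldline labels along $b$ (your duality $a=-i_X\beta$ is the paper's observation that $i_{\tilde{v}}dh$ and $i_{\tilde{v}}dg$ are constant along the fieldline). The differences are only organisational --- you contract with the reduced Hamiltonian vector field $X_h$ instead of evaluating on a Darboux pair, and you match the two integrands term by term rather than transforming $d\text{\j}(v)$ directly into $\mathcal{M}_h\,i_v dg$.
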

\noindent Thus the main part of the weak isodrastic condition, that $d\text{\j}$ and $dh$ be linearly independent on $\Sigma^-$, is the first-order condition for strong isodrasticity, and $\cM_h$ is the function $\cM$ of Section~\ref{sec:wiso}.

\begin{proof}
Recall that $\text{\j} = \int u\, ds = \int u b^\flat$ along the segment of fieldline from $\Sigma^-$ to the first bounce point.  Hence for a tangent vector $v$ to $\Sigma^-$, $d\text{\j} (v) = \int i_{\tilde{v}}d(ub^\flat)$, where $\tilde{v}$ is obtained from $v$ by flowing with the derivative of $\dot{x}=b(x)$ ($\dot{\tilde{v}}^i = \tilde{v}^j \partial_j b^i$).
But $d(ub^\flat) = i_k\Omega$ and displacement along the segment is given by $b\, ds$, so
$$d\text{\j}(v) = \int i_b i_{\tilde{v}} i_k \Omega\, ds.$$
Now $\Omega = \beta \wedge b^\flat /|B|$, and $i_b \beta = 0, i_b b^\flat = 1$, so this reduces to
$$d\text{\j}(v) = \int  i_{\tilde{v}}i_k \beta \frac{ds}{|B|}.$$

Near a point of $\Sigma^-$ with $dh \ne 0$ we can choose $h$ as one fieldline label and can take another one $g$ such that $\beta = dh \wedge dg$ (by Darboux's theorem; this is a construction of Clebsch coordinates).  Extend $h$ and $g$ to fieldline labels along the fieldlines.  Let us take $v$ to have $dh\, v = 0$.  Then 
$$d\text{\j} (v) = \int i_k dh\, i_{\tilde{v}} dg \frac{ds}{|B|} = \int i_k dh \frac{ds}{|B|} i_v dg = \mathcal{M}_h\, i_v dg,$$ 
as $i_{\tilde{v}}dg$ is constant along the fieldline.
Then evaluate $dh \wedge d\text{\j} (w,v)$ on $\Sigma^-$ for an arbitrary $w$.  It is $dh(w) d\text{\j}(v) = dh(w) \mathcal{M}_h\, i_v dg = \mathcal{M}_h\, \beta(w,v)$.  Since $\Sigma^-$ is only two-dimensional, it suffices to evaluate 2-forms on any pair of independent vectors to determine them.  Hence $dh \wedge d\text{\j} = \mathcal{M}_h\, \beta$, i.e.~$\mathcal{M}_h$ is the ratio of $dh \wedge d\text{\j}$ to $\beta$ on $\Sigma^-$.

At a point of $\Sigma^-$ where $dh=0$ then both $\mathcal{M}_h$ and $dh \wedge d\text{\j}$ are zero, so the same result holds.
\end{proof}



Computation of the Melnikov function via (\ref{eq:M_h}) (rather than by numerical differentiation of $h$ and $\text{\j}$) for some examples will be reported in a future paper.

To complete the discussion of strong isodrasticity at first order, we look at $\Sigma^0$.
The strong isodrasticity condition that $H=\tfrac12 u^2 + |B|$ is constant along components of $N^0$, evaluated to first order is equivalent to $h$ being constant on components of $\Sigma^0$.
Linear dependence of $dh$ and $d\text{\j}$ on $N^-$ extends to the boundary, so from $h$ constant on boundary components we also deduce that $\text{\j}$ is constant on them.

Hence strong isodrasticity at first order is equivalent to weak isodrasticity, as claimed.

\subsection{Interpretation as flux}
To conclude this section, we interpret the Melnikov function in terms of the first-order flux of phase-space volume making the given transition.
This is analogous to the interpretation of $dh\wedge d\j$ as transition flux for reduced dynamics in Section~\ref{sec:wiso}.

For a 2 DoF Hamiltonian system with Hamiltonian $H$, symplectic form $\omega$ and vector field $V$ defined by $i_V\omega = dH$, the phase-space volume form is $\tfrac12 \omega \wedge \omega$ and the energy-surface volume form $\eps$ on $H^{-1}(E)$ is defined so that $\tfrac12 \omega\wedge\omega = dH \wedge \eps$.  A standard calculation (e.g.~\cite{MM,M90}) shows that the energy-surface volume flux form $\phi = i_V\eps$ is just $\omega$.  

Specialising to the mirror machine example for illustration, it follows that the flux of energy-surface volume making the transition from free to bouncing  is the integral of $\omega$ over the lobe between $W^\pm$ on the bounce surface $B=\{v=0\}$ in Figure~\ref{fig:M4}.
In the scaling we are using, $\omega = \beta/\sqrt{\tilde{\mu}} + d(u b^\flat)$, so is just $\beta/\sqrt{\tilde{\mu}}$ on $\{v=0\}$.  Choose a local coordinate $g$ on $\Sigma^-$ such that $\beta = dh \wedge dg$ (possible where $dh\ne 0$ by Darboux's theorem) and flow the functions $h$ and $g$ along the field.  Then the flux is $\int dh \wedge dg/\sqrt{\tilde{\mu}}$ over the lobe.  But the change in $h$ from $W^+$ to $W^-$ is to first order $2 \sqrt{\tilde{\mu}}\mathcal{M}_h$ so the flux is $\int 2 \mathcal{M}_h dg$ along the arc of the level set of $h$ between zeroes of $\mathcal{M}_h$.

We can treat a range of energies simultaneously.  The flux-form for phase space volume is $i_V(\tfrac12 \omega \wedge \omega) = dH \wedge \omega$.  So, given an area $A$ on $\Sigma^-$ corresponding to a transition, the flux of phase-space volume making the transition is the integral of $dH \wedge \omega$ across the corresponding region on $\{v=0\}$.  To first order, $dH = dh$, because we are treating trajectories that graze the maximum in $|B|$.  So the flux is 
$\int_A 2\mathcal{M}_h dh \wedge dg = \int_A 2\mathcal{M}_h \beta$.  The factor 2 comes from the convention that $L$ is computed from one zero of $v$ to the other, whereas for a full period it would be twice as much.

So far, we have been treating the problem in scaled variables.  To turn this from scaled variables to the original variables, we use that the real symplectic form is $\sqrt{m\mu}$ times the scaled one (\ref{eq:scaledomega}), so real phase-space volume is $m\mu$ times the scaled one, and real time is $\sqrt{m/\mu}$ times scaled time. So real phase-space flux across $A$ is:
$$2 m^{1/2}\mu^{3/2} \int_A \mathcal{M}_h\beta .$$
A point with given $h$ on $\Sigma^-$ corresponds to real energy $\mu h$.

To convert these results to the flux of particles, we need to introduce a distribution function $\rho$, giving their density with respect to phase-space volume (this use of $\rho$ is distinct from that for the gyroradius).  It is best to include the dependence of the density on $\mu$, rather than treating a fixed $\mu$. The phase-space volume form $dq_1\wedge dq_2\wedge dq_3 \wedge dp_1\wedge dp_2\wedge dp_3$ for the 3DoF problem converts in gyro-coordinates to $m^2 \tilde{B}_\pl\, \Omega \wedge dv_\pl \wedge d\mu \wedge d\phi$, with $\Omega$ being ordinary volume for guiding-centre position and $\phi$ being gyrophase.  Integrating over gyrophase we obtain the gyro-averaged volume-form $2\pi m^2 \tilde{B}_\pl\, \Omega \wedge dv_\pl \wedge d\mu$.  This is preserved by the Hamiltonian guiding-centre flow (though not in general by (\ref{eq:approxdrift}), even if $\tilde{B}_\pl$ is replaced by $|B|$, 
another reason to prefer the Hamiltonian version). 
So the number of particles of given type in a volume $W$ of gyro-averaged phase-space can be written as 
$\int_W \rho(X,v_\pl,\mu)\, 2\pi m^2 \tilde{B}_\pl(X)\, \Omega \wedge dv_\pl \wedge d\mu,$ where $\rho$ denotes the usual scalar guiding-center distribution function.

Note that the gyro-averaged volume-form is the wedge of $2\pi \tfrac{m}{e} d\mu $ with the guiding-centre volume-form.  The prefactor is because the standard convention for action variables was broken by {many} plasma physicists (though tends to be respected by {A.J.~Brizard and} high-energy particle physicists).  To see this factorisation of the gyro-averaged volume-form, reall from (\ref{eq:Lambda}) that the guiding-centre volume-form $\Lambda = em \tilde{B}_\pl\, \Omega \wedge dv_\pl$. 

Thus to obtain the flux of particles from the flux of guiding-centre volume, one has to multiply by $2\pi \tfrac{m}{e} \rho\, d\mu$ and integrate over $\mu$.
To leading order in $\mu$ we can replace $\tilde{B}_\pl$ by $|B|$.
So the flux of particles corresponding to an area $A$ on $\Sigma^-$, to leading order in $\mu$, is
$$\int_A \int_{\R_+} \tfrac{4\pi}{e} {(m\mu)^{3/2}} \rho(X,0,\mu)\, d\mu\,  \mathcal{M}_h \beta .$$

A point to note is that the picture to first order depends on $E$ and $\mu$ through only the combination $E/\mu$.  The rate of transition has a prefactor $\sqrt{\mu}$ (or other powers depending on scaling), but is otherwise independent of $\mu$ given $E/\mu$.  As pointed out by Roscoe White (personal communication), this gives hope that fusion $\alpha$-particles might remain in their initial class despite slowing down, because it seems for fusion $\alpha$-particles $E$ and $\mu$ decrease in such a way that $E/\mu$ remains roughly constant.

\section{Discussion}
We have generalised the concept of omnigenous magnetic fields and their analysis by \cite{CS}, to remove the requirement of a flux function (and even with a flux function, our condition is much weaker), obtaining the concept of weakly isodrastic field.  They are magnetic fields for which there are no transitions between classes of guiding-centre motion, under the assumption that the longitudinal invariant is conserved.
Furthermore, we have provided a quantification of deviation from the ideal case, namely Melnikov functions, thereby making available objective functions for optimisation of design.

We have extended our theory, by removing the assumption of conservation of the longitudinal invariant, to a notion of strong isodrasticity, which provides an exact prevention of transitions between classes.  We have proved that to first-order in the magnetic moment, strong isodrasticity is weak isodrasticity, thus justifying weak isodrasticity as a first-order approximation.
We have illustrated how isodrasticity is lost for general perturbations from axisymmetric fields for toy mirror machines and a toy tokamak.
The exact theory has the advantage that it can in principle be applied to higher order guiding-centre approximations, of potential relevance to the fusion alpha-particles.

We have shown how to construct many weakly isodrastic mirror fields, in particular that are not omnigenous.
The main question that remains is whether isodrasticity (weak or strong) is possible outside axisymmetry for a stellarator.  Quasisymmetry implies isodrasticity.  Although perfect quasisymmetry perhaps does not exist outside axisymmetry, close to quasisymmetric fields can be made (for recent examples, see \cite{LP}).
Close to quasisymmetric fields should be close to isodrastic, but perhaps there is a larger class of isodrastic fields than quasisymmetric.  Omnigenous fields are weak isodrastic.  Perhaps truly omnigenous fields have to be quasisymmetric, but close to omnigenous fields can be made, such as in Wendelstein 7-X.  As weak isodrasticity is weaker than omnigenity, there is hope that truly isodrastic fields can be made. One option is to make all the marginal cases heteroclinic, but we have shown (not included in this paper) that this reduces to omnigenity.  So we wish to make examples where most of the marginal cases are homoclinic, as discussed for typical perturbations of a tokamak in Section~\ref{sec:tok}, but there will be codimension-one cases of heteroclinic, leading to the double transition scenario of Appendix~\ref{app:double}.
We are looking for examples.  
For general Hamiltonian systems it is certainly true that there is a larger class of systems with some perfect separatrices than the integrable class (see Appendix~\ref{app:sep}).
Also, the concept of isodrasticity motivates some clear objective functions that could be fed to an optimisation to automate a search.  For example, one could use the maximum of the Melnikov function or the integral of its positive part.  Even if one does not make exact isodrasticity, these objective functions could be weighed against others in the design of stellarators.

The theory also suggests that one might be able to make controlled transitions between classes by slightly breaking isodrasticity via trim coils.  The Melnikov function for a normally hyperbolic submanifold specifies the flux of energy-surface volume making a given transition, so if one could learn the effects of trim coils on the Melnikov function then one could control the flux.


Finally, the theory sheds light on the effects of imperfections on
tokamaks and quasi-symmetric stellarators.  The marginal bouncing trajectories of the ideal case give rise to double homoclinic manifolds to hyperbolic periodic orbits for guiding-centre motion.  Breaking the symmetry in general breaks both these separatrices, leading to a stochastic layer in which trajectories transition between bouncing and co- and counter-circulating.
Furthermore, the guiding-centre dynamics near the magnetic axis is in general significantly disturbed from the axisymmetric case, leading to mixing in the core (though this is not necessarily a bad thing, as discussed by \cite{Boozer}).

\section*{Acknowledgements}
This work was supported by the Simons Foundation (601970, RSM) under the ``Hidden symmetries and fusion energy'' collaboration. 
We also acknowledge funding support from the US DOE Office of Advanced Scientific Computing Research (ASCR), DOE-FOA-2493 “Data-intensive scientific machine
learning and analysis”.
We are grateful to Nikos Kallinikos for the etymology and for making some of the figures, to Elizabeth Paul for trying out our ideas in various realistic nearly quasi-symmetric fields, and to others in the collaboration for their comments, notably Matt Landreman, Per Helander, Roscoe White, Eduardo Rodriguez and Gabriel Plunk.  Also we thank Anatole Neishtadt and John Cary for useful pointers to literature.

\appendix

\section{Electrostatic and gravitational fields and relativity}
\label{app:esrel}

\subsection{Electrostatic and gravitational fields}
To add the effects of an electrostatic or gravitational field on guiding centre motion, one takes
$$H = \tfrac12 mv_\pl^2 + \mu|B|+e\Phi + m V,$$ where $\Phi$ and $V$ are electrostatic and gravitational potentials, respectively.
The symplectic form is unchanged.
The resulting equations of motion are
\begin{align}
\dot{X} &= \left({v_\pl} \widetilde{B} + \frac{\mu}{e} b \times \nabla|B| + b \times \nabla \Phi + \frac{m}{e}b \times \nabla V \right)/\widetilde{B}_\pl \nonumber \\
\dot{v}_\pl &= - \frac{\widetilde{B}}{\widetilde{B}_\pl} \cdot \left(\frac{\mu}{m}\nabla |B| + \frac{e}{m}\nabla \Phi + \nabla V
\right). \nonumber
\end{align}

ZGCM is just 1 DoF motion of unit mass in the potential $\frac{\mu}{m}|B|+\frac{e}{m}\Phi + V$.  The longitudinal invariant  is modified to $$L =\int_{s_1}^{s_2} \sqrt{2m(E-\mu|B|-{e}\Phi-mV)}\, ds.$$

The surface $\Sigma$ is modified to be the set of points at which the first derivative $\mu |B|' + e\Phi' +m V' = 0$, where $'$ again denotes derivative along the magnetic field, and decomposes into $\Sigma^\pm$ and $\Sigma^0$ according to the sign of the second derivative.  But $\Sigma$ now depends on the ratio $\mu:e:m$.  Weak isodrasticity becomes that $dE\wedge dL = 0$ on $\Sigma^-$, where $L$ is for segments starting on $\Sigma^-$.  The exact FGCM dynamics is still 2DoF and has NHS with contracting submanifolds, whose intersections can be analysed the same way.

\subsection{Relativity}
As explained in an appendix to \cite{BKM}, relativity can be incorporated in either laboratory time or proper time.  We treat the former here. The first adiabatic invariant becomes $\mu = \frac{p_\perp^2}{2m|B|}$ with $p = \gamma m v$, $\gamma =(1-|v|^2/c^2)^{-1/2}$.  The guiding-centre Hamiltonian becomes $$H = c\sqrt{m^2c^2+p_\pl^2+2m\mu|B|}+e\Phi+mV.$$
The second adiabatic invariant becomes $$L =\int p_\pl\, ds = \int \sqrt{(\tfrac{E}{c}-e\Phi-mV)^2-m^2c^2-2m\mu|B|}\, ds$$
at $H=E$.  If $\Phi$ and $V$ are constant we again have $\Sigma$ the set of points where $|B|'=0$ and the same analysis of weak isodrasticity.  With $\Phi$ or $V$, however,
the shape of the argument of the square root now depends on $E$ so $\Sigma$ becomes $E$-dependent and the nice picture breaks down.

Nonetheless the exact FGCM still has a NHS continuing the non-relativistic $\Sigma$ to the relativistic case, at least for $\mu$ not too large.  The same picture of its contracting manifolds holds.  
For magnetically confined fusion devices, probably adiabatic invariance of $\mu$ fails before relativistic effects come in (the alpha particles produced by DT fusion have speed 4.3\% of the speed of light and in a 1T field have gyroradius 0.266 metres times the sine of their pitch angle).  But perhaps in some astrophysical contexts, relativistic guiding-centre motion is relevant.

\section{Fields with a flux function}
\label{app:fluxfn}
We take the opportunity to review 
the theory of magnetic fields possessing a flux function, including some results that we have not found in the literature.

Existence of a flux function is automatic for non-degenerate magnetohydrostatic (MHS) fields, i.e.~those satisfying $i_Ji_B\Omega = -dp$ ($J \times B = \nabla p$) for some function $p$ with $dp \ne 0$ ($\nabla p \ne 0$) almost everywhere, where $i_J\Omega = dB^\flat$ ($J=\curl\ B$):~one can take $\psi = p$ (though it is generally preferred to take the toroidal flux enclosed by a level set of $p$).  It is also automatic for most axisymmetric fields:~let $u$ be the vector field $\partial_\phi$ in cylindrical coordinates, then $d i_u i_B \Omega=0$ ($\curl(B\times u) = 0$) so $i_u i_B\Omega = d \psi$ ($B\times u = \nabla \psi$) for some function $\psi$ locally, and generally there is no cohomological obstruction, so then $\psi$ is a global function, $i_B d\psi = 0$ and $d\psi \ne 0$ except where $B$ is parallel to $u$.

The components of level sets of $\psi$ are called {\em flux surfaces}.  They are called {\em regular} if $d\psi \ne 0$ ($\nabla \psi \ne 0$) on them.  The bounded boundaryless regular flux surfaces are co-oriented by $\nabla \psi$ and hence oriented. Because they carry a nowhere-zero vector field $B$ their Euler characteristic is zero, so they are tori.  Furthermore, they carry an area-form $\cA$ satisfying $\cA \wedge d\psi = \Omega$, e.g.~$$\cA = i_n \Omega$$ ($\cA(\xi,\eta) = n\cdot(\xi \times \eta)$) with $n = \nabla \psi/|\nabla \psi|^2$.  The fieldline flow preserves $\cA$ on the flux surfaces, because applying $di_B$ to $\cA \wedge d\psi = \Omega$ produces $di_B\cA \wedge d\psi = 0$ and so $di_B\cA = 0$ on tangents to the flux surface.  So there is no asymptotic convergence of fieldlines in either direction of time.  This forces the flow to have a cross-section (a closed curve transverse to the vector field such that the trajectory of every point crosses it in forward and backward time) because the only other option for a nowhere-zero vector field on a 2-torus has a ``Reeb component'' (an annulus bounded by periodic orbits in opposite directions) and area can not be preserved.  See sec.4 of \cite{BGKM} for a summary of the theory of vector fields on a 2-torus.

Furthermore, integrating $i_B\cA$ from a reference point produces a local coordinate on the flux surface that is preserved by the flow.  
It follows that the return map to a cross-section is a rigid translation (in this coordinate).  Thus, the fieldline flow on each torus is equivalent to that of a (non-zero) constant vector field on a standard torus $\R^2/\Z^2$ up to a possible time-change (and examples can be made for which that is necessary) (equivalently, conjugate to a vector field with a constant direction).  In particular, the fieldlines wind around the torus with a well-defined winding ratio (limit of ratio of number of turns in two angles), called its {\em rotational transform} $\iota$, and in the rational case all the fieldlines are closed.

In the case of Diophantine winding ratio, i.e.~$|k\iota-m| \le C k^{-\sigma}$ for some $C>0$, $\sigma\ge 1$ for all integers $k,m$ with $k>0$, if the flow is smooth enough then it is
conjugate to a constant vector field without any need for a time-change factor, but the conjugacy is in general less smooth than the flow (by about $\sigma$ derivatives, depending whether one works in $C^r$ or Sobolev spaces).  The $C^\infty$ case is treated by \cite{KH} in Prop 2.9.5, where $C^\infty$ conjugacy results, but the method of proof there gives results for less smoothness.

\section{Omnigenity for passing trajectories}
\label{app:omnigen}
In \cite{LC}, omnigenity is shown to imply that $L$ is constant on flux surfaces for bouncing trajectories of given energy and class (actually, the extension to more than one class comes in \cite{PCHL}).
As mentioned at the end of \cite{BKM}, however, omnigenity also implies that $L$ is constant for circulating trajectories of given energy on each rational flux surface, and  to deduce omnigenity both conditions must be satisfied. 
In the meantime, we realised that constancy of $L$ for circulating trajectories on a rational surface is implied by constancy for bouncing trajectories.

In this appendix we first prove that omnigenity requires constancy of $L$ for all circulating trajectories on a rational flux surface.  Then we prove that this is implied by the same statement for bouncing trajectories.
After this, we give an equivalent proof to \cite{H} that the time-average of $V\cdot \nabla \psi$ is zero for circulating trajectories on irrational tori.

It follows that constancy of $L$ for bouncing trajectories of given energy and class on each flux surface implies omnigenity, a result that we feel had not been established correctly before.  The
continuity argument of \cite{H} from irrational to rational surfaces shows only that the flux-surface average of $i_V d\psi$ for circulating particles is zero on
each rational surface.  But the time-averages along
individual fieldlines on a rational surface need not agree with the flux-surface average.

\subsection{Omnigenity for circulating trajectories on rational flux surfaces}
\label{sec:rattrajs}
The longitudinal adiabatic invariant for general periodic orbits of zeroth-order guiding-centre motion (ZGCM) (not just bouncing ones) is 
$$L=\int_\gamma e A^\flat + m v_\pl b^\flat,$$  
where $\gamma$ is the segment of fieldline covered.  The term in $A^\flat$ gives zero when integrated along a bouncing trajectory because it backtracks exactly, which is why it is usually left out.  Also it is a constant for given flux surface when $\gamma$ is a closed loop restricted to the flux surface, so plays no role in the present discussion.  Thus we take $L = \int_\gamma mv_\pl b^\flat$ for both.  

For circulating periodic orbits $\gamma$ of ZGCM for a field with a flux function $\psi$, the first-order drift in $\psi$ averaged over one period $T$ is
$$\la \dot{\psi}\ra = \frac{1}{T} \int_\gamma V\cdot \nabla \psi \ dt = \frac{1}{T} \int_\gamma \frac{i_Vd\psi}{v_\pl} b^\flat,$$
with 
\begin{equation}
\tfrac12 mv_\pl^2 = E-\mu |B|.
\label{eq:vpl}
\end{equation}
The definition of omnigenity uses the non-Hamiltonian velocity $$V = 
v_\pl b + \frac{mv_\pl^2}{e|B|}c_\perp + \frac{\mu v_\pl}{e|B|}b\times\nabla |B|$$
from (\ref{eq:approxdrift}) rather than the Hamiltonian one 
of (\ref{eq:Xdrift}-\ref{eq:tildeB}).  We drop the $\dot{v}_\pl$ component because we need only $V\cdot \nabla \psi$.
Then using $B\cdot \nabla \psi = 0$, we obtain
$$\la \dot{\psi}\ra = \frac1T \int_\gamma \frac{1}{e|B|} (mv_\pl c + \frac{\mu}{v_\pl}b \times \nabla |B|) \cdot \nabla \psi \ b^\flat
= \frac1T \int_\gamma (\tfrac{mv_\pl}{e|B|} c \cdot \nabla \psi - \tfrac{\mu}{e v_\pl} \xi \cdot \nabla |B|) b^\flat,$$
where $$\xi = (b \times \nabla \psi)/|B|.$$

To test whether $L$ is constant on a flux surface (for given $E$), it is enough to test whether
$dL\ \xi =0$, because tangents to a flux surface are linear combinations of the vector fields $B$ and $\xi$, and the change of $L$ along $B$ is zero.
Using the Lie derivative $L_\xi = i_\xi d + d i_\xi$ (distinguish the function $L$ and the Lie derivative $L_\xi$ along $\xi$) on differential forms,
$$dL\ \xi = \int_\gamma L_\xi(eA^\flat+mv_\pl b^\flat) = \int_\gamma e i_\xi dA^\flat + mi_\xi d(v_\pl b^\flat), $$
the total derivative term integrating to zero because $\gamma$ is closed.
Now $dA^\flat = i_B\Omega$ so along fieldline $\gamma$ it gives zero.
The second term gives
$$dL\ \xi = \int_\gamma m (i_\xi dv_\pl\, b^\flat - v_\pl i_\xi db^\flat).$$
Differentiating (\ref{eq:vpl}) we obtain $mv_\pl dv_\pl = -\mu d|B|$; and using $db^\flat = i_c\Omega$, we obtain $i_\xi db^\flat = -i_c i_\xi\Omega = i_c (b^\flat \wedge d\psi)/|B|$.
Thus
$$dL\ \xi = \int_\gamma -\frac{\mu}{v_\pl} i_\xi d |B|\ b^\flat - \frac{mv_\pl}{|B|} i_c b^\flat d\psi + \frac{mv_\pl}{|B|} b^\flat i_c d\psi.$$
The middle term is zero because $d\psi$ applied to a tangent to $\gamma$ is zero.
We are left with
$$dL\ \xi = \frac{e}{T} \la \dot{\psi}\ra.$$
This proves that for passing particles on a rational flux surface, $\la\dot{\psi}\ra = 0$ iff $L$ is constant on it for given energy.

\subsection{Omnigenity for bouncing trajectories implies omnigenity for circulating ones on rational surfaces}
We scale out the magnetic moment $\mu$ by writing $E = h \mu$ and $L = \sqrt{m \mu}\, j$ with $$j(h) = \int_\gamma \sqrt{2(h-|B|)}\, ds,$$
for segment $\gamma$ of fieldline.
As remarked in section~\ref{sec:wiso}, the function $j$ is the Abel transform of the length function $\ell$ for a fieldline.  $\ell(v)$ is the length of the subsegments of $\gamma$ for which $|B|<v$.  Its Abel transform (with a factor of $\sqrt{2}$) is
\begin{equation}j(h) = \int_{h_{\min}}^h \sqrt{2(h-v)}\, d\ell(v),
\label{eq:abel}
\end{equation} 
where $h_{\min}$ is the minimum of $|B|$ along it (one can replace $h_{\min}$ by $-\infty$ because $d\ell(v) = 0$ for $v<h_{\min}$).
This applies equally well to a closed fieldline with $h>h_{\max}$, the maximum of $|B|$ on it.

The condition that $j$ be the same for all bouncing trajectories of given class on a flux surface with given $h$, for all values of $h$ for this class, implies that the length function $\ell$ is the same for each segment of fieldline for the class on the flux surface, by the Abel inversion formula, e.g.~\cite{Ke}:
$$\ell(v) = \frac{2}{\pi}\int_{h_{\min}}^v \frac{dj(h)}{\sqrt{2(v-h)}}.$$
This is a reformulation of a result of \cite{SS} (see also \cite{CS}).  One could write in more detail about the treatment of multiple classes of bouncing trajectory \cite{PCHL}, but hopefully the above is clear enough.
Note that it includes that $h_{\min}$ and $h_{\max}$ are the same for all fieldlines on the same flux surface.

Our point is that it follows from the length function being the same for each bouncing class and the Abel representation (\ref{eq:abel}) that $j(h)$ is the same for all circulating trajectories on the rational surface with given $h > h_{\max}$.

\subsection{Omnigenity for circulating trajectories on irrational surfaces}
ZGCM at given energy for
circulating particles on an irrational flux surface in given direction is uniquely ergodic (i.e.~it has a unique ergodic
probability measure).  Up to normalisation, the unique ergodic probability measure is given by the area-form $\frac{|B|}{v_\pl}\cA$ (where $\cA$ was defined in Appendix~\ref{app:fluxfn}).  This is because $\cA$ is preserved by $B$ but the time spent anywhere by ZGCM is a factor $|B|/v_\pl$ longer than for fieldline flow. 
Thus, the time-average of any continuous function along ZGCM at energy $E$ is equal to its surface-average with respect to $\frac{|B|}{v_\pl}\cA$.  On a flux surface, $V\cdot \nabla \psi\ \cA = i_V\Omega$ because $\Omega = \cA \wedge d\psi$ and $d\psi$ on tangents to a flux surface is zero.  Now using the non-Hamiltonian form for $V$, (\ref{eq:vpl}) and other relations as above, we obtain
$\frac{|B|}{v_\pl} i_V\Omega = \frac{|B|}{\tilde{B}_\pl} i_B\Omega + \frac{m}{e}d(v_\pl b^\flat)$.
The first term is zero on tangents to the flux surface and the integral of the second term over the flux surface is zero.

One might worry that the time required for convergence to the time-average might be long for some irrationals, but actually under the omnigenity condition we believe a uniform estimate is possible.

It is curious that the results of this subsection and subsection~\ref{sec:rattrajs} are exact for the non-Hamiltonian form for $V$, but would be  true only to first order if we used the Hamiltonian form (\ref{eq:Xdrift}).

\section{Relation to pseudo-symmetry}
\label{app:ps}
A weaker notion than omnigenity was introduced by Mikhailov, e.g.~\cite{M+}, called pseudo-symmetry.  There are various formulations, e.g.~\cite{Sk}, but the way we choose is that a magnetic field is said to be pseudo-symmetric if it has a flux function and on each flux surface the contours of $|B|$ are nowhere tangent to the magnetic field.  

A pseudo-symmetric field is not necessarily isodrastic.  Although the above formulation constrains the local maxima of $|B|$ for a pseudo-symmetric field to form (non-contractible) closed curves on each flux surface, it does not force the separatrix area $j$ to be the same for each homoclinic orbit coming from a local maximum with the given value of $|B|$.

Conversely, an isodrastic field is not necessarily pseudo-symmetric.  Firstly, an isodrastic field need not have a flux function, though if the marginal trajectories are all heteroclinic then it does ($h=|B|$ on $\Sigma^-$ extended along the fieldlines is a flux function).  Secondly, even if an isodrastic field has a flux function, the contours of $|B|$ on it could have an island chain instead of a curve of minima, as long as the fieldlines see only local minima on crossing the island chain (else new $\Sigma^-$ is created).

\section{$C^4$-generic $\Sigma^0$}
\label{app:generic}
A property is {\em generic} in a topological space if it happens on a countable intersection of open dense subsets.
\begin{thm}
Restricting attention to bounded subsets, $C^4$-generically $\Sigma$ is a $C^3$ surface and $\Sigma^0$ is a $C^2$-curve on $\Sigma$, separating it into $\Sigma^\pm$.
\end{thm}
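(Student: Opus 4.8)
The plan is to exhibit both $\Sigma$ and $\Sigma^0$ as regular level sets of maps manufactured from $B$, and then force the relevant regular-value conditions to hold on a residual set by parametric transversality. Write $g=|B|'=b\cdot\nabla|B|$ and $g'=|B|''=b\cdot\nabla g$. On the bounded region $M$ under consideration $B$ is $C^4$ and nowhere zero, so $|B|$ and $b=B/|B|$ are $C^4$, whence $g$ is $C^3$ and $g'$ is $C^2$. With this notation $\Sigma=g^{-1}(0)$, $\Sigma^0=(g,g')^{-1}(0,0)$, and $\Sigma^\pm=\Sigma\cap\{\pm g'>0\}$. The two statements to establish generically are: (i) $0$ is a regular value of $g:M\to\R$, so that $\Sigma$ is a $C^3$ surface of dimension $2$; and (ii) $(0,0)$ is a regular value of $(g,g'):M\to\R^2$, so that $\Sigma^0$ is a $C^2$ curve of dimension $1$. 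Granting (ii), at every $p\in\Sigma^0$ the covectors $dg(p)$ and $dg'(p)$ are linearly independent, so $dg'(p)$ restricts to a nonzero covector on $T_p\Sigma=\ker dg(p)$; hence $g'|_\Sigma$ has $0$ as a regular value, and $\Sigma^0$ is locally a separating curve in $\Sigma$ with $\Sigma^+$ and $\Sigma^-$ on its two sides, exactly as claimed.

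For the genericity of (i) and (ii) I would invoke the parametric transversality theorem. Near a chosen point $x_0$ take a finite-dimensional family $s\mapsto\delta B_s$ of compactly supported perturbations, kept divergence-free by writing $\delta B_s=\curl\,\delta A_s$ so as to remain within the class of magnetic fields, and form the universal maps $F(x,s)=g_s(x)$ and $\widehat F(x,s)=(g_s(x),g'_s(x))$ for the perturbed field $B+\delta B_s$. The decisive input is a controllability lemma: the perturbations can independently prescribe the two numbers $g(x_0)$ and $g'(x_0)$. This is plausible from jet-order bookkeeping, since $g$ depends only on the $1$-jet of $B$ whereas $g'$ involves the $2$-jet — a perturbation whose $1$-jet vanishes at $x_0$ but whose second-order part does not changes $g'$ while fixing $g$, and a perturbation altering the $1$-jet changes $g$. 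Concretely one computes the linear responses $\delta g$ and $\delta g'$ in terms of $\delta A$ and checks that the linear map $\delta A\mapsto(\delta g(x_0),\delta g'(x_0))$ is onto $\R^2$. Surjectivity of these $s$-derivatives makes $F$ transverse to $\{0\}$ and $\widehat F$ transverse to $\{(0,0)\}$, so the parametric theorem yields that $0$ (respectively $(0,0)$) is a regular value of $g_s$ (respectively $(g_s,g'_s)$) for almost every $s$.

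To upgrade this to the paper's notion of genericity I would cover $M$ by countably many precompact coordinate balls. On each ball the regular-value conditions (i) and (ii) are open in the $C^4$ compact-open topology, and, applying the previous paragraph with perturbations supported near that ball, they are dense; hence each is open and dense. Their countable intersection is then a residual subset of the space of ($C^4$, divergence-free) fields on which $\Sigma$ is a $C^3$ surface, $\Sigma^0$ is a $C^2$ curve contained in it, and $\Sigma=\Sigma^+\cup\Sigma^0\cup\Sigma^-$ with $\Sigma^0$ the common boundary of $\Sigma^\pm$, which is the assertion.

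The main obstacle is the controllability lemma, and specifically the divergence-free constraint within it. It is clear that \emph{arbitrary} variations of $B$ give independent control of $|B|'$ and $|B|''$, but here the admissible perturbations are curls of vector potentials, so the $2$-jet of $\delta B$ at $x_0$ is constrained by $\div\,\delta B=0$ at each order. One must therefore verify explicitly, rather than by a dimension count alone, that this constrained family of jets still realizes both basis vectors of $\R^2$ under $\delta A\mapsto(\delta g,\delta g')$ — i.e.\ that the divergence-free condition does not force $\delta g'$ to be proportional to $\delta g$ or annihilate either. This is the computation on which the whole argument turns; everything else is the standard transversality-plus-Baire machinery.
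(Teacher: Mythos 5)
Your proposal is correct and is essentially the paper's argument made explicit: the paper proves the statement by counting that failure of each regular-value condition ($|B|'=0$ with $d|B|'=0$, resp.\ $|B|'=|B|''=0$ with $D(|B|',|B|'')$ of rank $<2$) imposes four conditions on three variables and hence generically does not occur, which is precisely the transversality you formalise via the parametric theorem, the countable cover, and Baire. The controllability lemma you single out as the crux --- that the divergence-free constraint does not obstruct independent control of $(\delta|B|',\delta|B|'')$ at a point --- is exactly what the paper asserts in its parenthetical ``($\div B=0$ does not make any restriction)'', and it does hold since the trace condition on each jet order of $\delta B$ can be absorbed in directions transverse to $b$; neither your write-up nor the paper carries out that computation explicitly.
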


\begin{proof}
First recall that $\Sigma$ is defined by $|B|'=0$ so if $B$ is $C^r$ with $r\ge 2$ then by the submersion theorem \cite{La} the subset where $d|B|' \ne 0$ is a $C^{r-1}$ surface.  The joint condition that $|B|'=0$ and $d|B|' = 0$ is four conditions ($\div\ B = 0$ does not make any restriction) on three variables, so generically does not happen. Thus $C^r$-generically, $\Sigma$ is a $C^{r-1}$-surface.
Note that the condition $d|B|' \ne 0$ is met in particular on $\Sigma^\pm$, which are the subsets where $|B|'' > 0, < 0$ respectively. Furthermore, it follows that $\Sigma^\pm$ are transverse to $B$.

Secondly, the set $\Sigma^0$ is defined by $|B|'=0, |B|''=0$, so by the submersion theorem the subset where the derivative  $D(|B|',|B|'')$ has rank 2 is a $C^{r-2}$ curve.  The joint condition that $|B|'=0, |B|''=0$ and 
$D(|B|',|B|'')$ does not have rank 2 is 4 conditions on 3 variables, so generically does not happen.  Thus $C^{r}$-generically, $\Sigma^0$ is a $C^{r-2}$ curve.  $\Sigma^0$ lies on $\Sigma$ because $|B|'=0$ on it.  It separates $\Sigma$ into $\Sigma^\pm$ because it is the subset with $|B|''=0$.
\end{proof}


Next we study the generic behaviour of $|B|$ near $\Sigma^0$.
Firstly, near a generic point of $\Sigma^0$, there is a choice of fieldline labels $x,y$ and a coordinate $t$ along fieldlines, near to arc length $s$, such that 
\begin{equation}
|B| = f(x,y)+yt+xt^2+kt^3
\label{eq:Bgen}
\end{equation}
for some function $f$ and constant $k\ne 0$.  Note that there is no remainder in this expression; this can be achieved by normal form principles of singularity theory \cite{Mi}.  The restriction $\div \ B = 0$ plays no role in this, only in the mapping between $(x,y,t)$ and physical space (compare the proof of realisability in Section~\ref{sec:real}).
Then $|B|' = (y+2xt+3kt^2)t'$, where $t'$ denotes $\frac{\partial t}{\partial s}$ holding $x,y$ constant,
so $\Sigma$ is locally the surface 
\begin{equation}
y=-2xt-3kt^2.
\label{eq:D.y}
\end{equation}
Thus we can use $(x,t)$ as coordinates on $\Sigma$.
Furthermore $$|B|'' = (2x+6kt)t'^2 + (y+2xt+3kt^2)t'' =(2x+6kt)t'^2+|B|'t''/t'.$$
Now $|B|'=0$ on $\Sigma$, and $t'$ is near $1$,
so $\Sigma^0$ is locally the curve $$x=-3kt,\ y=3kt^2.$$  
Its projection to fieldline labels $(x,y)$ is the curve $3ky = x^2$.  It is a standard fold in $\Sigma$.
$\Sigma^\pm$ are the parts of $\Sigma$ with $x+3kt>0, <0$ respectively.
The result is shown in Figure~\ref{fig:genSigma0}.  We see that there are short bouncing segments on some fieldlines, namely those for which the cubic has a well, i.e.~for which $y < -2xt-3kt^2$.
\begin{figure}[htbp] 
   \centering
   \includegraphics[width=3in]{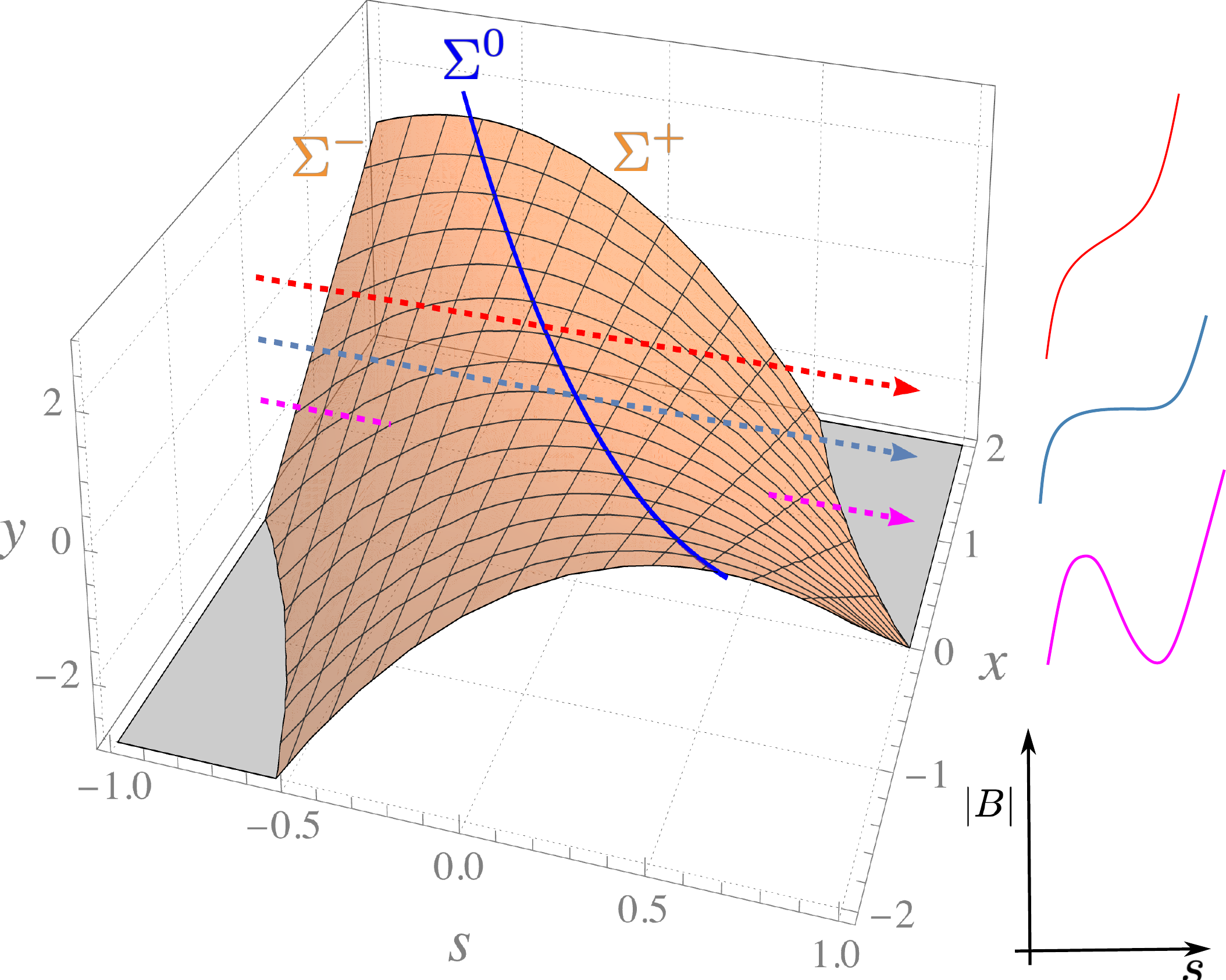} 
   \caption{Generic neighbourhood of a point $(0,0,0)$ on $\Sigma^0$.  $x,y$ are fieldline labels and $s$ is arclength along the field. It shows that $\Sigma^0$ forms a smooth (blue) curve on $\Sigma$, separating $\Sigma$ into $\Sigma^\pm$ on the right and left, respectively. Three fieldlines are indicated, with $x=0$, $y=-1,0,+1$ and the shapes of $|B|$ along them are sketched.}
   \label{fig:genSigma0}
\end{figure}

Secondly, we have to take care of exceptional points $p$ of $\Sigma^0$ at which  $k=0$.
Then (\ref{eq:Bgen}) is modified generically to
$$|B|= f(x,y)+yt+xt^2+at^4$$
for some $a \ne 0$.  Repeating the analysis, we obtain that $\Sigma$ is locally $y=-2xt-4at^3$, and $\Sigma^0$ is locally $x=-6at^2,\ y=8at^3$. Its projection to fieldline labels $(x,y)$ is a standard semi-cubic cusp:~$27 a y^2 + 16 x^3 = 0$.


%

We can combine the treatment of generic points of $\Sigma^0$ and those having $k=0$ by including both the $kt^3$ and $at^4$ terms in the expression for $|B|$, with $k,a$ not simultaneously zero.
Then on $\Sigma$, $$|B|=f(x,-2xt-3kt^2-4at^3)-xt^2-2kt^3-3at^4.$$
In particular, along $\Sigma^0$, $$|B|= f(-3kt-6at^2,3kt^2+8at^3) + kt^3+3at^4,$$ so using $t$ now as a coordinate  along $\Sigma^0$, $\frac{d|B|}{dt} = -3k f_{,x}$ at $t=0$, where $f_{,x}$ denotes the partial derivative of $f$ with respect to $x$ at $(0,0)$.  At points of $\Sigma^0$ where $kf_{,x} \ne 0$ the field is not weakly isodrastic, because $|B|$ is not constant along $\Sigma^0$ (a little more analysis shows that the level sets of $|B|$ cross $\Sigma^0$ with cubic tangency). 
A closed component of $\Sigma^0$ has at least one minimum and one maximum of $|B|$, so there are points on it where $f_{,x}=0$ or $k=0$.  A question is whether both types of point have to occur, but generically they do not happen simultaneously.  We analyse the local picture for the $|B|$ levels on $\Sigma$ near such points.

At a point of $\Sigma^0$ where $f_{,x}=0$ one has generically a non-degenerate local maximum or minimum of $|B|$ along $\Sigma^0$.  The level sets of $|B|$ on $\Sigma$ look locally like Figure~\ref{fig:fx0}.
\begin{figure}[htbp] 
   \centering
   \includegraphics[width=3in]{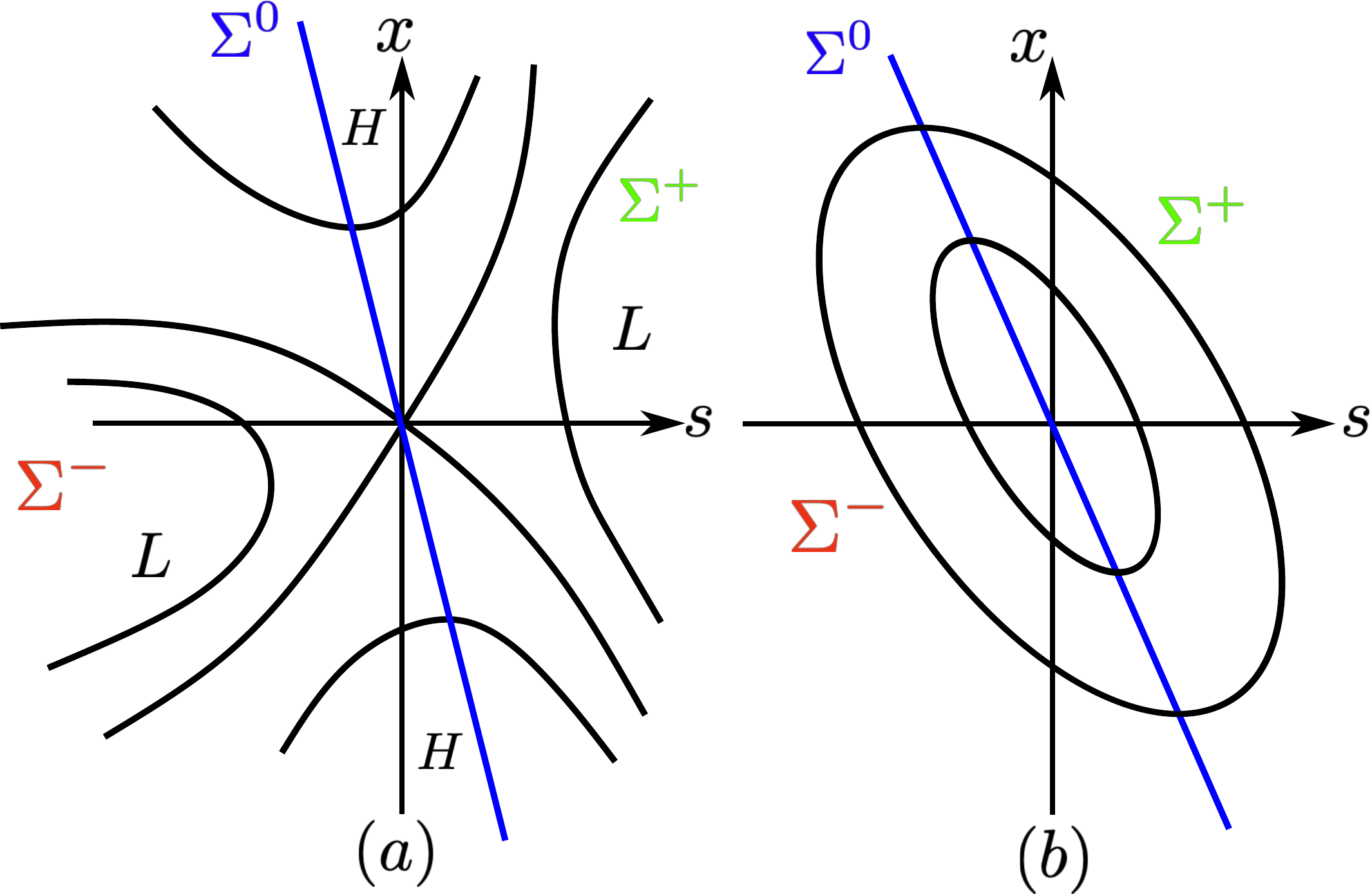} 
   \caption{Level sets of $|B|$ on $\Sigma$ for $f_{,x}=0$ in the case that $f_{,y}>0$ and (a) $f_{,xx}>0$, (b) $f_{,xx}<0$ where $H$ and $L$ denote high and low regions of $|B|$.}
   \label{fig:fx0}
\end{figure}
So the level curves of $|B|$ cross $\Sigma^0$ except in the hyperbolic sector of the first case.

For points of $\Sigma^0$ where $k=0$, note that on $\Sigma^0$, $k=0$ iff $B$ is tangent to $\Sigma^0$.  So one has generically one of the cases of Figure~\ref{fig:k0}.
\begin{figure}[htbp] 
   \centering
   \subfigure[]{\includegraphics[width=2.8in]{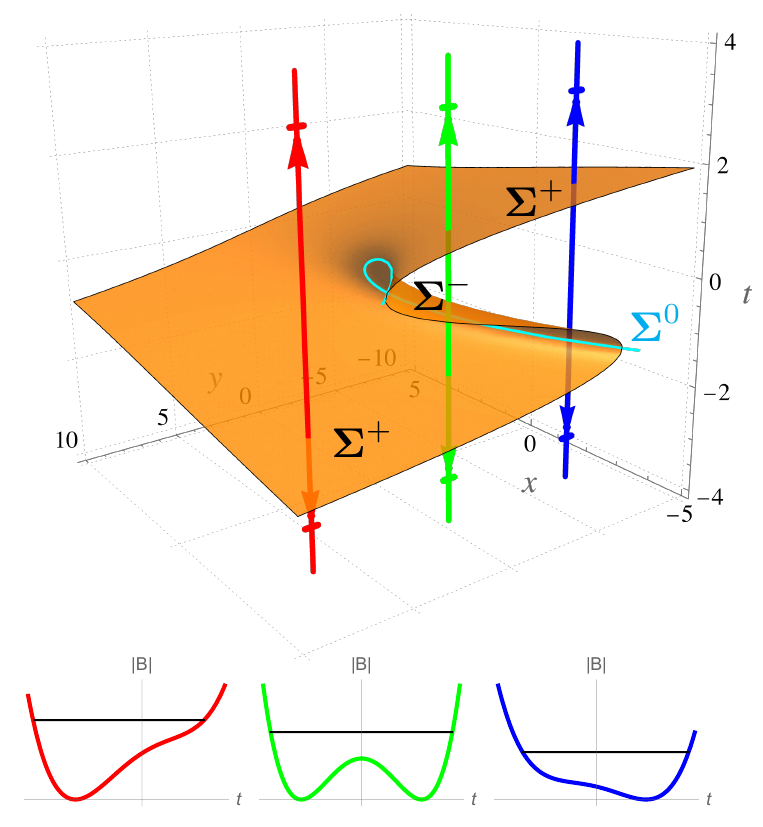}}
   \subfigure[]{\includegraphics[width=2.8in]{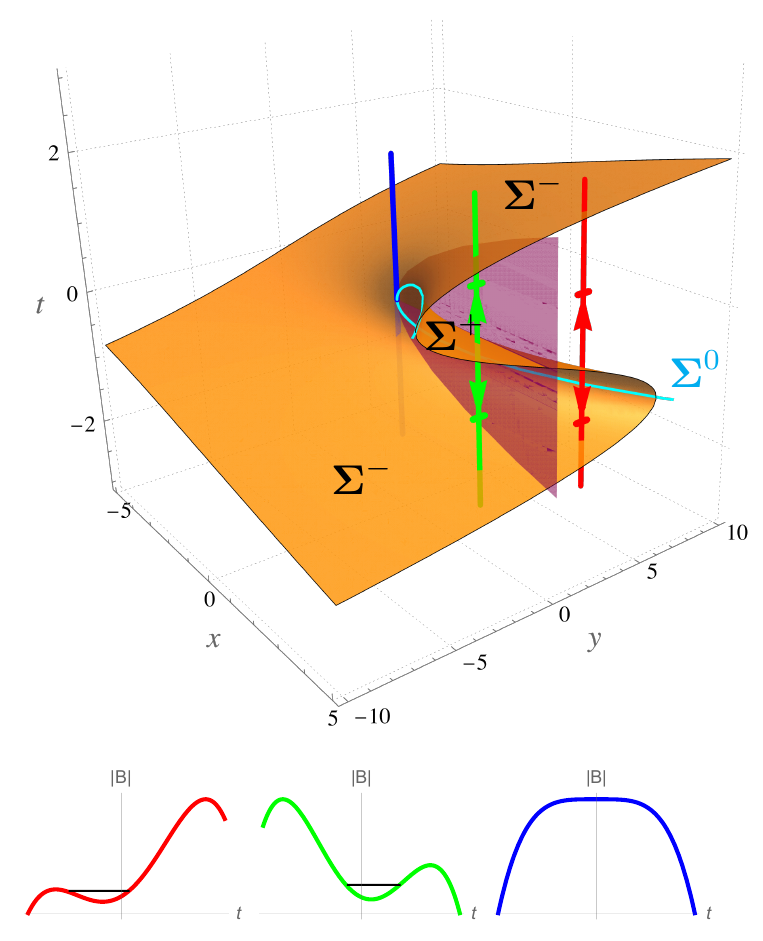}}
   \caption{The disposition of $\Sigma$ and $\Sigma^0$ relative to the magnetic field lines near a point $x=y=0$ with $k=0$; (a) quartic minimum, (b) quartic maximum. In each case, three fieldlines are shown (denoted by red, green, and blue lines) and a bouncing segment on each of them is denoted by bar arrows. The $|B|$ profiles along the fieldlines are shown in the bottom panel, together with the chosen level of $E/\mu$ for the bouncing segment.}
   \label{fig:k0}
\end{figure}
In the case of $|B|$ coming to a quartic minimum, all level curves of $|B|$ collide with $\Sigma^0$.
The case of $|B|$ coming to a quartic maximum produces a curve of switch of lowest hill.  This ``Maxwell curve'' is $y=0$, $x=2t^2$.  All $|B|$ levels on $\Sigma^+$ cross $\Sigma^0$.

If we want weak isodrasticity and there is a curve of $\Sigma^0$ then we must have $k f_{,x}=0$ everywhere on it.

To analyse the dynamics more thoroughly near $\Sigma^0$ one needs to compute $j$.  In particular, to test isodrasticity near but not on $\Sigma^0$ one needs $\text{\j}$ on $\Sigma^-$.  We can compute this to a good approximation for the short homoclinics that occur near the generic points of $\Sigma^0$ in the well of the cubic.
From a point of $\Sigma^-$ labelled by $(x,t)$, we use $t'(s) \approx 1$ to get the approximation
$$\text{\j} = \int_{s_0}^{s_1} \sqrt{2(C-ys-xs^2-ks^3)}\, ds,$$
where $C$ is the value at the local maximum $s_0$ of $ys+xs^2+ks^3$ and $s_1$ is the other point at the same height.  Write $s = s_0+v/k$ and $\alpha = \sqrt{-|B|''/2}$ at the point of $\Sigma^-$ and recall that $|B|'' = 2x + 6ks_0$.  Then
$$\text{\j} = \frac{\sqrt{2}}{k^2} \int_0^{\alpha^2} v \sqrt{\alpha^2-v}\, dv.$$
Put $u^2=\alpha^2-v$, to obtain
$$\text{\j} = \frac{2\sqrt{2}}{k^2} \int_0^\alpha (\alpha^2-u^2)u^2\, du = \frac{(-|B|'')^{5/2}}{15 k^2}.$$
Using $|B|'' = 2x + 6kt$ on $\Sigma$, we obtain
$$d\text{\j} = \frac{(-|B|'')^{3/2}}{3k^2}(dx + 3k\, dt).$$
Recalling the expression for $h=|B|$ on $\Sigma$,
$$dh = f_{,x} dx + f_{,y} (-2tdt-(2x+6kt)dt)-t^2dx-(2xt+6kt^2)dt.$$
Thus
$$dh \wedge d\text{\j} = \frac{(-|B|'')^{3/2}}{3k^2} (-2x f_{,y}-3kf_{,x}+y)\, dx \wedge dt,$$
using $y=-2xt-3kt^2$ (\ref{eq:D.y}).
It is zero iff
$y=2xf_{,y}+3kf_{,x}$.
This PDE can be solved for $f$, resulting in $h$ being an arbitrary function of $x^2-3ky$ (the deviation from $\Sigma^0$), but the main conclusion can be read off immediately by putting $(x,y)=(0,0)$:~$kf_{,x}=0$ at $(0,0)$. This implies for $k \ne 0$ that the derivative of $h$ along $\Sigma^0$ is zero, confirming one of our necessary conditions for weak isodrasticity.

It is useful to compare $dh\wedge d\text{\j}$ to $\beta$.  On $\Sigma$, locally $\beta = c(x,y)\, dx \wedge dy$ for some non-zero smooth function $c$, because $x$ and $y$ are fieldline labels.  Using $y=-2xt-3kt^2$ and the expression for $|B|''$, we obtain $dx \wedge dy =-c|B|'' dx \wedge dt$.  So the Melnikov function $$\cM = \frac{(-|B|'')^{1/2}}{3ck^2}(y-2xf_{,y}-3kf_{,x}).$$

Finally, for isodrasticity one also needs the corresponding condition for segments leaving $\Sigma^-$ in the opposite direction, which is not obtainable by local analysis.  A typical picture for the phase space $F_j$ around a fieldline with a cubic critical point is sketched in Figure~\ref{fig:F_j_cubic}, using the normal form (\ref{eq:Bgen}) and a generic assumption that the separatrix-area of the cubic critical point varies at non-zero rate along the fold curve $\Sigma^0$ (but note that this assumption is incompatible with weak isodrasticity, because we proved that $\j$ is constant along $\Sigma^0$ for weak isodrastic fields.)
\begin{figure}[htbp] 
   \centering \includegraphics[width=3in]{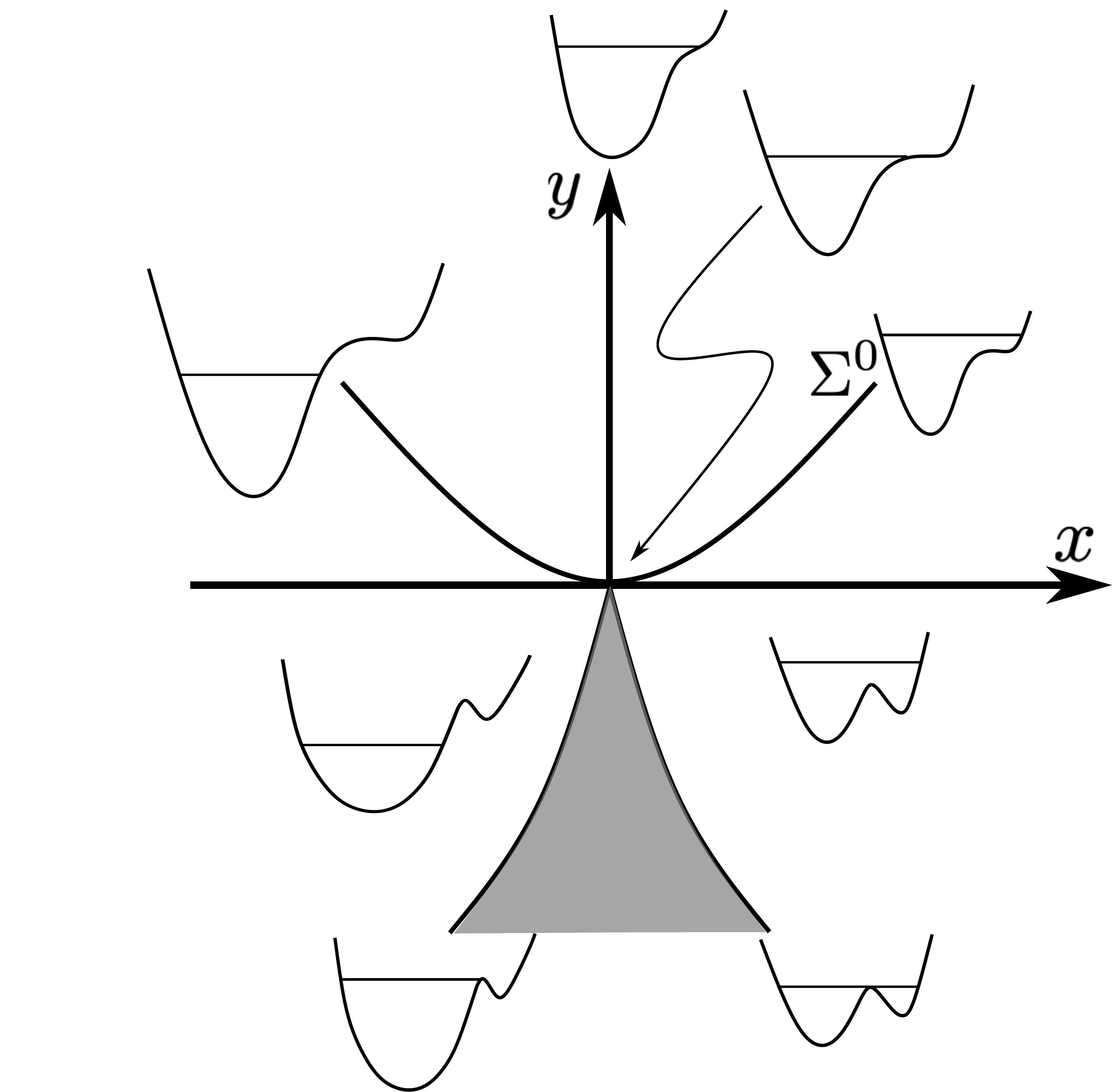}
   \caption{Sketch of the phase space $F_j$ near a fieldline with a cubic critical point with generic unfolding. $\Sigma^0$ is the curve with cubic critical points:~$3ky=x^2$ in fieldline labels $(x,y)$. 
   The cusped wedge is excluded. 
   }
   \label{fig:F_j_cubic}
\end{figure}
The cubic critical point unfolds in the downward vertical direction, to a local maximum and a local minimum.  Moving horizontally to the right, the separatrix-area for the cubic critical point decreases, thus to maintain constant $j$, the energy-level changes as indicated.  The boundary of $F_j$ is a cusped curve with width $\Delta x$ asymptotically proportional to $(-y)^{5/4}$.  This follows from the above computation of $\j$ for the little well in the cubic.  The lefthand curve is determined by making the separatrix area for the main well be $j$; on the righthand curve the sum of the separatrix areas for the main well and the little well is $j$.


In conclusion, generic $\Sigma^0$ is incompatible with isodrasticity except with special design.
The typical ways isodrasticity fails near $\Sigma^0$ reveals a sensitivity of axisymmetric and quasisymmetric designs to imperfections.

\section{Derivative of reduced Hamiltonian\label{app:dh}}
Here we give an alternative proof of Proposition \ref{prop:dh} that illustrates its general connection to the theory of nearly-periodic systems. 
 {

\begin{proof}
In the bouncing region of the guiding center phase space the guiding center vector field $V_\epsilon$ defines a Hamiltonian nearly-periodic system with limiting roto-rate $R_0 = \frac{T}{2\pi}\,V_0$ and exact $\epsilon$-dependent symplectic form $\omega_\epsilon$. Here Hamiltonian means $\iota_{V_{\epsilon}}\omega_\epsilon = dH_\epsilon$, for some Hamiltonian $H_\epsilon$, and $T$ denotes the true period for bounce motion. As for all Hamiltonian nearly-periodic systems \cite{Bu3}, there exists an all-orders roto-rate $R_\epsilon$ such that $\iota_{R_\epsilon}\omega_\epsilon = dJ_\epsilon$, where $J_\epsilon$ denotes the all-orders bounce adiabatic invariant. By \cite{Bu3} the formula for the first-order term in the roto-rate is 
\begin{align*}
    R_1 = \mathcal{L}_{R_0}I_0\widetilde{V}_1,
\end{align*}
where $I_0$ denotes the inverse of $\mathcal{L}_{V_0}$ restricted to the subspace of vector fields with vanishing $U(1)$-average, and $\widetilde{V}_1 = V_1 - \langle V_1\rangle$ is the first-order guiding center vector field $V_1$ less its $U(1)$-average. It is easy to show that $H_0 = J_0 = 0$, and that $H_1,J_1$ correspond to the usual leading-order expressions for the guiding center energy and bounce invariant. Since both $R_\epsilon$ and $V_\epsilon$ are Hamiltonian vector fields we have
\begin{align*}
    \iota_{V_1}\omega_0 + \iota_{V_0}\omega_1& = dH_1\\
    \iota_{R_0}\omega_0 + \iota_{R_0}\omega_1 & = dJ_1.
\end{align*}
In particular $dH_1 = \iota_{V_1}\omega_0 + \frac{2\pi}{T}(dJ_1 - \iota_{R_1}\omega_0)$. This formula is useful because it simplifies computing the derivative of $H_1$ along $U(1)$-invariant vector fields $\mathcal{X} = \bm{u}\cdot \partial_{\bm{X}} + a\,\partial_{v_\parallel}$ that preserve $J_1$, i.e. $dJ_1(\mathcal{X}) = 0$. In particular,
\begin{align*}
    dH_1(\mathcal{X}) &= \omega_0(V_1,\mathcal{X}) - \frac{2\pi}{T}\omega_0(R_1,\mathcal{X})\\
    & = \omega_0(V_1,\mathcal{X}) - \frac{2\pi}{T}\omega_0(\mathcal{L}_{R_0}I_0\widetilde{V}_1,\mathcal{X})\\
    & = \omega_0(V_1,\mathcal{X}) - \frac{2\pi}{T}\mathcal{L}_{R_0}I_0\bigg[\omega_0(\widetilde{V}_1,\mathcal{X})\bigg]\\
    & = \omega_0(V_1,\mathcal{X}) - \omega_0(\widetilde{V}_1,\mathcal{X})\\
    & = \langle \omega_0(V_1,\mathcal{X})\rangle\\
    & = \frac{1}{T}\int_0^T \bigg(\mu\bm{u}\cdot \nabla_\perp|B| - mv_\parallel^2\,\bm{b}\cdot \bm{c}_\perp\times\bm{u}\bigg)\,dt.
\end{align*}
This establishes the Proposition after using $dt = b^\flat/v_\parallel$ and energy conservation.
\end{proof}
 }

\section{Double transitions}
\label{app:double}
As a special event, a fieldline segment can approach double transition.  Some ways this can occur are indicated in Figure~\ref{fig:double}. They involve formation of a heteroclinic cycle.  There are others involving formation of a degenerate critical point too, but they are more special.
\begin{figure}[htbp] 
   \centering
   \includegraphics[width=4in]{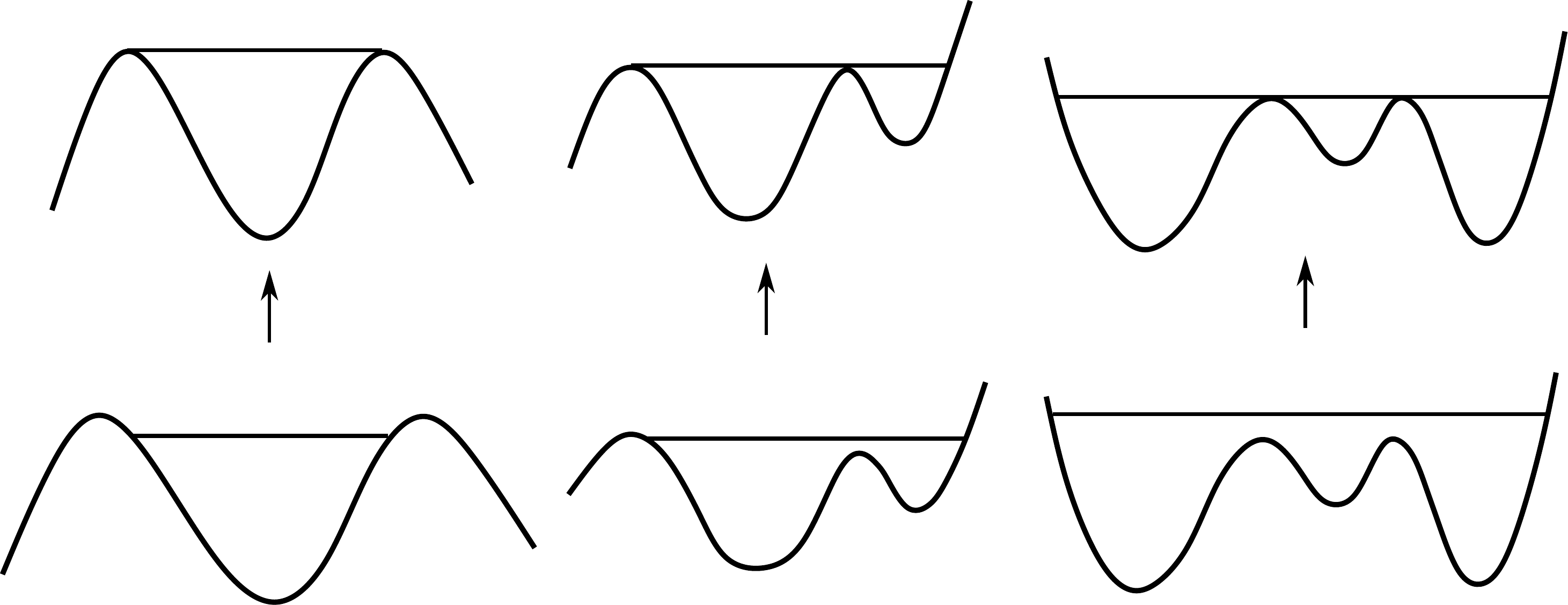}
   \caption{Some ways to double transition.}
   \label{fig:double}
\end{figure}
In the 2D space of fieldlines, a heteroclinic cycle requires only one condition, namely that $|B|$ have two local maxima at the same height, so it happens generically along curves in the space of fieldlines.

It leads to corners in the reduced spaces $F_j$, where boundaries corresponding to two different transitions meet.  See Figure~\ref{fig:corner}.
\begin{figure}[htbp] 
   \centering
   \includegraphics[width=3in]{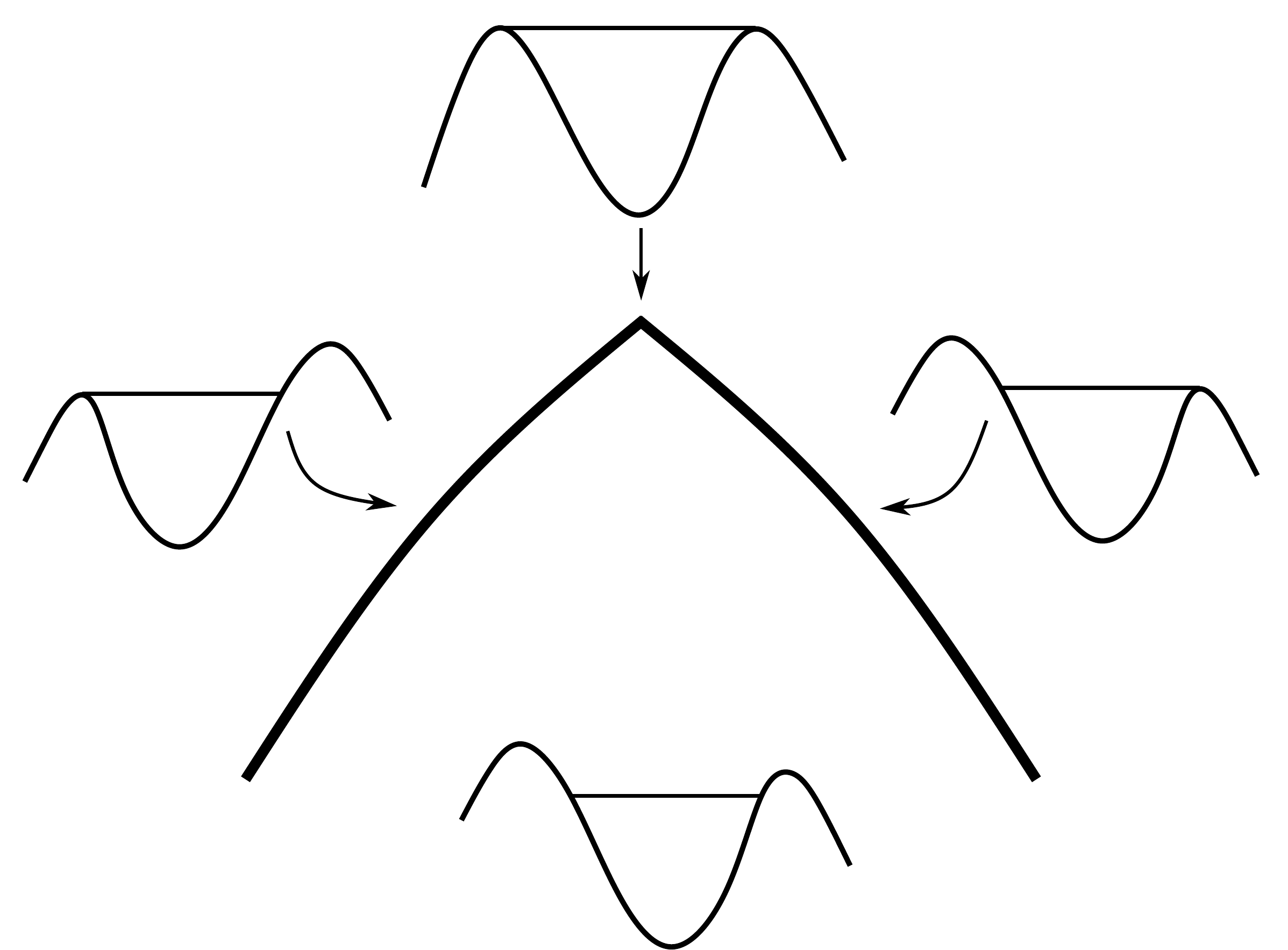} 
   \caption{An example of a corner in the reduced space $F_j$.}
   \label{fig:corner}
\end{figure}

One could make explicit examples, for example with $|B|$ a quartic in arclength whose coefficients depend on two fieldline labels, as for the formulae leading to Figure~\ref{fig:k0}(a).  Although evaluating $j$ requires elliptic integrals, the boundary cases can be computed explicitly (generalising the cubic case of Section~\ref{sec:construct} and Appendix~\ref{app:generic}). 

Near points of double transition, transitions between several classes can occur.  Isodrasticity requires that $\ker dH_j$ contain the tangent to the boundary at single transition points.  Taking the limit to a transverse corner, this implies that $dh=0$ there.  
The consequences of this are left to a future publication.

Note that the times spent by a periodic orbit near the saddles of a heteroclinic cycle of a Hamiltonian system are asymptotically proportional to the Lyapunov times of the saddles (i.e.~the inverses of their positive Lyapunov exponents).  So using (\ref{eq:dH_j}), along the curve for equal height, $dH_j$ is asymptotic to the convex combination of $dh$ at the two ends, weighted by their Lyapunov times.

Note also that $d\j$ goes to infinity at generic corners because the time spent near the second saddle grows logarithmically as it is approached along a level curve of $\j$, and the derivative of $\j$ is related to this time.

\section{Dipole field}
\label{app:dipole}
As a simple illustration of $\Sigma$ and the reduced Hamiltonian $H_j$, we consider GCM in a dipole field
$$B = \frac{3 \cos\theta\ \hat{r}-\hat{z}}{r^3}$$
in spherical polar coordinates (radius $r$, colatitude $\theta$, longitude $\phi$), illustrated in Figure~\ref{fig:dipoleNK}.  The dipole strength has been scaled to $4\pi$.
\begin{figure}[htbp] 
   \centering
   \includegraphics[width=4.5in]{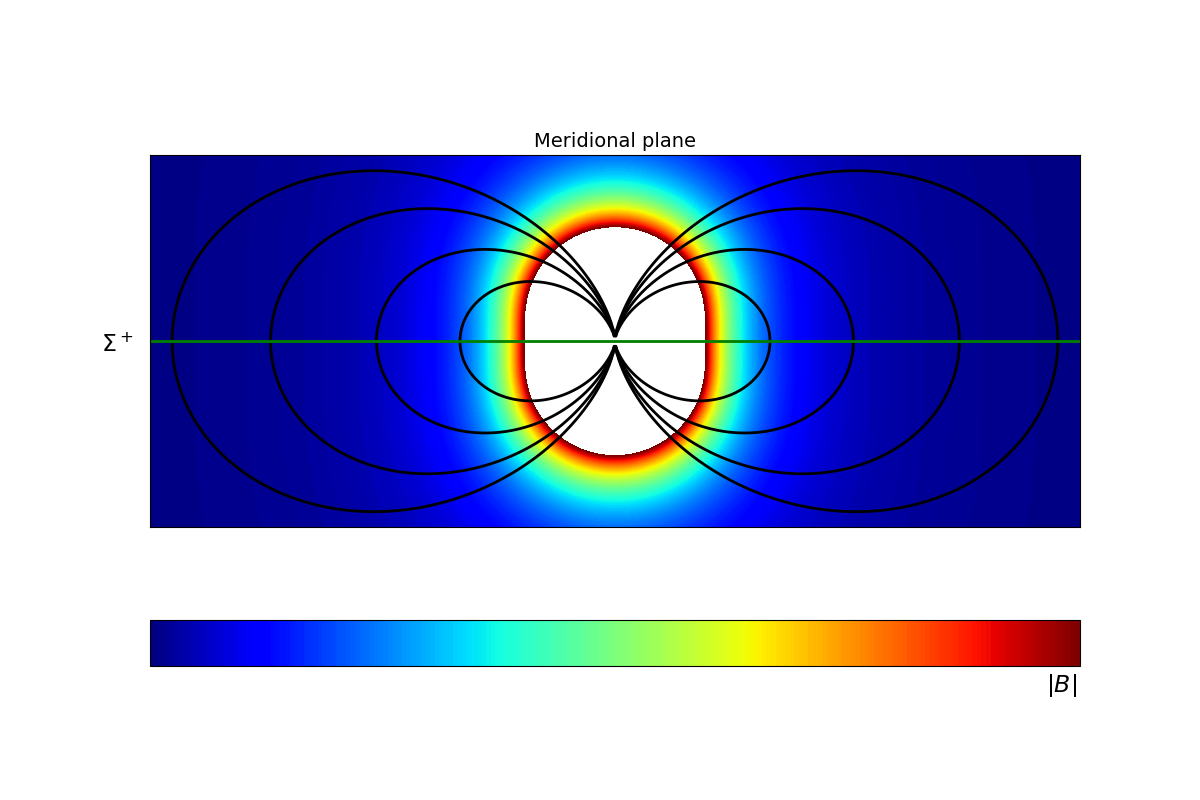}
   \caption{A meridional section through a dipole field, showing some fieldlines and the field strength in colour (blue to red indicates weak to strong and white indicates $|B|$ exceeds a threshold). $\Sigma^+$ is the equatorial plane.}
   \label{fig:dipoleNK}
\end{figure}
The fieldlines are $r = r_e \sin^2\theta$, $\phi=$ constant, where $r_e$ is the radius at which they cross the equatorial plane.  Along the fieldlines,
$$|B| = \frac{\sqrt{3 \cos^2 \theta + 1}}{r_e^3 \sin^6 \theta} .$$
The guiding centres bounce across the equatorial plane between the regions of stronger field near the poles.  Then $\Sigma$ is the equatorial plane and it consists of only $\Sigma^+$ (see Figure~\ref{fig:exSigma}(a)). 

For a bouncing segment, $j$ can be written as $j = {r_e^{-1/2}} F(h r_e^3)$, where $h$ is $|B|$ at the bounce points and 
$$ F(k) = \int_{\theta_0}^{\theta_1} \sqrt{2\left(k-\frac{\sqrt{3\cos^2 \theta + 1}}{\sin^6 \theta}\right)} \sqrt{2\cos^2 \theta + 1}\, \sin \theta\, d\theta,$$
with $\theta_i(k)$ being the zeroes of the first square root (symmetric about $\pi/2$).
$F$ is defined for argument greater than or equal to $1$, has $F(1)=0$, positive derivative (including at $k=1$ where $F'(1)=\tfrac{\pi}{3}$), and goes to infinity as $k \to \infty$.
FGCM is described in the adiabatic approximation with given value of $j$ (and scaled time $\tau$) by the Hamiltonian $H_j$ on $\Sigma^+$, defined by $H_j = r^{-3} F^{-1}(\sqrt{r} j)$ (we write $r_e=r$ on $\Sigma^+$),
and the symplectic form that is just the magnetic flux-form $r^{-2} dr \wedge d\phi$ on $\Sigma^+$. Now 
$$\frac{dH_j}{dr} = -\frac{3}{r^4}F^{-1}(\sqrt{r}j) + \frac{j}{2 r^{7/2} F'(k)} = r^{-4} \left(-3k + \frac{F}{2F'}\right),$$
where $k = H_j r^3$.
We didn't check, but presumably $F/F' < {6k}$ (this is certainly true for $k=1$ where $F/F'$ is zero, and at $k=\infty$ where $F/F' \sim {2k}$) so $H_j$ is a strictly decreasing function of $r$ for given $j$.  In any case, the level sets of $H_j$ are concentric circles, and
the bouncing segments precess around the dipole axis at constant rate
$\frac{d\phi}{d\tau} = r^2 \tfrac{dH_j}{dr}$.

If axisymmetry is broken but not too much in $C^1$, as perhaps for the earth's magnetosphere, and we ignore changes far away (such as due to the solar wind), then $\Sigma^+$ deforms into a nearby surface and $H_j$ into a nearby function.  Its level sets deform to closed curves near the original circles.  The segments continue to precess, but in general no longer at constant rate.  The precession period can be calculated from (\ref{eq:precession}).

As there is no $\Sigma^{-0}$, there are no transitions and the field is automatically isodrastic.  The only thing to check for confinement is which set of precessing segments to populate.  For example, in the context of the earth's magnetosphere, the desired region is the outside of the earth.  Then in the axisymmetric case, scaling the earth's radius to 1 and the field strength to 1 at the earth's equator to fit with the above, a calculation shows that the segments that do not hit the earth are those for which $h < \sqrt{4-3/r_e}$, where $h$ is the ratio of the energy to $\mu$.

The earth's magnetosphere is perturbed not only by deviations from a dipole of magnetic generation in the earth but also by interaction with the magnetic field of the solar system and solar wind, which change the arrangement of the fieldlines, notably introducing magnetic nulls.  We do not pursue those effects here.

\section{Computational practicalities}
\label{app:sqrts}
Firstly, we show how square roots can be avoided in computation of many of the quantities required to find $\Sigma$ and the Melnikov function (though not all).  Then we give suggestions for the computation of $\text{\j}$ on $\Sigma^-$.

\subsection{Eliminating square roots}
We have written expressions like $|B|' = b \cdot \nabla |B|$ but if $B$ is given in components, $|B|$ involves taking a square root, so $b=B/|B|$ involves dividing by a square root.  So does  differentiating $|B|$.  It can be better to write such quantities in terms of $|B|^2$.
For example, $$|B|' = 
\tfrac12 B \cdot \nabla |B|^2 /|B|^2.$$
Here, we collect various such formulae.

The grad-B drift involves $b\times \nabla |B|$, which can be written as $\frac12 B \times \nabla |B|^2 /|B|^2$.

The curvature drift involves $(\curl\, b)_\perp /|B|$.  Now $$\curl\, b = \curl\, \frac{B}{|B|} = \frac{J}{|B|} - \frac{\nabla |B|}{|B|^2} \times B = \frac{J}{|B|} + \tfrac12 B \times \frac{\nabla |B|^2}{|B|^3}.$$  So $$(\curl\ b)_\perp/|B| = J_\perp /|B|^2 + \tfrac12 B \times \nabla |B|^2/|B|^4.$$

We can also remove square roots from the computation of the covector $a$ in (\ref{eq:a}) by switching to fieldline-flow time $T$. We have $da_i/dT = -|B| a_j\partial_i b^j$ and $\partial_i b^j = \frac{1}{|B|} \partial_i B^j - \frac{B^j}{2|B|^3}\partial_i|B|^2$.  So $$\frac{da_i}{dT} = -a_j\left(\partial_i B^j - \frac{B^j}{2|B|^2}\partial_i |B|^2\right) = -a_j\partial_i B^j,$$
because $aB=0$.  

\subsection{Integrating along a fieldline}
Next, $\text{\j}$ is defined at a point $X_0 \in \Sigma^-$ by taking $h=|B(X_0)|$ and then letting $\text{\j} = \int \sqrt{2(h-|B|)}\, ds$ with respect to arclength $s$ along the fieldline through $X_0$ to the first bounce, i.e.~where $|B|=h$ again.

Suppose we are integrating in the positive direction along $B$ (the obvious changes apply for the other direction).
It is slightly more convenient to integrate with respect to fieldline flow time $T$ than arclength $s$, i.e.~start at $T=0$ with $j=0$ and $X = X_0$ and integrate 
$$ \frac{dX}{dT} = B(X), \quad \frac{dj}{dt} = v |B|,$$ with $v=\sqrt{2(h-|B|)}, $
until the first bounce, thereby banishing square roots to only the second of the two equations.

More importantly, it is better to switch to integration with respect to $w=-v$ when approaching the first bounce, i.e.~by eliminating time from ZGCM,
$$ \frac{dX}{dw} = -\frac{w b}{|B|'}, \quad \frac{dj}{dw} = \frac{w^2}{|B|'},$$
starting from $w = -\sqrt{2(h-|B|)}$ at the switch point and stopping at $w=0$.  Then put $\text{\j}$ equal to the final value of $j$.  The switch can be made at any point between the last local minimum of $|B|$ along the fieldline and the bounce point.

Nonetheless, in the codimension-1 case that the first place where $|B|=h$ again is also a zero of $|B|'$, this switch to integration with respect to $w$ is not appropriate.  Instead one can just integrate with respect to $T$ with termination condition $|B|'=0$ when near $B=h$ again.

These tips are also relevant for computing the Melnikov function.

\section{Perturbed toy tokamak}
\label{appsect:pert_tokamak}

Here we give the expressions used in the calculations for the weak and strong form of isodrasticity for the toy tokamak. 

The perturbation to break the axisymmetric field is given by 
\begin{align}
\nabla \times A_{\eps} = \left( - {\eps \cos \phi}, 0, \dfrac{\eps z \cos \phi}{R} \right), \nonumber
\end{align} 
where $\eps$ is the perturbation parameter (and not the GC parameter in Sec.~\ref{sec:intro}).


The non-axisymmetric field magnitude,
\begin{align}
    |B| = \dfrac{\sqrt{C^2 + (\eps R \cos \phi + z)^2 + (R - R_0 + \eps z \cos \phi)^2}}{R}, \nonumber
\end{align}
and
\begin{align}
   \dfrac{d|B|}{ds} &= |B|^{\prime}  = \frac{f_1 + f_2 + f_3}{R^{2} \left(C^{2} + \left(R \eps \cos \phi + z\right)^{2} + \left(R - R_{0} + \eps z \cos \phi\right)^{2}\right)},\nonumber \\
\text{where} \; f_1 & = - C \eps \left(R \left(R \eps \cos \phi + z\right) + z \left(R - R_{0} + \eps z \cos \phi\right)\right) \sin \phi \nonumber \\
f_2 & = R \left(R - R_{0} + \eps z \cos \phi\right) \left(R \eps \cos \phi + \eps \left(R - R_{0} + \eps z \cos \phi\right) \cos \phi + z\right) \nonumber \\
f_3 & = \left(R \eps \cos \phi + z\right) \left(C^{2} - R_{0}(R - R_{0}) + 2 (R - R_0) \eps z \cos \phi + \eps^{2} z^{2} \cos^{2}{\left(\phi \right)} + z^{2}\right). \nonumber
\end{align}


In $\tilde{B}_{\pl} = |B| + \sqrt{\tilde{\mu}} u (\curl \, b) \cdot b$, we rewrite
\begin{equation}
    (\curl \, b) \cdot b = \frac{- C R - C R_{0} + C \eps z \cos \phi - R (R - R_{0}) \eps \sin \phi + \eps z^{2} \sin \phi}{R \left(C^{2} + \eps^{2} \cos^{2}{\left(\phi \right)}(R^{2} + z^{2}) + (R - R_{0})^2 + 2 \eps z \cos \phi(2 R - R_{0}) + z^{2}\right)}. \nonumber
\end{equation}

\section{Slow manifold computation}
\label{app:slow}

The procedure from \cite{M04} to compute a first-order symplectic slow manifold $N^-$ for a Hamiltonian system from a zeroth-order one ($\Sigma^- \times \{u=0\}$ in our case) is to compute a symplectically orthogonal foliation to the latter and to find the unique nearby critical point of $H$ on each leaf.

The tangents to $\Sigma^- \times \{u=0\}$ are the vectors $(\delta x, \delta u) \in \R^3\times \R$ satisfying $\delta x \cdot \nabla |B|'=0, \delta u=0$.
The symplectic form is $\tilde{\omega} = \tilde{\mu}^{-1/2} \beta + u\, db^\flat + du \wedge b^\flat$.  This requires $\mu >0$ but the method works the same.
Thus a vector $(\xi,w)$ is symplectically orthogonal to  $\Sigma^-\times \{u=0\}$ iff
$$\tilde{\mu}^{-1/2} \beta(\xi,\delta x) + w b\cdot \delta x = 0$$
for all tangents $\delta x$ to $\Sigma^-$.  Recall that $\beta = i_B\Omega$.  For a generic point of $\Sigma^-$ where $b$ is not perpendicular to $\Sigma^-$, the choice $\delta x = b \times \nabla |B|'$ shows that $\xi = \alpha b + \gamma b\times \nabla |B|'$ for some $\alpha$ and $\gamma$.  Next, the choice $\delta x = \nabla |B|' \times (b \times \nabla |B|')$ yields
\begin{equation}
w = -\tilde{\mu}^{-1/2} \frac{\beta(\gamma b\times \nabla |B|', \nabla |B|' \times (b \times \nabla |B|')) }{ b\cdot (\nabla |B|' \times (b\times \nabla |B|'))} = -\tilde{\mu}^{-1/2}\gamma |B| |B|'',
\label{eq:w}
\end{equation}
after some cancellation.

We take the locally linear foliation given by the above symplectically orthogonal planes to $\Sigma^-\times\{u=0\}$.  
So the leaf associated to a point $x \in \Sigma^-$ is parametrised by displacement $\alpha$ from $\Sigma^-$ along $b$, $\gamma b \times \nabla|B|'$ tangent to $\Sigma^-$ and scaled parallel velocity $ w = -\tilde{\mu}^{-1/2} \gamma |B||B|''$.

The Hamiltonian $H=\tfrac12 w^2 +|B|$ constrained to this leaf is $$H = \tfrac{\gamma^2}{2\tilde{\mu}} |B|^2 |B|''^2 + |B(x+\alpha b + \gamma b\times \nabla |B|')|.$$ Now $|B|'=0$ on $\Sigma^-$, so neglecting the effect of curvature of $\Sigma^-$, critical points with respect to displacement $\alpha$ along $b$ have $\alpha=0$ (it would be good to  estimate the error).  Criticality with respect to $\gamma$ is given to leading order by $$\frac{\gamma}{\tilde{\mu}}|B|^2|B|''^2 +(b\times \nabla |B|') \cdot \nabla |B|=0.$$  There's also a term proportional to $\gamma$ from the second derivative of $|B|$ but the $w$ term dominates it by the factor $1/{\tilde{\mu}}$. So we end up with
$$\gamma = -\tilde{\mu} \frac{b\times \nabla |B|' \cdot \nabla |B|}{|B|^2|B|''^2}.$$
In particular, using (\ref{eq:w}),
$$w = \sqrt{\tilde{\mu}} \frac{b\times \nabla |B|' \cdot \nabla |B|}{|B||B|''}.$$

\section{Systems with perfect separatrices}
\label{app:sep}
There exist Hamiltonian systems that have one or more perfect separatrices but are not integrable.

Firstly, one can make area-preserving twist maps with this property.  The idea is to use the construction by de la Llave in the appendix to \cite{Ma}.  Given a (lift of a) degree-one homeomorphism $g$ of the circle $\R/\Z$, define $$h(x) = g(x)+g^{-1}(x)-2x,$$ and define map $T$ for $(x,y)$ on the cylinder $\R/\Z \times \R$  by
$$ y' = y + h(x),\quad x'=x+y'$$
It is an area-preserving twist map and the circles $y=x-g(x)$ and $y=x-g^{-1}(x)$ are invariant, with dynamics $x'=g^{-1}(x)$ and $x'=g(x)$ respectively.  So we choose $g$ to be a circle homeomorphism with two fixed points, e.g.~$g(x) = x + k\sin 2\pi x$ with $0<k<\tfrac{1}{2\pi}$.  Then we obtain a map $T$ with two period-one islands with perfect separatrices.  But in general it is non-integrable, as illustrated in Figure~\ref{fig:apmap} (we presume that one could prove this if desired).
\begin{figure}[htbp] 
    \centering
    \includegraphics[width=3in]{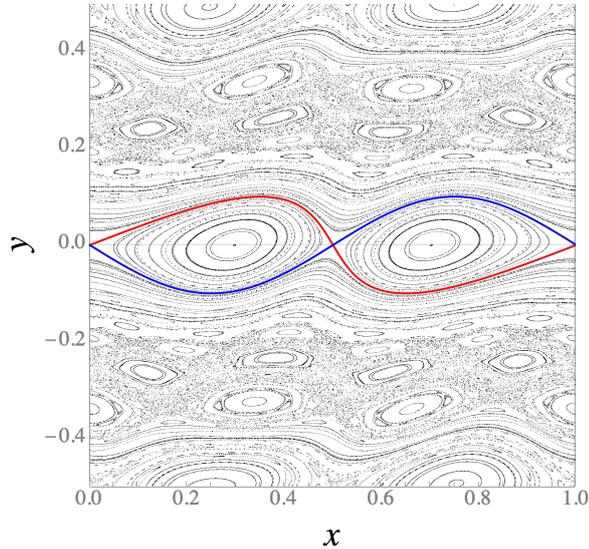}
    \caption{Some orbits of the map $T$ for $k=0.1$ with the $y = x - g^{-1}(x)$ invariant circle shown as the red and the $y = x - g(x)$ as the blue curve.}
    \label{fig:apmap}
\end{figure}
One could modify the choice of $g$ to make an example with a single period-one island but it would require careful matching of derivatives of $g$ to the left and right of the fixed point if one wants $T$ to be smooth.

Similarly, one can make continuous-time Hamiltonian systems with a perfect separatrix without imposing integrability.  For example, given a function $S:\R^2\to \R$ with constants $A,B$ such that $S(x+m,y+n) = S(x,y) + mA+nB$, and the vector field $\dot{q}=\nabla S(q)$ induced on $\R^2/\Z^2$, then $H=\tfrac12|p|^2-\tfrac12|\nabla S(q)|^2$ has invariant graph $p=\nabla S(q)$ on which $\dot{q}=\nabla S(q)$.  So choose $S$ with a periodic orbit  repelling on one side, attracting on the other.  What is not clear to us is how to make examples with perfect separatrices for all energies simultaneously.




\section{Persistence of $\Sigma^{0}$ and $\Sigma^{+}$}
\label{app:pers+}

In this appendix, we address the question of whether also $\Sigma^0$ and $\Sigma^+$ have continuations to invariant submanifolds for small $\mu$.

We begin with $\Sigma^0$.  
If a 2DoF Hamiltonian system has a 2D manifold $N^0$ consisting of elementary saddle-centre periodic orbits (``elementary'' means that some generic conditions are satisfied) then firstly it is part of a 3D manifold $N$ consisting of periodic orbits, which decomposes into $N^-, N^0$ and $N^+$, with the orbits on $N^-$ being hyperbolic and those on $N^+$ being elliptic.
Secondly, all nearby Hamiltonian systems have a nearby such $N$ locally.
These results follow from  \cite{Me}, which is formulated in the context of area-preserving maps.

This is a useful result once we have $\mu>0$, but unfortunately it does not apply to $\mu=0$ because firstly the points of $\Sigma^0\times\{0\}$ are all equilibria, not periodic orbits.  Secondly, the Poisson bracket is degenerate at $\mu=0$ so we are not starting from a genuine Hamiltonian system.

We suspect that under the conditions that $\Sigma^0$ be a generic curve and $|B|$ be constant along its components then at least $\Sigma^{-0}\times \{0\}$ 
persists to an invariant $N^{-0}$.  The $|B|$ constant condition is necessary for the reduced dynamics for $\mu>0$ to have $\Sigma^0$ invariant.

One might ask why we do not conjecture that the whole of $\Sigma\times\{0\}$ persists to an invariant $N$.  The answer is that generically we expect resonances to break $\Sigma^+\times \{0\}$.  We give some explanation.

$\Sigma^+ \times \{0\}$ is a slow manifold but normally elliptic instead of hyperbolic.  In general, the best one can deduce for a normally elliptic slow manifold is that the true system has a sequence of submanifolds $N^+_m$ that are invariant to $n^{th}$ order, but in general the sequence does not converge \cite{M04}.  In this low-dimensional case, however, the reduced dynamics on $\Sigma^+$ consists of periodic orbits that are non-degenerate except at integer resonances (in this context, integer resonances means the linearised bounce frequency is an integer multiple of the precession frequency).  If we cut out a neighbourhood of these integer resonances, there is a true invariant submanifold $N^+$ nearby consisting of periodic orbits of the guiding-centre dynamics, which are elliptic except near half-integer resonances (where they generically turn inversion hyperbolic).
This procedure to construct $N^+$ fails at integer resonances (including at $\Sigma^0$ where the elliptic frequency goes to zero).




Integer resonances all correspond to saddle-centre periodic orbits but they are unlikely to be elementary:~that would require in particular that the field strength happen to have a turning point at them.  Thus we expect $\Sigma^+$ to break at integer resonances, if there is something for them to resonate with. For axisymmetric fields the resonances are not excited by turning on $\mu>0$, but for general non-axisymmetric ones they are excited. 
The (angular) bounce frequency is $\omega_b = \sqrt{\tfrac{\mu}{m}|B|''}$ and the (angular) precession frequency is $\omega_p = 2\pi \tfrac{\mu}{e}\frac{\partial h}{\partial\Phi}$, so the condition for integer resonance $n$ is 
$$\frac{\partial \Phi}{\partial h} \sqrt{|B|''} = 2\pi n \sqrt{\tilde{\mu}},$$
where $\tilde{\mu} = \tfrac{m}{e^2}\mu$.
Unfortunately, for typical fields with the derivative of the lefthand side non-zero this means that there is a large set of resonances for $\tilde{\mu}$ small.  Nonetheless, if the lefthand side is non-zero they have large $n$ and for smooth enough fields the resonances can be expected to be very weak.  See \cite{M04} for an example where they are exponentially weak for large $n$.  Also, adiabatic invariance of $L$ typically makes the motion near $\Sigma^+$ bounded.

We close this appendix by illustrating the problem of integer resonances for $\Sigma^+$ by the tokamak example of~\ref{sec:tok}.  Then $\Phi = 2\pi\psi$ plus a constant, so 
$$\frac{1}{2\pi}\frac{\partial \Phi}{\partial h} \sqrt{|B|''} = -\frac{r^{3/2}R(C^2+r^2)^{1/4}}{C^2-r}$$ on $\Sigma^+$.  So we get resonance $-n$ at $r\approx C\tilde{\mu}^{1/3} n^{2/3}$.  As $\mu \to 0$ each $-n$ resonance tends along $\Sigma^+$ to the magnetic axis, so Figure~\ref{fig:N4tok} would generically show a sequence of breaks on breaking axisymmetry.




\end{document}